\newcommand*{\base}{.}
\definecolor{Blue}{RGB}{5, 55, 77}
\definecolor{Red}{RGB}{20,50,10}
\newcommand{\setbuilder}[2]{\left\{#1~\middle|~#2\right\}}
\newcommand{\set}[1]{\left\{#1\right\}}
\newcommand{\emid}{\,|\,}
\newcommand{\E}{\mathop{{}\mathbb{E}}}
\newcommand{\dist}{\operatorname{dist}}
\newcommand{\supp}{\operatorname{supp}}
\DeclareMathOperator*{\I}{\mathbf{I}}
\DeclareMathOperator{\D}{\mathbf{D}}
\newcommand{\muti}[2]{\I{\mkern-3mu}\left(#1:#2\right)}
\newcommand{\kldiv}[2]{\D{\mkern-5mu}\left(#1\,\middle\|\,#2\right)}
\newcommand{\kldivs}[2]{\D(#1\,\|\,#2)}
\newcommand{\ryent}[2]{\mathrm{H}_{#1}\left(#2\right)}
\DeclareMathOperator{\rank}{rank}
\newcommand{\lone}[1]{\left\|#1\right\|_1}
\newcommand{\lnorm}[2]{\left\|#1\right\|_{#2}}
\newcommand{\iprod}[2]{\left\langle#1,#2\right\rangle}
\newcommand{\transpose}{\intercal}
\newcommand{\field}{\mathbb F}
\newcommand{\reals}{\mathbb R}
\newcommand{\realspos}{\mathbb R_+\!}
\newcommand{\binary}{\{0,1\}}
\newcommand{\cube}{\{0,1\}^n}
\newcommand \breakDOI[1]
\begin{document}
\author{\fontsize{11}{11}\selectfont Mert Sağlam\\%
\fontsize{10}{10}\selectfont saglam@uw.edu}
\title{\bf\fontsize{15}{19}\selectfont
Near log-convexity of measured heat in
(discrete) time and consequences}
\date{}
\maketitle
\vspace{-25px}

\begin{abstract}
Let $u,v \in \reals^\Omega_+$ be positive
unit vectors and
$S\in\reals^{\Omega\times\Omega}_+$ be a
symmetric substochastic matrix.
For an integer $t\ge 0$, let 
$m_t = \smash{\iprod{v}{S^tu}}$, which
we view as the
heat measured by $v$ after an initial heat
configuration $u$ is let to diffuse for $t$
time steps according to $S$. 
Since $S$ is entropy improving,
one may intuit that $m_t$ should not change too
rapidly over time. We give the following formalizations
of this intuition.

We prove that
$m_{t+2} \ge m_t^{1+2/t}\!,$ an inequality
studied earlier by Blakley and Dixon 
(also Erdős and Simonovits) for $u=v$
and shown true under the restriction $m_t\ge e^{-4t}$.
Moreover we prove that for any $\eps>0$, 
a stronger inequality 
$m_{t+2} \ge t^{1-\eps}\cdot \smash{m_t^{1+2/t}}$
holds unless
$m_{t+2}m_{t-2}\ge \delta m_t^2$
for some $\delta$ that depends on $\eps$ only.
Phrased differently, $\forall \eps> 0, 
\exists \delta> 0$ such that $\forall S,u,v$
\begin{equation*}
\frac{m_{t+2}}{m_{t}^{1+2/t}}\ge
\min\set{t^{1-\eps}, \delta\frac{m_t^{1-2/t}}{m_{t-2}}},
\quad \forall t \ge 2,
\end{equation*}
which can be viewed as a truncated 
log-convexity statement. 

Using this inequality, we answer two related
open questions in complexity theory: Any
property tester for $k$-linearity requires
$\Omega(k\log k)$ queries and the randomized
communication complexity of the $k$-Hamming
distance problem is $\Omega(k\log k)$.
Further we show that any randomized parity
decision tree computing $k$-Hamming weight has size
$\exp\nparen{\Omega(k\log k)}$.
\end{abstract}

\section{Introduction}
\label{sec:introduction}

Suppose that some initial heat configuration 
$u\colon\Omega\to\realspos$ is given over a finite 
space $\Omega$ and the configuration evolves 
according to the map $w\mapsto Sw$ in each time step $t=0,1,\ldots$, for some symmetric stochastic matrix 
$S\colon \Omega\times\Omega\to\realspos\,$.
Assume that we are interested in the amount of heat contained in a certain region $R\subseteq \Omega$ 
and how this quantity changes over time. In notation, 
assuming $\lnorm{u}{2}=1$ for normalization purposes 
and $v(x) \defeq \indicate{x\in R}/|R|^{1/2}$ for 
$x\in\Omega$, we would like to understand how
\begin{align*}
m_t \defeq \iprod{v}{S^tu}
\end{align*}
changes as a function of $t$. In this paper we 
derive local bounds that $\set{m_t}_{t=0}^\infty$ 
must obey for any $S,u$ and $v$ satisfying the 
symmetry, magnitude and positivity constraints above 
(in fact our bounds work for any countable
$\Omega$, arbitrary non-negative unit vector $v$ 
and symmetric non-negative $S$).
Further, we establish a tight connection between such bounds and the well-studied $k$-Hamming distance problem
\cite{PangG1986, Yao2003, CormodePS2000, BarYossefJKK2004,
GavinskyKW2004, HuangSZZ2006,
BlaisBM2012, BuhrmanGMW2012, BlaisBG2014, AmbainisGSU2015} 
and the $k$-Hamming weight problem
\cite{AdaFH2012, BlaisK2012, BuhrmanGMW2012}
and obtain the first tight bounds for respectively
the communication complexity and parity decision tree 
complexity of them.

Our tight $\Omega(k \log (k/\delta))$ lower bound 
for the $\delta$-error communication complexity of 
the $k$-Hamming distance problem (that applies 
whenever $k^2< \delta n$) answers affirmatively a 
conjecture stated in \cite{BlaisBG2014} 
(Conjecture 1.4).
Prior to our work, the best impossibility results 
for this problem were an $\Omega(k\log^{(r)}k)$ bits 
lower bound ($\smash{\log^{(r)}}$ being the iterated 
logarithm) that applies to any randomized $r$-round 
communication protocol \cite{SaglamT2013}, 
and an $\Omega(k\log (1/\delta))$ lower bound that 
applies to any $\delta$-error randomized protocol for 
$k <\delta n$ \cite{BlaisBG2014}.

Our parity decision tree lower bound shows that any 
$\delta$-error parity decision tree solving the 
$k$-Hamming weight problem has size 
$\exp\Omega\nparen{k\log (k/\delta)}$, 
which directly implies an $\Omega(k\log (k/\delta))$ 
bound on the depth of any such decision tree. 
Previously no nontrivial lower bound was known for 
the parity decision tree size of this problem and 
an $\Omega(k\log (1/\delta))$ bound on the parity 
decision tree depth followed from the communication 
complexity bound of \cite{BlaisBG2014}. Prior to \cite{BlaisBG2014}, the best bound on the parity decision tree depth was $\Omega(k)$, derived in 
\cite{BlaisBM2012} and \cite{BlaisK2012}.

Either by combining our communication complexity lower bound with the reduction technique developed in \cite{BlaisBM2012} or by combining
our parity decision tree lower bound with a 
reduction given in \cite{BhrushundiCK2014}, 
one obtains an 
$\Omega(k\log (k/\delta))$ bound for any (potentially adaptive) property tester for the $\delta$-error probability $k$-linearity testing problem. This establishes the correct bound for this problem which was studied extensively 
\cite{FischerLNRRS2002, Goldreich2010, BlaisK2012, BhrushundiCK2014,BuhrmanGMW2012, BlaisBM2012}
since \cite{FischerLNRRS2002} or earlier.

\subsection{Motivating our bounds on $m_t$}
We would like provide some intuition as to why one should expect
\begin{align}
m_{t+2}    &\ge m_t^{1+2/t}, \text{ and}\label{eq:bd66}\\
m_{t+2}    &\ge m_t^{1+2/t}\cdot \min\set{t^{1-\eps}, \delta\frac{m_t^{1-2/t}}{m_{t-2}}}
\end{align}
to hold for appropriate $\eps,\delta$.
Recalling that $S$ is a symmetric matrix with maximum 
eigenvalue 1, we may write $S=QDQ^\transpose$
for an orthonormal matrix $Q$ having columns $q_x$, $x\in\Omega$
and a diagonal matrix $D$ with entries $\lambda_x \le 1$, 
$x\in \Omega$. Plugging this into 
$m_t=\iprod{v}{S^tu}$, we get
\begin{align}
\label{eq:spectralsum}
m_t = \sum_{x\in\Omega}\lambda_x^t\iprod{u}{q_x}\iprod{v}{q_x}.
\end{align}
For sake of analogy let us drop our assumption that $S, u, v$
are coordinate-wise nonnegative for a moment but instead assume
that each summand in the right hand side of
\autoref{eq:spectralsum}
is nonnegative by some coincidence. In this case we can
consider $\set{m_t}_{t=0}^\infty$ as the moment sequence of a 
random variable supported on $[0,1]$ that takes the
value $|\lambda_x|$ with probability 
$|\iprod{u}{q_x}\iprod{v}{q_x}|$ and the value 0 with
probability
$1-\sum_x|\iprod{u}{q_x}\iprod{v}{q_x}|$ (which is nonnegative
by Cauchy-Schwarz inequality). 
This would imply that $\set{m_t}_{t=0}^\infty$ is 
{\em completely monotone} by Hausdorff's characterization 
\cite{Hausdorff1921} and therefore log-convex 
(e.g., \cite{NiculescuP2005}, Section 2.1, Example 6).

One particular implication of the log-convexity of 
$\set{m_t}_{t=0}^\infty$, that 
$\frac{1}{t}\log m_t + \frac{t-1}{t}\log m_0\ge \log m_1$,
when combined with the fact $0\le m_0\le 1$ (that follows from our
assumption on the terms of \autoref{eq:spectralsum}), leads to
$m_t\ge m_1^t$.
In 1958, Mandel and Hughes showed that if $u=v$, rather
surprisingly, one can trade the assumption that the summands of
\autoref{eq:spectralsum} are nonnegative with the assumption
that $S$ and $u=v$ are coordinate-wise nonnegative and still
obtain the conclusion $m_t\ge m_1^t$:
\begin{theorem}[Mandel and Hughes \cite{MandelH1958}]
\label{thm:blakley-roy}
Let $u$ be a nonnegative unit vector and 
$S$ be a symmetric matrix with nonnegative entries. For an integer%
\footnote{Since $u=v$ here, the summands inside 
\autoref{eq:spectralsum} are nonnegative when $t$ is even  
so this theorem is most interesting for $t$ odd.}
$t\ge1$ we have $\iprod{u}{S^tu}\ge \iprod{u}{Su}^t$.
\end{theorem}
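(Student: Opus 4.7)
The plan is to write $a_t \defeq \iprod{u}{S^tu}$ (so $a_0 = 1$ and every $a_t \ge 0$ since $u$ and $S$ are entrywise nonnegative), split by the parity of $t$, and reduce the odd case to a single key sub-inequality.

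For even $t = 2k$, Cauchy--Schwarz applied to $\iprod{S^{k-1}u}{S^{k+1}u} \le \|S^{k-1}u\|\cdot\|S^{k+1}u\|$ gives $a_{2k}^2 \le a_{2k-2}a_{2k+2}$, so the sequence $\{a_{2k}\}_k$ is log-convex and its ratios $a_{2k}/a_{2k-2}$ are nondecreasing, yielding $a_{2k} \ge a_2^k$. Combined with $a_2 = \|Su\|^2 \ge \iprod{u}{Su}^2 = a_1^2$ (Cauchy--Schwarz with $\|u\| = 1$), this gives $a_{2k} \ge a_1^{2k}$. Positivity of $u, S$ plays no role here.

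For odd $t$, my plan is to reduce to the single key inequality $a_3 \ge a_1 a_2$. Applying this inequality to the unit vector $u' \defeq S^{k-1}u/\|S^{k-1}u\|$ (still nonnegative) and using $\iprod{u'}{S^tu'} = a_{t+2k-2}/a_{2k-2}$ yields $a_{2k+1}a_{2k-2} \ge a_{2k-1}a_{2k}$, or equivalently, the ratios $r_k \defeq a_{2k+1}/a_{2k}$ are nondecreasing. Since $r_0 = a_1$, this forces $a_{2k+1} \ge a_1 \cdot a_{2k} \ge a_1^{2k+1}$ by the even case.

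What remains, and is the hard part, is the key inequality $a_3 \ge a_1 a_2$ --- this is where positivity of $u$ and $S$ becomes essential (the footnote hints at this, since spectral summands for odd exponents can be negative). A symmetrization over $i \leftrightarrow j$ of the spectral decomposition $S = \sum_i \lambda_i q_iq_i^\transpose$, with $\beta_i \defeq \iprod{u}{q_i}$, yields the identity
\[
a_3 - a_1 a_2 \;=\; \tfrac{1}{2}\sum_{i,j}(\lambda_i - \lambda_j)^2(\lambda_i + \lambda_j)\beta_i^2\beta_j^2.
\]
Summands with $\lambda_i + \lambda_j \ge 0$ are nonnegative outright; in particular every pair involving the Perron eigenvalue contributes $\ge 0$, since $|\lambda_k| \le \lambda_{\max}$ for all $k$. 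The main obstacle is the pairs with $\lambda_i + \lambda_j < 0$: the Perron--Frobenius structure combined with $u \ge 0$ should force enough mass onto the nonnegative top eigenvector that positive pairs dominate, but producing a clean universal bound requires care. A possibly cleaner alternative is to expand both $a_3\|u\|^2$ and $a_1 a_2$ as $\Omega^5$-sums of four $u$-weights and three $S$-weights (a length-$3$ path with an isolated vertex, versus an edge plus a length-$2$ path) and match them by an averaging that exploits $S(x,y) = S(y,x)$.
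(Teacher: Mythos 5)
Your even case is correct and needs no positivity: Cauchy--Schwarz gives $a_{2k}^2 \le a_{2k-2}a_{2k+2}$, so $\{a_{2k}\}_k$ is log-convex with $a_0=1$, and $a_2 \ge a_1^2$ finishes. Your reduction of the odd case to the single inequality $a_3 \ge a_1 a_2$ is also correct: applying it to the nonnegative unit vector $u' \defeq S^{k-1}u/\|S^{k-1}u\|$ yields $a_{2k+1}a_{2k-2} \ge a_{2k-1}a_{2k}$, so the ratios $a_{2k+1}/a_{2k}$ are nondecreasing from $a_1/a_0 = a_1$, giving $a_{2k+1}\ge a_1 a_{2k}\ge a_1^{2k+1}$. (If $\|S^{k-1}u\|=0$ then $Su=0$ because $S$ is symmetric, and everything is trivial, so the normalization is legitimate.) The spectral identity $a_3-a_1a_2 = \tfrac12\sum_{i,j}(\lambda_i-\lambda_j)^2(\lambda_i+\lambda_j)\beta_i^2\beta_j^2$ is also verified.

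The gap is that you have not proved $a_3 \ge a_1 a_2$, and this is where essentially all of the difficulty lives; both of your sketches leave it open. Restated, $a_3\ge a_1a_2$ says that the Rayleigh quotient $\iprod{w}{Sw}/\|w\|^2$ does not decrease when $w$ is replaced by $Sw$. This is \emph{false} for general sign vectors (take $w$ in the span of two eigenvectors with $\lambda_i+\lambda_j<0$: the identity gives a strictly negative value), and it is true only under the $u,S\ge 0$ hypotheses. So the inequality is a genuine reformulation of the hard core of the theorem at $t=3$, not a more elementary lemma, and ``Perron--Frobenius should force enough mass onto the top eigenvector'' is an aspiration, not an argument: $u\ge 0$ does not give a pointwise bound on the $\beta_j$ in terms of $\beta_1$ (one can easily have $|\beta_j|>\beta_1$). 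The combinatorial sketch (``match the two $\Omega^5$-sums by an averaging'') is likewise only a plan. For comparison, the paper does not reprove this result; it cites Mandel--Hughes and instead proves the strictly stronger \autoref{thm:blakley-dixon} by an information-theoretic argument (comparing Kullback--Leibler divergences of a conditioned random walk $X$ and a walk $Z$ built from $X$ by inserting a random time reversal), and the present statement is the specialization $u=v$, $t=1$, $k$ odd. Your reduction to $t=3$ is a clean and correct observation, but to complete the proof you still need an independent derivation of $a_3\ge a_1a_2$.
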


A more general implication of the log-convexity of 
$\set{m_t}_{t=0}^\infty$ and that $m_0\le 1$ is that for 
$k\ge t$, $\frac{t}{k}\log m_k + \frac{k-t}{k}\log m_0\ge \log m_t$,
therefore $m_k^t\ge m_t^k$. In 1966,
Blakley and Dixon \cite{BlakleyD1966} investigated 
whether $m_k^t\ge m_t^k$ holds in the case $u=v$ when the nonnegativity
assumption on the summands of \autoref{eq:spectralsum} is
replaced by the coordinate-wise nonnegativity of $S$, $u=v$.
They note that the inequality $m_k^t\ge m_t^k$ 
fails when $k$ and $t$ have different 
parity and otherwise holds true under the restriction 
$m_t\ge e^{-4t}$.
While the following is not explicitly stated as a conjecture 
in \cite{BlakleyD1966}, they write
\begin{quote}
if $t> 1$, [...] we cannot show that the inequality
\autoref{eq:bd66} holds for each nonnegative $|\Omega|$-vector
$u$ if $S$ is nonnegative.
\end{quote}
so with the earlier caveat we attribute the following to Blakley and
Dixon \cite{BlakleyD1966}:

\begin{conjecture}[Blakley and Dixon \cite{BlakleyD1966}]
\label{conj:blakley-dixon}
Let $S\colon\Omega\times\Omega\to\realspos$ be a symmetric matrix
with nonnegative entries and let $u\colon\Omega\to\realspos$ be a
nonnegative unit vector. For positive integers 
$k\ge t$ of the same parity, we have
  \begin{align*}
    \iprod{u}{S^ku}^t\ge \iprod{u}{S^tu}^k.
  \end{align*}
\end{conjecture}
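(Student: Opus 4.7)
\textbf{Reduction to \autoref{eq:bd66}.} The plan is to deduce \autoref{conj:blakley-dixon} from the inequality $m_{t+2}\ge m_t^{1+2/t}$ announced in \autoref{eq:bd66} (specialised to $u=v$). Rewritten as $m_{t+2}^{1/(t+2)}\ge m_t^{1/t}$, it says that $\{m_t^{1/t}\}$ is non-decreasing along each parity class. Iterating along $t,t+2,\ldots,k$ (all of the same parity) gives $m_k^{1/k}\ge m_t^{1/t}$, and raising both sides to the $tk$-th power yields exactly $\iprod{u}{S^ku}^t\ge \iprod{u}{S^tu}^k$. Thus the conjecture is an immediate corollary of \autoref{eq:bd66}, and the task is to establish the latter.

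\textbf{Warm-up: the even case of \autoref{eq:bd66}.} For $t=2s$, symmetry of $S$ gives $m_{2s}=\|S^su\|_2^2$, and Cauchy-Schwarz yields
\[
\|S^{s+1}u\|_2^2 \;=\; \iprod{S^su}{S^{s+2}u} \;\le\; \|S^su\|_2\,\|S^{s+2}u\|_2,
\]
so $m_{2s+2}^2\le m_{2s}m_{2s+4}$: the even subsequence of $\{m_t\}$ is log-convex. Combined with $m_0=\|u\|_2^2=1$, the standard monotonicity of secant slopes from the origin of a convex function vanishing there gives that $s\mapsto m_{2s}^{1/s}$ is non-decreasing, which is exactly \autoref{eq:bd66} for even $t$. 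Note that this case also follows immediately from applying the odd-indexed Mandel-Hughes theorem to the shifted vector $S^{s}u/\|S^su\|_2$, highlighting that the even case is essentially classical.

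\textbf{Odd case and main obstacle.} The substance of \autoref{eq:bd66}, and hence of the conjecture, lies in the odd case $t=2s+1$. The spectral expansion $m_t=\sum_x\lambda_x^t\iprod{u}{q_x}^2$ suffers sign cancellations from eigenvalues $\lambda_x<0$, and correspondingly the unconditional log-convexity $m_{t+2}m_{t-2}\ge m_t^2$ driving the even case fails for odd $t$---this is precisely why the original Blakley-Dixon argument needed the restriction $m_t\ge e^{-4t}$, and why the paper's refined bound is only a \emph{truncated} log-convexity. The route I would pursue, guided by the paper's title, is to exploit that the symmetric non-negative operator $S$ is entropy-improving on non-negative inputs: viewing $m_t$ as measured heat along the discrete-time diffusion $w\mapsto Sw$, one seeks a H\"older- or data-processing-type estimate that controls the odd-indexed $m_{2s+1}$ in terms of the even-indexed $m_{2s}$ and $m_{2s+2}$, with a deficit absorbed by the coordinate-wise positivity of $u$ and $S$. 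The delicate point---and the principal obstacle---is to obtain precisely the exponent $1+2/t$, rather than a weaker $1+c/t$; this is where the new inequality proven in the paper must genuinely enter, and also where the conditional strengthening stated in the abstract naturally appears.
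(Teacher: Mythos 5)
Your reduction of \autoref{conj:blakley-dixon} to the single-step inequality $m_{t+2}\ge m_t^{1+2/t}$ is correct and matches the paper exactly, and your Cauchy--Schwarz argument for the even case is valid. But the paper itself points out (in the Remark after \autoref{cor:propertytest}) that when $u=v$ and $t$ is even, every summand in the spectral expansion $m_t=\sum_x\lambda_x^t\iprod{u}{q_x}^2$ is nonnegative, so the log-convexity of the even subsequence is immediate and that case is not where the content lies. What you have established is precisely the trivial half.

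For the odd case $t=2s+1$ — the actual substance of Blakley--Dixon and the reason they needed the restriction $m_t\ge e^{-4t}$ — you do not give a proof. Your last paragraph is an accurate diagnosis of why the problem is hard (sign cancellations from negative eigenvalues, failure of unconditional log-convexity) and a correct guess at the flavor of argument (entropy/data-processing), but it stops at ``the route I would pursue'' without supplying the key construction. The paper's proof of \autoref{thm:blakley-dixon} in \autoref{sec:monotonicity} does exactly the missing work: it builds explicit reference random walks $F^t,B^t$ on an augmented state space, conditions $F^t$ on returning to the origin to get a walk $X$ with $\kldiv{X}{F^t}=-\log S^t(\mu,\nu)$, splices in a time reversal at a uniformly random step $J\in[t]$ to produce a length-$(t+2)$ walk $Z$ returning to the origin, and then combines the KL chain rule, \autoref{lem:klcond}, and the R\'enyi-entropy bound of \autoref{lem:negterms} to obtain $-\log S^{t+2}(\mu,\nu)\le \frac{t+2}{t}\kldiv{X}{F^t} + \frac{1}{t}\bigl(\log\lnorm{\mu}{2}^2+\log\lnorm{\nu}{2}^2\bigr)$, which rearranges to \autoref{eq:bd66}. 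None of this is in your proposal, so the argument has a genuine gap at the only nontrivial step; the conjecture remains unproven as you have presented it.
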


In \autoref{sec:monotonicity} we prove the following theorem which
shows that a generalization of \autoref{conj:blakley-dixon}
holds true.
\begin{theorem}
\label{thm:blakley-dixon}
Let  $S\colon\Omega\times\Omega\to\realspos$ be a symmetric matrix
with nonnegative entries and 
$u,v\colon\Omega\to\realspos$ be
nonnegative unit vectors. For positive integers 
$k\ge t$ of the same
parity, we have
  \begin{align*}
    \iprod{v}{S^ku}^t\ge \iprod{v}{S^tu}^k.
  \end{align*}
\end{theorem}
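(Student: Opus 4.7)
The plan is to derive \autoref{thm:blakley-dixon} by iterating the main inequality $m_{t+2} \ge m_t^{1+2/t}$ of \autoref{eq:bd66}, after a trivial scaling reduction. The whole argument amounts to recognizing the main inequality as a one-step monotonicity statement and telescoping it over a fixed parity class.

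First I would observe that the desired inequality $\iprod{v}{S^ku}^t \ge \iprod{v}{S^tu}^k$ is invariant under the rescaling $S\mapsto cS$ for any $c>0$, since both sides get multiplied by $c^{kt}$. Hence I may assume without loss of generality that the spectral radius of $S$ is at most one; combined with symmetry and $\lnorm{u}{2}=\lnorm{v}{2}=1$, Cauchy--Schwarz then gives $m_j \in [0,1]$ for every $j\ge 0$. If $m_t=0$, the conclusion reads $m_k^t\ge 0$ and is immediate, so I may further assume $m_t>0$; a straightforward induction on $m_{j+2}\ge m_j^{1+2/j}$ propagates positivity to every $m_j$ of the same parity as $t$ with $j\ge t$, so that logarithms are well defined throughout the chain from $t$ to $k$.

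Next I would rewrite \autoref{eq:bd66} in the equivalent monotone form $(\log m_{j+2})/(j+2)\ge (\log m_j)/j$, which is valid because both logarithms are nonpositive after the above normalization. This says that the sequence $j\mapsto (\log m_j)/j$ is nondecreasing along each parity class. Iterating from $j=t$ up to $j=k$ in steps of two---which is permitted precisely because $t$ and $k$ share parity---yields $(\log m_k)/k\ge (\log m_t)/t$; multiplying through by $tk>0$ and exponentiating recovers $m_k^t\ge m_t^k$, as claimed.

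All the work has therefore been offloaded onto establishing the main inequality $m_{t+2}\ge m_t^{1+2/t}$ itself, which is the real content of the paper and will be the main obstacle. Once that inequality is in hand, \autoref{thm:blakley-dixon} follows essentially by bookkeeping; there is no additional obstruction specific to this theorem beyond the parity restriction, which enters only to ensure that the telescoping stays within a single parity class where monotonicity of $(\log m_j)/j$ has been established.
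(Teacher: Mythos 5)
Your telescoping reduction is correct as far as it goes: rescaling $S$ to be substochastic is legitimate because both sides of the target inequality are $kt$-homogeneous in $S$, the positivity and boundedness bookkeeping is fine, and once $m_{j+2}\ge m_j^{1+2/j}$ is available for all $j$ of the relevant parity, the chain $\frac{\log m_k}{k}\ge\cdots\ge\frac{\log m_t}{t}$ indeed recovers $m_k^t\ge m_t^k$. This is precisely the observation the paper itself makes in passing in the introduction, where it notes that \autoref{eq:bd66} is equivalent to \autoref{thm:blakley-dixon} via exactly this iteration.

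The gap is that, as written, you have not proved anything. \autoref{eq:bd66} is not an established lemma you may invoke; it is the $k=t+2$ special case of \autoref{thm:blakley-dixon} itself, displayed in the introduction only as a target. Your argument therefore reduces the theorem to one of its own instances and stops, which you acknowledge by saying all the work has been offloaded. The entire content of the paper's proof lives in that base case: one normalizes to $\mu=u/\lone{u}$, $\nu=v/\lone{v}$, sets up reference random walks $F^t$ and $B^t$ on $\Omega\cup\set{r}$, lets $X$ be $F^t$ conditioned to return to $r$, builds the randomly time-reversed walk $Z$ via a uniform pivot $J\in[t]$, and then applies the chain rule for Kullback--Leibler divergence together with \autoref{lem:klcond}, \autoref{lem:xvsf}, \autoref{lem:zdiv}, and \autoref{lem:negterms} to obtain
\begin{align*}
-\log S^{t+2}(\mu,\nu)\le -\tfrac{t+2}{t}\log S^t(\mu,\nu)+\tfrac{1}{t}\nparen{\log\lnorm{\mu}{2}^2+\log\lnorm{\nu}{2}^2},
\end{align*}
which rearranges to $m_{t+2}\ge m_t^{1+2/t}$. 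None of that machinery appears in your proposal, so what you have is the easy bookkeeping wrapped around an unproved assertion rather than a proof of the theorem.
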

It goes without saying that \autoref{eq:bd66} is 
equivalent to \autoref{thm:blakley-dixon} as we can rearrange 
\autoref{eq:bd66} to $\smash{m_{t+2}^{1/(t+2)}\ge m_t^{1/t}}$ and apply
it iteratively to obtain 
$\smash{m_k^{1/k}\ge\cdots\ge m_{t+2}^{1/(t+2)}\ge m_t^{1/t}}$ whenever 
$k\ge t$ and $k,t$ have the same parity. Moreover, while defining 
\autoref{eq:bd66} we assumed $S$ to be substochastic only to illustrate our
interpretation of the inequality: indeed any nonnegative $S$ can be scaled 
to be substochastic as both sides of \autoref{eq:bd66} are 
$(t+2)$-homogeneous in $S$.

In \autoref{thm:blakley-roy} and \autoref{thm:blakley-dixon} we observed that increasingly more general implications of the log-convexity of 
$\set{m_t}_{t=0}^\infty$ can be derived by only assuming the coordinate-wise 
nonnegativity of $S,u$ and $v$. One may naturally wonder if the coordinate-wise 
nonnegativity of $S,u$ and $v$ implies
the log-convexity of 
$\set{m_t}_{t=0}^\infty$ in its entirety. Unfortunately the following
example shows that this is far from the truth.
\begin{figure}[H]
\centering
\begin {tikzpicture}[]
\tikzstyle{state}=[circle, draw, inner sep=0pt, minimum size=27pt]
\node[state] (A){$0$};
\node[state] (B) [right =of A] {$1$};
\node[state] (C) [right =of B] {$2$};
\node        (E) [right =of C] {$\cdots$};
\node[state] (J) [right =of E] {$t-1$};
\node[state] (K) [right =of J] {$t$};
\path (A) edge node[below] {$\eps$} (B);
\path (B) edge node[below] {$\eps$} (C);
\path (C) edge node[below] {$\eps$} (E);
\path (E) edge node[below] {$\eps$} (J);
\path (J) edge node[below] {$\eps$} (K);
\end{tikzpicture}
\caption{$\Omega=\set{0,1,\ldots, t}$, $S(i,i+1)=S(i+1,i)=\eps$ for $i=0,\ldots,t-1$.}
\label{fig:ex1}
\end{figure}
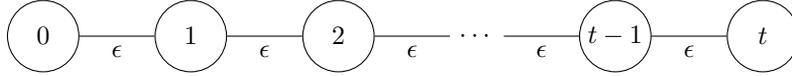
Consider the transition matrix $S$ on $\Omega=\set{0,1,\ldots,t}$ such that 
$S(i,i+1)=S(i+1,i)=\eps$ for $i=0,1,\ldots,t-1$ and $S(i,j)=0$ otherwise.
Let $u$ and $v$ be the point masses respectively on states $0$ and $t$; 
namely $u=[1,0,\ldots,0]^\transpose$ and $v=[0,0,\ldots,1]^\transpose$.
We have $m_{t-2}=0$, $m_t=\eps^t$ and $m_{t+2}=t\eps^{t+2}$. Therefore
$m_{t-2}m_{t+2} = 0 \not\ge \eps^{2t} = m_t^2$.
In this example the log-convexity breaks (in the strongest possible way) 
because the states $0$ and $t$ are separated by $t$ hops according to $S$ and the 
point mass at state $0$ cannot reach state $t$ before the $t$th time step. 

Our next theorem shows that such reachability issues are essentially 
the only way the log-convexity property can fail to hold:
\begin{theorem}
\label{thm:main}
For every $\eps>0$ there is a $\delta>0$ such that for any symmetric matrix 
$S\colon\Omega\times\Omega\to\realspos$ and unit vectors 
$u,v\colon\Omega\to\realspos$ with nonnegative entires, defining $m_t$ as before, we have
\begin{align}
\frac{m_{t+2}}{m_{t}^{1+2/t}}\ge
\min\set{t^{1-\eps}, \delta\frac{m_t^{1-2/t}}
{m_{t-2}}},\quad\quad\forall t\ge 2.
\label{eq:insidemain}
\end{align}
\end{theorem}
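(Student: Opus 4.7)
The plan is to prove the theorem by contrapositive: fix $t\ge 2$ and assume that the Blakley-Dixon improvement of \autoref{thm:blakley-dixon} is badly saturated, namely $m_{t+2}<t^{1-\eps}m_t^{1+2/t}$, and deduce from this a log-convexity-type bound $m_{t+2}m_{t-2}\ge \delta m_t^2$ for a suitable $\delta=\delta(\eps)>0$. The underlying intuition is a \emph{rigidity principle} for \autoref{thm:blakley-dixon}: near-saturation of $m_{t+2}\ge m_t^{1+2/t}$ should force the spectral mass of $S$ that is visible from $u$ and $v$ (weighted by the coefficients $\iprod{u}{q_x}\iprod{v}{q_x}$ of \autoref{eq:spectralsum}) to concentrate in a narrow window of eigenvalues, which in turn forces the successive ratios $m_{t+2}/m_t$ and $m_t/m_{t-2}$ to match up to a multiplicative constant.

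I would first recast the problem using the auxiliary diagonal sequences $a_s=\iprod{u}{S^s u}$ and $b_s=\iprod{v}{S^s v}$. By Cauchy--Schwarz, $m_t^2\le a_t b_t$, and the spectral decomposition \autoref{eq:spectralsum} has nonnegative coefficients $\iprod{u}{q_x}^2$ and $\iprod{v}{q_x}^2$, so $\set{a_{2s}}_s$ and $\set{b_{2s}}_s$ are genuine moment sequences of nonnegative measures on $[0,1]$ and are fully log-convex. Next, I would apply \autoref{thm:blakley-dixon} to the shifted unit vectors $\tilde u=Su/\lnorm{Su}{2}$, $\tilde v=Sv/\lnorm{Sv}{2}$, and to analogous vectors obtained by further applications of $S$, producing a family of Blakley-Dixon inequalities that relate the $m_{t+2s}$ across $s$ to the diagonal norms $\lnorm{S^j u}{2}=a_{2j}^{1/2}$ and $\lnorm{S^j v}{2}=b_{2j}^{1/2}$. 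This gives a self-consistent system of inequalities tying $\set{m_{2s}}_s$ to the two log-convex sequences $\set{a_{2s}}_s$, $\set{b_{2s}}_s$.

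The technical heart of the argument will be to quantify the slack in \autoref{thm:blakley-dixon}. Tracking that slack through the proof of the theorem, I expect a refined bound of the form $m_{t+2}/m_t^{1+2/t}\ge t^{1-\eps}\cdot g(\sigma_t)$, where $\sigma_t$ measures how widely the effective spectral mass is distributed; the $t^{1-\eps}$ factor dominates unless $\sigma_t$ shrinks at rate $O(t^{-(1-\eps)})$. In this concentrated regime, both successive ratios $m_{t+2}/m_t$ and $m_t/m_{t-2}$ must be close to $\lambda^2$ for the same dominant eigenvalue $\lambda$; combining this with the full log-convexity of $\set{a_{2s}}_s$, $\set{b_{2s}}_s$ through the Cauchy--Schwarz estimate $m_t^2\le a_t b_t$ should deliver $m_{t+2}m_{t-2}\ge \delta m_t^2$, closing the dichotomy. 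The main obstacle I anticipate is this rigidity step: converting near-saturation of \autoref{thm:blakley-dixon} into an explicit, quantitative spectral-concentration statement with $\delta$ depending only on $\eps$. I expect this to require a Pinsker-style stability lemma for completely monotone sequences, and this is where the bulk of the effort will lie.
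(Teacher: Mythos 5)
Your proposal takes a genuinely different route from the paper's: you sketch a spectral argument anchored on the auxiliary moment sequences $a_s=\iprod{u}{S^su}$, $b_s=\iprod{v}{S^sv}$, whereas the paper stays entirely within the information-theoretic random-walk machinery it built for \autoref{thm:blakley-dixon}. There, the dichotomy is governed by a single quantity, the mutual information $\muti{J}{Z}$ between the random time-reversal point $J$ and the stitched walk $Z$ of \autoref{sec:zdef}. If $\muti{J}{Z}\ge(1-\eps)\log t$, this is carried through the chain \autoref{eq:kvk2}--\autoref{eq:mutrelax} to give $m_{t+2}\ge t^{1-\eps}m_t^{1+2/t}$. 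If $\muti{J}{Z}<(1-\eps)\log t$, the low mutual information is unpacked (chain rule plus data processing) into a statement that at many time steps $i$ and typical $x\sim X_i$ the forward and backward kernels $\mu_i^x,\nu_i^x$ are close; the paper then explicitly constructs walks $W$ (length $t+2$) and $Y$ (length $t-2$) bridging through a common intermediate state, with $\kldiv{W}{F^{t+2}}+\kldiv{Y}{F^{t-2}}\le-2\log S^t(\mu,\nu)+\log(1/\delta)$, which by \autoref{lem:klcond} forces $m_{t+2}m_{t-2}\ge\delta m_t^2$.

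The step you flag as your main obstacle---turning near-saturation of \autoref{thm:blakley-dixon} into a quantitative spectral-concentration statement---is a genuine gap, and the spectral framing makes it hard to see how to close it. The coefficients $\iprod{u}{q_x}\iprod{v}{q_x}$ of $m_t$ in \autoref{eq:spectralsum} are not sign-definite; only $a_s$ and $b_s$ have nonnegative weights. So ``the spectral mass visible from $u$ and $v$'' is not a measure and can cancel massively; in the paper's own applications $S$ has eigenvalues $\pm1$ with equal multiplicity, the maximally cancellative case. Moreover, Cauchy--Schwarz $m_t^2\le a_tb_t$ points the wrong way for your conclusion: it upper bounds $m_t^2$ but provides no lower bound on $m_{t+2}m_{t-2}$, so the log-convexity of $a_s,b_s$ does not transfer to $m_s$. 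Even granting a workable notion of concentration, the claim that near-saturation forces $m_{t+2}/m_t$ and $m_t/m_{t-2}$ to agree up to a constant is essentially a restatement of the theorem's second alternative and would need its own proof, presumably a stability version of \autoref{thm:blakley-dixon} that the paper does not develop. The paper's argument never examines the spectrum of $S$ at all; it derives the constructions of $W$ and $Y$ directly from the entropic certificate that a backward step in $Z$ is hard to locate, and that is what makes $\delta$ uniform over all $S,u,v$.
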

In other words, \autoref{thm:main} shows that one can recover a
truncated version of the log-convexity of
$\set{m_t}_{t=0}^\infty$ from just the coordinate-wise
nonnegativity assumption of $S,u$ and $v$.
We stress that \autoref{thm:main} is tight 
up to the appearance of $\eps$ and the choice of 
$\delta=\delta(\eps)$. A direct calculation
on \autoref{fig:ex1} for time steps $t, t+2, t+4$ 
shows that \autoref{eq:insidemain} cannot be improved to
\begin{align*}
\frac{m_{t+2}}{m_{t}^{1+2/t}}\ge
\min\set{t^{1-2/t}, 
\nparen{\frac{1+\eta}{2}}\frac{m_t^{1-2/t}}{m_{t-2}}}
\end{align*} for $\eta>0$.

\subsection{Related work on $m_t$}
Almost simultaneously with the work of Mandel and Hughes
\cite{MandelH1958}, Mulholland and Smith also prove
\autoref{thm:blakley-roy} in \cite{MulhollandS1959} and moreover they characterize the equality conditions of the inequality. Independently, in 1965, Blakley and Roy \cite{BlakleyR1965} prove the same inequality and
characterize the equality conditions and 
\cite{London1966} provides an alternative proof to that 
of \cite{MulhollandS1959} in 1966.
We remark that \autoref{thm:blakley-roy} is 
most commonly referred to as the Blakley-Roy bound
or ``Sidorenko's conjecture for paths''.
Note these results show that \autoref{conj:blakley-dixon} 
is true whenever $t$ divides $k$. 
Finally in 2012, Pate shows that $m_t\ge m_1^t$ 
without the restriction $u=v$:

\begin{theorem}[Pate \cite{Pate2012}]
Let $S\colon\Omega\times\Omega\to\realspos$ be a 
symmetric matrix
with nonnegative entries and let 
$u,v\colon\Omega\to\realspos$ be
nonnegative unit vectors.
It holds that
\begin{align*}
\iprod{v}{S^{2t+1}u} \ge \iprod{v}{Su}^{2t+1},
\end{align*}
with equality if and only if 
$\iprod{v}{S^{2t+1}u} =0$ or 
$Su=\lambda v$ and 
$Sv=\lambda u$ 
for some $\lambda\in\realspos$.
\end{theorem}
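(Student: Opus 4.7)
The plan is to reduce Pate's inequality to its diagonal $u=v$ special case, which is already \autoref{thm:blakley-roy}, via a standard bipartite doubling trick, and then to read off the equality conditions from the extremal characterization that Mulholland-Smith proved alongside the Blakley-Roy bound.

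First I would set up the doubled state space $\tilde\Omega = \Omega\sqcup\Omega$ and, on it, the symmetric nonnegative matrix
\[
\tilde S = \begin{pmatrix} 0 & S\\ S & 0\end{pmatrix},
\]
together with the vector $\tilde w = \frac{1}{\sqrt{2}}\begin{pmatrix} u\\ v\end{pmatrix}$, which is coordinate-wise nonnegative and, since $u$ and $v$ are unit vectors, satisfies $\iprod{\tilde w}{\tilde w}=1$. A short induction gives $\tilde S^{2k+1} = \begin{pmatrix} 0 & S^{2k+1}\\ S^{2k+1} & 0\end{pmatrix}$, and using symmetry of $S$ this yields $\iprod{\tilde w}{\tilde S^{2k+1}\tilde w} = \iprod{v}{S^{2k+1}u}$ for every $k\ge 0$; in particular $\iprod{\tilde w}{\tilde S\tilde w} = \iprod{v}{Su}$. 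Applying \autoref{thm:blakley-roy} to $\tilde S$ and $\tilde w$ at exponent $2t+1$ then gives
\[
\iprod{v}{S^{2t+1}u} = \iprod{\tilde w}{\tilde S^{2t+1}\tilde w} \ge \iprod{\tilde w}{\tilde S\tilde w}^{2t+1} = \iprod{v}{Su}^{2t+1},
\]
which is the claimed inequality. (The same inequality also falls out of \autoref{thm:blakley-dixon} applied with exponents $2t+1$ and $1$, but the bipartite doubling is needed below to handle equality.)

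For the equality conditions I would invoke the extremal part of Mulholland-Smith's result \cite{MulhollandS1959}: equality in $\iprod{w}{S^n w} = \iprod{w}{Sw}^n$ for nonnegative symmetric $S$, nonnegative unit $w$, and odd $n$ holds iff either the common value is $0$ or $Sw = \mu w$ for some $\mu\in\realspos$. Applied to $\tilde S$ and $\tilde w$, this produces the dichotomy: either $\iprod{v}{S^{2t+1}u}=0$, or $\tilde S\tilde w = \lambda\tilde w$ for some $\lambda\in\realspos$. Reading off the latter condition blockwise yields precisely $Sv=\lambda u$ and $Su=\lambda v$, matching Pate's stated equality conditions.

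The main obstacle, if one wants to avoid quoting Mulholland-Smith's equality characterization as a black box, is to re-derive the extremal analysis for the Blakley-Roy bound: one would track how equality in the repeated Cauchy-Schwarz (or convexity) steps inside a proof of \autoref{thm:blakley-roy} forces $\tilde w$ to be an eigenvector of $\tilde S$. This is a standard but somewhat delicate induction on the odd exponent, and can be adapted from any of \cite{MulhollandS1959, BlakleyR1965, London1966}.
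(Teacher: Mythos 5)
The paper does not give its own proof of Pate's theorem: it is quoted from \cite{Pate2012} as prior work, so there is no paper argument to compare against. Your bipartite-doubling reduction is nonetheless a clean and correct derivation. The algebra checks out: $\tilde S$ is symmetric nonnegative, $\tilde w$ is a nonnegative unit vector, $\tilde S^{2k+1}$ has the off-diagonal block form, and by symmetry of $S$ one indeed gets $\iprod{\tilde w}{\tilde S^{2k+1}\tilde w}=\iprod{v}{S^{2k+1}u}$, so applying \autoref{thm:blakley-roy} to $(\tilde S,\tilde w)$ with exponent $2t+1$ gives the inequality, and the blockwise reading of $\tilde S\tilde w=\lambda\tilde w$ gives exactly $Su=\lambda v$, $Sv=\lambda u$.

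Two remarks worth making explicit. First, the equality analysis leans entirely on the extremal characterization in \cite{MulhollandS1959} (quoted as: for odd $n\ge 3$, nonnegative symmetric $S$ and nonnegative unit $w$, equality $\iprod{w}{S^nw}=\iprod{w}{Sw}^n$ forces $\iprod{w}{S^nw}=0$ or $Sw=\mu w$). You flag this honestly as a black box, and that is the only non-elementary ingredient in the proof; if one wishes to be self-contained one must indeed re-open the convexity steps in a Blakley--Roy proof as you describe. Second, and as you note in passing, Pate's theorem including its equality conditions is also a special case ($t=1$, $k=2t+1$ odd) of the paper's own \autoref{thm:blakley-dixon}, whose equality characterization is proved from scratch in \autoref{sec:monotonicity} by the information-theoretic random-walk argument. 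So within this paper the self-contained route to the statement exists, but it is via \autoref{thm:blakley-dixon}, not via the reduction you use. Your route has the virtue of deriving the off-diagonal result from the classical diagonal one by a two-line construction, at the cost of importing the Mulholland--Smith extremal analysis; the paper's route gives a single unified argument covering all $k\ge t$ of equal parity but is considerably longer.

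One small caveat to state in a final write-up: the Mulholland--Smith equality characterization you quote must exclude $n=1$ (where equality is trivial and unconditional), so in your application one should require $t\ge 1$, consistent with the nontrivial range of Pate's theorem.
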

This result already shows that 
\autoref{thm:blakley-dixon} is true when $t$ divides $k$ 
but such a bound does not have any implications for our applications in complexity theory. In \cite{ErdosS1982},
Erdős and Simonovits conjecture the following.
\begin{conjecture}
[Erdős and Simonovits \cite{ErdosS1982}, Conjecture 6]
\label{conj:erdos-simonovits}
For a graph $G=(V,E)$, let $w_k(G)$ be the number length $k$ walks in $G$ divided by $|V|$.
For an undirected graph $G$, we have $w_k(G)^t\ge w_t(G)^k$ for $k>t$ of the same parity.
\end{conjecture}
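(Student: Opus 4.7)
The plan is to deduce Conjecture~\ref{conj:erdos-simonovits} directly from \autoref{thm:blakley-dixon} by instantiating the matrix $S$ and the vectors $u,v$ appropriately. First, given an undirected graph $G=(V,E)$, I would take $S$ to be the adjacency matrix $A(G)$, which is symmetric with nonnegative entries, and set $u=v=|V|^{-1/2}\mathbf{1}$ for $\mathbf{1}$ the all-ones vector on $V$. Both $u$ and $v$ are then nonnegative unit vectors (since $\lnorm{\mathbf{1}}{2}^2=|V|$), so the hypotheses of \autoref{thm:blakley-dixon} are fulfilled.

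Next, I would re-express $m_t=\iprod{v}{S^t u}$ in graph-theoretic terms. Unpacking the definition gives
\begin{align*}
m_t \;=\; \frac{1}{|V|}\,\mathbf{1}^\transpose A(G)^t \mathbf{1} \;=\; \frac{1}{|V|}\sum_{x,y\in V}(A(G)^t)_{xy} \;=\; w_t(G),
\end{align*}
since the double sum enumerates all walks of length $t$ in $G$ and $w_t(G)$ is defined to be that count divided by $|V|$.

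Finally, invoking \autoref{thm:blakley-dixon} for positive integers $k\ge t$ of the same parity yields $m_k^t\ge m_t^k$, which after substituting $m_\ell = w_\ell(G)$ is exactly the desired inequality $w_k(G)^t\ge w_t(G)^k$. The principal obstacle is not in this corollary at all, which is routine once \autoref{thm:blakley-dixon} is in hand, but lies entirely upstream in establishing \autoref{thm:blakley-dixon} itself; the Erdős--Simonovits conjecture emerges here simply as the specialization to the case where both $u$ and $v$ are proportional to the uniform vector on $V$ and $S$ is $0/1$-valued.
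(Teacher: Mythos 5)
Your proposal is correct and is exactly the specialization the paper performs: taking $S=A(G)$ and $u=v=\mathbf{1}/\sqrt{|V|}$, observing $m_t=w_t(G)$, and invoking \autoref{thm:blakley-dixon}. The paper states this reduction in one sentence without writing out the calculation, but the content is the same.
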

This conjecture was recalled in a recet book
\cite{Taubig2017} by Täubig as Conjecture~4.1.
Note that \autoref{conj:erdos-simonovits} is a specialization of \autoref{conj:blakley-dixon} to
$S$ having 0-1 entries and  
$u=\mathbf{1}/\sqrt{|V|}$ therefore our \autoref{thm:blakley-dixon}
verifies \autoref{conj:erdos-simonovits} as well.

\subsection{Our results in complexity theory}
\label{sec:results-cc}
Here we list our results in complexity theory; see
\autoref{sec:applications} for the definition
of the models and the problems.
The following theorem 
(which was already known \cite{BlaisBG2014})
is a consequence of \autoref{thm:blakley-dixon} and uses the 
standard corruption technique in 
communication complexity.
\begin{theorem}
\label{thm:klogdelta}
Any two party $\delta$-error randomized protocol
solving the $k$-Hamming distance problem 
over length-$n$ strings communicates 
at least $\Omega(k\log (1/\delta))$ 
bits for $k^2\le \delta n$.
\end{theorem}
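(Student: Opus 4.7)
The plan is to invoke the standard rectangle (corruption) method in communication complexity, with Theorem~\ref{thm:blakley-dixon} supplying the required rectangle-density estimate. By Yao's minimax, it suffices to exhibit a distribution $\mu$ on $\{0,1\}^n \times \{0,1\}^n$ and show that any combinatorial rectangle $A \times B$ carrying substantial $1$-mass must carry a comparable share of $0$-mass, up to a factor that decays exponentially in $k\log(1/\delta)$.

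For the hard distribution, I take the balanced mixture $\mu = \tfrac{1}{2}(\mu_k + \mu_{k+2})$, where $\mu_\ell$ is the law of $(X, Y)$ with $X$ uniform on $\{0,1\}^n$ and $Y$ the endpoint of an $\ell$-step simple random walk on the Boolean hypercube started at $X$. Under the hypothesis $k^2 \le \delta n$, both walks are self-avoiding with probability $1 - O(\ell^2/n) = 1 - O(\delta)$, so $\mu_k$ is concentrated on pairs at Hamming distance $k$ (the $1$-inputs, in the threshold-$k$ version of the problem) and $\mu_{k+2}$ on pairs at distance $k+2$ (the $0$-inputs). Given any rectangle $A \times B$, let $u = \mathbf{1}_A/\sqrt{|A|}$ and $v = \mathbf{1}_B/\sqrt{|B|}$ be normalized indicators (nonnegative unit vectors on $\Omega = \{0,1\}^n$), and let $S$ be the normalized adjacency matrix of the hypercube---a symmetric nonnegative stochastic matrix. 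Writing $m_\ell = \iprod{v}{S^\ell u}$, a direct computation yields the identity
\begin{equation*}
\mu_\ell(A \times B) \;=\; \frac{\sqrt{|A|\,|B|}}{2^n}\, m_\ell.
\end{equation*}

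Theorem~\ref{thm:blakley-dixon}, applied to the same-parity pair of exponents (specializing the theorem's ``$t$'' to $k$ and its ``$k$'' to $k+2$), gives $m_{k+2}^{\,k} \ge m_k^{\,k+2}$, which rearranges via the identity above (using $2^n/\sqrt{|A|\,|B|} \ge 1$) into the rectangle ratio bound
\begin{equation*}
\mu_{k+2}(A \times B) \;\ge\; \mu_k(A \times B)^{1 + 2/k}.
\end{equation*}
Inserting this into the standard corruption reduction one shows that any $1$-labeled rectangle $R$ in the monochromatic partition produced by a $\delta$-error protocol must satisfy $\mu_{k+2}(R) \le O(\delta)\,\mu_k(R)$ (after a Markov-type averaging against the total error budget), which combined with the displayed inequality forces $\mu_k(R)^{2/k} \le O(\delta)$, i.e., $\mu_k(R) \le 2^{-\Omega(k\log(1/\delta))}$. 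Since the $1$-labeled rectangles together carry $\Omega(1)$ of the $\mu_k$-mass, their count---hence $2^{\text{communication}}$---is at least $2^{\Omega(k\log(1/\delta))}$, yielding the claimed lower bound.

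The main technical care lies in the corruption bookkeeping: the $O(\delta)$ leakage between the walk distributions $\mu_k,\mu_{k+2}$ and the true uniform distance-$k$ and distance-$(k+2)$ distributions must be absorbed into the error budget, and low-density rectangles where this leakage competes with the Blakley-Dixon signal must be separated out (for instance by partitioning rectangles into $\mu_k$-density bands and arguing within each). The constraint $k^2 \le \delta n$ is calibrated precisely to keep the walk-revisit leakage $O(k^2/n)$ below the protocol error $\delta$, so that Theorem~\ref{thm:blakley-dixon} can be invoked directly on the natural hypercube walk operator $S$ and its spectral conclusion translated cleanly into the rectangle-density statement driving the corruption bound.
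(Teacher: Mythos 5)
Your proposal is correct and follows essentially the same route as the paper: the hard distributions $\mu_k,\mu_{k+2}$ (uniform $x$, random $\ell$-step hypercube walk for $y$), the identity $\mu_\ell(A\times B)=\frac{\sqrt{|A||B|}}{2^n}\iprod{v}{S^\ell u}$ with $S$ the normalized hypercube adjacency matrix, Theorem~\ref{thm:blakley-dixon} applied with exponents $k$ and $k+2$, and the $k^2\le\delta n$ condition to control the collision leakage $\binom{k}{2}/n$. The only presentational difference is that the paper packages the corruption step as a single separating-hyperplane computation with $H=\mu_k-\mu_{k+2}/(3\delta)$ against the polytope of rank-one $0/1$ matrices, which bounds $\iprod{R}{H}<(3\delta)^{k/2}$ uniformly over all rectangles $R$ and thereby avoids the Markov-averaging and density-banding bookkeeping you sketch at the end.
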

The next is our main result for the communication complexity of the $k$-Hamming distance problem
and is a consequence of \autoref{thm:main}. 
This result cannot be obtained by the standard 
corruption technique and
requires a suitable modification similar to \cite{Sherstov2012}.
\begin{theorem}
\label{thm:klogk}
Any two party $\delta$-error randomized protocol
solving the $k$-Hamming distance problem 
over length-$n$ strings communicates 
at least $\Omega(k\log (k/\delta))$ 
bits for $k^2\le \delta n$.
\end{theorem}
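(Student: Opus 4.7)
I would prove Theorem \ref{thm:klogk} by refining the corruption argument underlying Theorem \ref{thm:klogdelta}: the sole input there is Theorem \ref{thm:blakley-dixon}, and the plan is to replace it by an iterated application of Theorem \ref{thm:main} across Hamming levels $t = 2, 4, \ldots, k$. As usual, fix Alice's input $x$ uniform on $\{0,1\}^n$, let $\mu_t$ denote the distribution on $(x, x \oplus z)$ with $|z|=t$, and use $\mu_k$ as the 1-input distribution and $\mu_{k+2}$ as a hard 0-input distribution (valid for $k$-Hamming distance when $k^2 \le \delta n$, since $\mu_{k+2}$ then places negligible mass on distances $\le k$). A $\delta$-error protocol of cost $c$ yields by averaging a rectangle $R = A \times B$ with $\mu_k(R) \ge 2^{-c-1}$ and $\mu_{k+2}(R) \le 2\delta\, \mu_k(R)$; the bound reduces to showing that every such rectangle satisfies $\mu_k(R)/\mu_{k+2}(R) \le (k/\delta)^{O(k)}$.

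The connection to Theorem \ref{thm:main} is made by taking $u$ and $v$ to be the normalized indicator vectors of $A$ and $B$ on $\Omega = \{0,1\}^n$, and choosing a suitable symmetric nonnegative operator $S$ on the Boolean cube for which $m_t := \iprod{v}{S^t u}$ matches $\mu_t(R)$ up to a controlled $t$-dependent prefactor; the hypothesis $k^2 \le \delta n$ keeps these prefactor ratios benign for $t$ near $k$. Applying Theorem \ref{thm:main} at each $t$ and taking logarithms, each step lower-bounds $\log m_{t+2} - (1+2/t)\log m_t$ by the maximum of a \emph{good} contribution $(1-\eps)\log t$ and a \emph{bad} contribution $\log\delta + (1-2/t)\log m_t - \log m_{t-2}$. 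If the good branch dominates at all indices, the rearranged telescope $(\log m_{t+2})/(t+2) \ge (\log m_t)/t + (1-\eps)(\log t)/(t+2)$ sums to $\Omega((\log k)^2)$, producing a factor of $k^{\Omega(k)}$ in $\mu_k(R)/\mu_{k+2}(R)$ up to boundary terms controlled by the rectangle size; combined with the standard $\log(1/\delta)$ corruption gap, this yields the desired $\Omega(k\log(k/\delta))$ lower bound.

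The main obstacle is indices where the bad branch dominates, since then the good telescope is interrupted. Here I would modify the corruption method in the spirit of \cite{Sherstov2012}: replace the single 0-input distribution $\mu_{k+2}$ by a convex mixture $\nu$ of $\mu_{k+2j}$ over a window of even distances, with window length calibrated by $\eps$ and $\delta$, so that whenever the bad branch dominates at index $t$ the mass it pushes down to level $t-2$ is still absorbed as 0-input mass by $\nu$. Telescoping the bad inequalities $m_{t+2} m_{t-2} \ge \delta m_t^2$ across each block of consecutive bad indices converts their cumulative effect into a single boundary comparison absorbed by $\nu$, so the good-branch contribution is preserved. The delicate points are (i) quantitatively certifying $m_t \approx \mu_t(R)$ for $t$ in the relevant range, again using $k^2 \le \delta n$, and (ii) choosing the window for $\nu$ and the parameter $\eps$ so that $\nu$ remains supported on distances $> k$ (keeping it a valid 0-input distribution for $k$-Hamming distance) while the bad-branch losses are dominated by the good-branch gains, ultimately yielding $\Omega(k\log(k/\delta))$.
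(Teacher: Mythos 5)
Your high-level framing — corruption + \autoref{thm:main} in place of \autoref{thm:blakley-dixon} — is right, but there are two substantial gaps: the corruption geometry is wrong, and the telescoping is both unnecessary and doesn't fit the structure of \autoref{thm:main}.

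First, the corruption geometry. You keep all the negative mass in the hyperplane on distances $>k$ (a mixture $\nu$ of $\mu_{k+2j}$). The paper points out explicitly that this cannot work: if the protocol is only required to output 1 on $\|x-y\|\le k$ and 0 on $\|x-y\|>k$, then there are vertices of the rectangle polytope for which the $\Omega(k\log(1/\delta))$ bound is already tight, so no refinement of the analysis on your hyperplane can do better. The essential new idea in the paper is to put negative mass on $\mu_{k-2}$ as well, i.e.\ set $H=\mu_k-(\mu_{k-2}+\mu_{k+2})/(6\delta)$, and to justify this by a padding reduction: a $\delta$-error protocol for $\Ham^{n+2}_k$ run on $(00a,00b)$ and $(00a,11b)$ distinguishes $\|a-b\|=k$ from $\|a-b\|\in\{k-2,k+2\}$. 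Without that reduction, the level $t-2=k-2$ is a YES-instance of the threshold predicate, so the "mass pushed down to level $t-2$" in your bad branch is not absorbed by any 0-distribution; your $\nu$ lives on levels $\ge k+2$ and never touches $k-2$. This is precisely where your outline fails.

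Second, the telescoping. The paper does not iterate \autoref{thm:main} across levels $t=2,4,\ldots,k$; it applies it exactly once at $t=k$. With the two-sided hyperplane $H$ the dichotomy is clean: either the bad branch $\iprod{R}{\mu_{k+2}}\iprod{R}{\mu_{k-2}}\ge\alpha_2\iprod{R}{\mu_k}^2$ holds, in which case AM-GM gives $\frac{1}{2}(\iprod{R}{\mu_{k-2}}+\iprod{R}{\mu_{k+2}})\ge\sqrt{\alpha_2}\iprod{R}{\mu_k}$ and hence $\iprod{R}{H}<0$ once $\delta$ is a small constant; or the good branch $\iprod{R}{\mu_{k+2}}\ge k^{\alpha_1}\iprod{R}{\mu_k}^{1+2/k}$ holds, and then either $\iprod{R}{H}<0$ or $\iprod{R}{\mu_k}\le(6\delta/k^{\alpha_1})^{k/2}$. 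That single application at $t=k$ already yields the $\Omega(k\log(k/\delta))$ corruption bound. Your telescope over all even $t\le k$ is both not needed and hard to make work, since each bad branch you encounter produces a relation involving $m_{t-2}$ at a YES-level, and the mixture-window fix you propose cannot reach those levels. If you restructure around the two-sided hyperplane and the padding reduction, the rest of your argument collapses to essentially the paper's proof.
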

\begin{theorem}
\label{thm:paritysize}
Any $\delta$-error parity decision tree 
deciding the $k$-Hamming weight predicate 
over length-$n$ strings
has size $\exp \Omega\nparen{k\log (k/\delta)}$
for $k^2< \delta n$.
\end{theorem}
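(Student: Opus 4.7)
The plan is to establish \autoref{thm:paritysize} via a corruption-style argument on affine subspaces, adapting the rectangle-based approach behind \autoref{thm:klogk} to the parity decision tree model and using \autoref{thm:main} as the analytic input. A size-$s$ parity decision tree partitions $\{0,1\}^n$ into $s$ affine subspaces $A_1,\ldots,A_s$ with $\{0,1\}$ labels. I would work with the hard distributions $\mu_1$ uniform on the weight-$k$ slice $S_k$ and $\mu_0$ uniform on a same-parity neighbouring slice, say $S_{k-2}$. A $\delta$-error tree then has an accepting set $B\subseteq [s]$ with $\sum_{i\in B}\mu_1(A_i)\ge 1-O(\delta)$ and $\sum_{i\in B}\mu_0(A_i)\le O(\delta)$. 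The goal reduces to a small-set corruption lemma of the form ``for every affine subspace $A$, either $\mu_1(A)\le\tau$ or $\mu_1(A)\le c\,\mu_0(A)$'', which when combined with these two sums forces $s\ge (1-O(\delta)-c\delta)/\tau$; it therefore suffices to prove the lemma with $\tau=\exp(-\Omega(k\log(k/\delta)))$ and $c=O(1)$.

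For the corruption lemma I would cast slice densities of $A$ as heat-kernel inner products. Take $\Omega=\{0,1\}^n$, let $S$ be the symmetric stochastic one-bit-flip walk on $\Omega$ (so $S(x,y)=1/n$ when $|x\oplus y|=1$ and $0$ otherwise), and let $u=\mathbf{1}_A/\sqrt{|A|}$ and $v=\mathbf{1}_{\{0^n\}}$. Because $S$ is vertex-transitive, $S^t(0^n,x)$ depends only on $|x|$ -- call it $p_t(|x|)$ -- so $m_t=\iprod{v}{S^tu}=|A|^{-1/2}\sum_j|A\cap S_j|\,p_t(j)$, a Krawtchouk-weighted positive linear combination of the slice densities of $A$ that satisfies all the hypotheses of \autoref{thm:main}. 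The ratio $\mu_1(A)/\mu_0(A)=\bigl(|A\cap S_k|/\binom{n}{k}\bigr)\big/\bigl(|A\cap S_{k-2}|/\binom{n}{k-2}\bigr)$ is then controlled by the ratios of a short contiguous block $m_{k-2},m_k,m_{k+2}$ after inverting the (explicit, tridiagonal) Krawtchouk weights, which is well-conditioned precisely in the regime $k^2<\delta n$.

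I would then iterate \eqref{eq:insidemain} along the even $t\in\{2,4,\ldots,k\}$. At each step the right-hand side offers two alternatives: a multiplicative gain of $t^{1-\eps}$, or the local log-convexity relation $m_{t+2}m_{t-2}\ge\delta_0 m_t^2$ for a constant $\delta_0=\delta_0(\eps)$ from \autoref{thm:main} (not to be confused with the error parameter $\delta$). Telescoping over the $k/2$ steps and using $m_0\le 1$, together with \autoref{thm:blakley-dixon} at the endpoint to tame the edge term, produces an upper bound on $m_k/m_{k-2}$ that delivers the small-set corruption lemma with the parameters announced above, and hence $s\ge\exp\Omega(k\log(k/\delta))$.

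The principal obstacle is handling the two branches of the $\min$ in \eqref{eq:insidemain} cleanly across $\Theta(k)$ iterations so that the $t^{1-\eps}$ factors accumulate to a full $(k!)^{1-\eps}$ contribution: it is precisely this multiplicative-gain branch -- absent from earlier techniques -- that is responsible for the sharp $\log k$ improvement over \autoref{thm:klogdelta}, and cases where the log-convexity branch is active for many consecutive $t$ must be carefully balanced against it. A secondary obstacle is making the Krawtchouk inversion robust in the regime $k^2<\delta n$ so that the transform between $(m_t)$ and $(|A\cap S_j|)_j$ preserves the strength of the inequality supplied by \autoref{thm:main}.
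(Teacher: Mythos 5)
Your skeleton is right — a corruption argument over the $s$ affine subspaces that partition $\field_2^n$ at the leaves, with \autoref{thm:main} supplying the analytic input — and that matches the paper at the structural level. But the step where you cast slice densities of an affine subspace $A$ as heat-kernel inner products is set up in a way that breaks the argument.

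You take the one-bit-flip walk $W$ on $\binary^n$ with $v=\mathbf{1}_{\set{0^n}}$ and $u=\mathbf{1}_A/\sqrt{|A|}$, so that $m_t=\iprod{v}{W^tu}=|A|^{-1/2}\mu_t(A)$. Feeding this into \autoref{eq:insidemain} at $t=k$, the branch $m_{k+2}\ge k^{1-\eps}m_k^{1+2/k}$ unwinds to
\begin{equation*}
\mu_{k+2}(A)\ \ge\ k^{1-\eps}\,|A|^{-1/k}\,\mu_k(A)^{1+2/k},
\end{equation*}
and the extra $|A|^{-1/k}$ does not cancel anywhere. In the ``bad'' case of the corruption calculation (when $\mu_k(A)\gtrsim\delta\,\mu_{k+2}(A)$) this gives only $\mu_k(A)\le |A|^{1/2}\,(O(\delta)/k^{1-\eps})^{k/2}$. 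The $|A|^{1/2}$ factor — as large as $2^{n/2}$ — makes the bound vacuous in exactly the regime $k^2<\delta n$ you care about (there $n$ dwarfs $k\log k$, so $2^{n/2}$ swamps the $(\delta/k)^{k/2}$ gain, and summing over the $s$ leaves yields a negative lower bound on $s$). The issue is intrinsic: with the standard cube walk and a ``spread-out'' $u$, the Rényi-entropy terms that \autoref{thm:main} charges you can eat the entire gain. The paper's key move is orthogonal to yours: given $A=\setbuilder{x}{Bx=c}$, it pushes the walk forward through $B$, replacing $W$ by the convolution walk $S$ on $\field_2^n$ that adds a uniformly random column of $B$ each step. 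This turns \autoref{eq:chi} into $\iprod{V}{\mu_k}=\iprod{\indicate{c}}{S^k\indicate{0}}$ with $u=\indicate{0}$ and $v=\indicate{c}$ being genuine unit point masses, so \autoref{thm:main} applies with no $|A|$-dependent slack; $S$ inherits symmetry from $\field_2$ addition. Your proposal never produces this change of variables, and without it the corruption lemma you need (with $\tau=\exp(-\Omega(k\log(k/\delta)))$ and no $|A|$ dependence) does not follow.

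Two secondary remarks. First, the paper applies \autoref{thm:main} once, at $t=k$, not telescoped over $t=2,4,\ldots,k$: the dichotomy at $t=k$ directly delivers either $\mu_{k+2}(A)\mu_{k-2}(A)\ge\alpha_2\mu_k(A)^2$ or an absolute upper bound on $\mu_k(A)$, and both are fed into the hyperplane $H=\mu_k-(\mu_{k-2}+\mu_{k+2})/(6\delta)$. The multi-step telescoping you propose, and the balancing problem you flag as ``the principal obstacle,'' are not needed. Second, the paper works with the ``flip $k$ coordinates with replacement'' distributions $\mu_t$ rather than exact slices, so no Krawtchouk inversion is required; the small $O(k^2/n)$ leakage to lighter weights is absorbed directly into the $\iprod{A}{H}\ge 1/3$ estimate.
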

\begin{corollary}
\label{cor:propertytest}
Any $\delta$-error probability 
property tester for 
$k$-linearity requires 
$\Omega(k\log (k/\delta))$ queries.
\end{corollary}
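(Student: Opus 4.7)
The plan is to derive the corollary directly from the reduction pipeline already sketched in the introduction, rather than to attack the $k$-linearity problem from scratch. The key insight is that $k$-linearity testing is polynomially related to both the $k$-Hamming distance problem (in the communication model) and the $k$-Hamming weight problem (in the parity decision tree model), so once we have either Theorem~\ref{thm:klogk} or Theorem~\ref{thm:paritysize}, the corollary should drop out.

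I would take the parity decision tree route via \autoref{thm:paritysize}, since it is the more direct of the two reductions. Given a $q$-query $\delta$-error property tester $\mathcal{T}$ for $k$-linearity, the reduction of \cite{BhrushundiCK2014} transforms $\mathcal{T}$ into a randomized parity decision tree for the $k$-Hamming weight predicate. The idea is: on input $z \in \set{0,1}^n$, simulate the tester $\mathcal{T}$ on the function $f_z(x) = \iprod{z}{x} \bmod 2$; each query $\mathcal{T}$ makes at a point $x \in \set{0,1}^n$ is answered by a single parity query $\iprod{z}{x}$ of the input $z$. Observe that $f_z$ is exactly the $k$-linear function iff $|z|=k$, and when $|z| \ne k$ the distance of $f_z$ from every $k$-linear function is at least a constant (by standard Fourier calculation, the distance between any two distinct parities is $1/2$), so the tester's soundness yields a correct decision with error $\delta$. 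The resulting parity decision tree has $2^q$ leaves, i.e., size $2^q$.

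Applying \autoref{thm:paritysize}, any such tree must have size $\exp\Omega(k \log(k/\delta))$ in the regime $k^2 < \delta n$, forcing $q = \Omega(k\log(k/\delta))$. For $k^2 \ge \delta n$ the bound $\Omega(k\log(k/\delta))$ degenerates to $\Omega(k)$, which is already implied by the classical lower bounds of \cite{BlaisBM2012, BlaisK2012} (or trivially since one needs at least $\log\binom{n}{k}$ queries to identify the support). As an alternative derivation, one could instead invoke \autoref{thm:klogk} together with the \cite{BlaisBM2012} reduction: Alice and Bob, holding $x$ and $y$ with promised Hamming distance in $\set{k, \text{something far from }k}$, simulate the $k$-linearity tester on $f(z) := \iprod{x}{z} \oplus \iprod{y}{z}$, each exchanging one bit per query (Alice sends $\iprod{x}{z}$, Bob replies with the XOR), yielding a protocol of $O(q)$ bits.

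The main obstacle I foresee is not conceptual but bookkeeping: one has to verify that the parameter regime $k^2 < \delta n$ assumed by \autoref{thm:paritysize} is not the binding constraint for the property testing problem, where $n$ can be taken as large as desired (here $n$ is the number of variables of the function being tested, which is a free parameter of the problem). Since the property tester must work for all $n$, we are free to instantiate $n$ large enough so that $k^2 < \delta n$ holds, and no loss in the parameters of interest is incurred. This makes the reduction clean and the bound $\Omega(k\log(k/\delta))$ sharp by matching an upper bound of \cite{BlaisK2012}.
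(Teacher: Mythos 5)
Your overall plan matches the paper's: the paper obtains the corollary exactly by citing the \cite{BhrushundiCK2014} reduction from $\PS_\delta(H^n_k)$ (or, alternatively, the \cite{BlaisBM2012} reduction from $R_\delta(\Ham^n_k)$), so there is nothing genuinely different in your route, and your observation that $n$ may be chosen large enough to satisfy $k^2<\delta n$ is the right way to dispose of the parameter constraint.

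However, there is a gap in the reduction as you wrote it. You assert that $f_z(x)=\iprod{z}{x}$ is $k$-linear if and only if $|z|=k$, and that for $|z|\ne k$ the function is $1/2$-far from $k$-linear. This uses the ``exactly $k$ variables'' definition of $k$-linearity, but the paper defines $f$ to be $k$-linear when $f(x)=\sum_{i\in S}x_i$ for some $S$ with $|S|\le k$. Under the paper's definition, $f_z$ is $k$-linear whenever $|z|\le k$, so the tester must \emph{accept} $f_z$ for every $|z|\le k$; the resulting parity decision tree therefore computes the threshold predicate ``$|z|\le k$'' rather than the exact-weight predicate $H^n_k$. This matters because \autoref{thm:paritysize} (via the hyperplane $H=\mu_k-(\mu_{k-2}+\mu_{k+2})/(6\delta)$) requires a tree that outputs $1$ at weight $k$ and $0$ at weights $k-2$ and $k+2$; a tree that accepts at both $k$ and $k-2$ does not meet this specification. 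The missing step is the padding trick the paper spells out just before the proof of \autoref{thm:klogk}: simulate the threshold tree on both $00z$ and $11z$ and declare $|z|=k$ iff the first run accepts and the second rejects. This squares the size (so $\log$-size doubles, harmless) and doubles the error, and it turns the threshold tree into one that correctly outputs $1$ at weight $k$ and $0$ at weights $k\pm 2$. The same issue, and the same fix, applies to your communication-complexity alternative via the \cite{BlaisBM2012} reduction: Alice and Bob's simulated tester distinguishes $\lone{x-y}\le k$ from $\lone{x-y}>k$, not $\lone{x-y}=k$ from $\lone{x-y}\in\set{k-2,k+2}$.
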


Note the bound $m_k^t\ge m_t^k$ obtained in \cite{BlakleyD1966} under the condition $m_t\ge e^{-4t}$ 
does not have any implications for the communication complexity of the $k$-Hamming distance problem as our reduction crucially uses the fact that $u$ and $v$ are arbitrary, however it does lead to an 
$\exp\Omega(k)$ lower bound for the parity decision tree size of the $k$-Hamming weight problem when combined 
with our reduction.

\begin{remark}
Note that in \autoref{thm:blakley-dixon}, when $u=v$ and either
both $k,t$ are even or $S$ is positive semidefinite, the
summands of \autoref{eq:spectralsum} become nonnegative and the
inequality holds trivially. For our application in communication
complexity we crucially use the fact that $u$ and $v$ are
arbitrary and for our application in parity decision trees, one
can do away with $u=v$ but only at the expense of having to
choose $k,t$ odd. 
In both results the $S$ we choose has eigenvalues $1$ and $-1$ with equal multiplicities and therefore far away from
being positive semidefinite.
In either case, the implications of
\autoref{thm:blakley-dixon} in complexity theory follow 
from the interesting cases of this theorem.
\end{remark}

\section{Preliminaries}
\label{sec:preliminaries}

We denote by $[n]$ the set $\set{1,2,\ldots,n}$. We
take $\exp$ and $\log$ functions to the base 2.
Let $\Omega$ be a countable set. 
For a function $\mu\colon\Omega\to\realspos$ and a set 
$\Psi\subseteq\Omega$, we use the shorthand
\begin{align*}
\mu(\Psi) \defeq \sum_{x\in\Psi}\mu(x).
\end{align*}
A function $\mu\colon\Omega\to\realspos$
is said to be a distribution on $\Omega$ if
$\mu(\Omega) = 1$ and a subdistribution
if $\mu(\Omega) \le 1$. For a function
$\mu$ on $\Omega$, we define
\begin{align*}
\supp(\mu)\defeq \setbuilder{x\in \Omega}{\mu(x)>0}.
\end{align*}
For two distributions $\mu\colon\Omega_1\to\realspos$
and $\nu\colon\Omega_2\to\realspos$, let us denote by
$\mu\nu$ the distribution on $\Omega_1\times\Omega_2$
given by $(\mu\nu)(x_1,x_2) = \mu(x_1)\nu(x_2)$.

For a discrete random variable $X$, we denote by
$\dist(X)$ the distribution function of $X$ and
we define $\supp(X)\defeq\supp(\dist(X))$. 
If $X$ is so that $\dist(X)\colon\Omega\to\realspos$,
then we say that $X$ has sample space $\Omega$.
Two 
random variables $X$ and $Y$ are said to be 
independent if $\dist(XY) =\dist(X)\dist(Y)$.

\begin{lemma}[Jensen \cite{Jensen1906}, Formula (5)]
\label{lem:jensen}
Let $X$ be a real-valued random variable and
$f$ be a convex function. We have
$\E\sparen{f(X)} \ge f(\E[X])$. When
$f$ is strictly convex, the inequality holds
with equality if and only if $X$ is constant
with probability 1.
\end{lemma}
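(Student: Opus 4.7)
The plan is to reduce the inequality to the supporting line property of convex functions. Setting $\mu \defeq \E[X]$, I would first invoke the fact that every convex $f\colon\reals\to\reals$ admits at $\mu$ a supporting line: there exists $c\in\reals$ such that
\[
f(x) \ge f(\mu) + c(x-\mu) \quad \text{for every } x\in\reals.
\]
Any real number wedged between the left and right derivatives of $f$ at $\mu$ serves as $c$, and the existence of these one-sided derivatives is standard for convex functions on the real line (it follows from the monotonicity of divided differences).

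With the linear lower bound in hand, I would substitute the random variable $X$ for $x$ and take expectations on both sides. By linearity of expectation,
\[
\E[f(X)] \ge f(\mu) + c\bigl(\E[X]-\mu\bigr) = f(\mu) = f(\E[X]),
\]
which is the claimed inequality.

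For the equality case, I would use that strict convexity upgrades the supporting line bound to a strict inequality $f(x) > f(\mu) + c(x-\mu)$ for every $x\ne\mu$. Hence the nonnegative random variable $Z\defeq f(X)-f(\mu)-c(X-\mu)$ is strictly positive on the event $\set{X\ne\mu}$. If $X$ is not equal to $\mu$ with probability $1$, this event has positive probability, so $\E[Z]>0$, yielding $\E[f(X)]>f(\E[X])$. Conversely, when $X=\mu$ almost surely both sides collapse to $f(\mu)$, closing the characterization. The only point requiring care is integrability: one must know that $\E[f(X)]$ is well-defined as an element of $\reals\cup\set{+\infty}$ so that the expectation step is legitimate, but this is built into the hypothesis that both quantities being compared are meaningful. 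There is no genuine obstacle here; the argument is entirely driven by the one-line supporting hyperplane fact.
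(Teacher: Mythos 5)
The paper states this lemma as a citation to Jensen's 1906 paper and does not supply its own proof, so there is no in-paper argument to compare against. Your supporting-line proof is the standard one and is correct: the existence of a subgradient at $\mu=\E[X]$ gives the affine minorant, expectation yields the inequality, and strict convexity forces the minorant to be strictly below $f$ off $\mu$, which pins down the equality case exactly as you argue.
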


\subsection{Information theory}
\label{sec:info}
In this section we review the definitions and 
facts we use from information theory.
Let $\mu$ and $\nu$ be two 
nonnegative functions on $\Omega$.
The Kullback-Leibler divergence \cite{Wald1945, KullbackL1951}
of $\mu$ from $\nu$,
denoted $\kldiv{\mu}{\nu}$, is defined by
\begin{align}
\label{eq:kldiv}
  \kldiv{\mu}{\nu} \defeq \sum_{x\in \Omega}
      \mu(x)\log\frac{\mu(x)}{\nu(x)}\,.
\end{align}
Here, if $\mu(x)=0$ for some $x$, then its contribution
to the summation is taken as 0, even when $\nu(x)=0$.
The divergence is undefined if there is an
$x\in \Omega$ such that $\mu(x)>0$ and
$\nu(x)=0$.
It can be shown that if the related series 
converges for the right hand side of \autoref{eq:kldiv},
it converges absolutely, 
which justifies leaving the summation order unspecified.
A fundamental property of 
$\kldiv{\cdot}{\cdot}$ is that
the divergence of a distribution from a subdistribution 
is always nonnegative.
\begin{lemma}[Gibbs \cite{Gibbs1902}, Theorem VIII]
\label{lem:gibbs}
Let $\mu,\nu\colon\Omega\to\reals$ be such that $\mu$ is a
distribution and $\nu$ is a subdistribution.
We have $\kldiv{\mu}{\nu}\ge 0$ with equality if and
only if $\mu=\nu$.
\end{lemma}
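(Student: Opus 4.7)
The plan is to reduce \autoref{lem:gibbs} to a single application of Jensen's inequality (\autoref{lem:jensen}) with the strictly convex function $f(t)=-\log t$. Under the convention that any $x$ with $\mu(x)=0$ contributes $0$ to \autoref{eq:kldiv}, the summation effectively ranges over $\supp(\mu)$, on which $\nu$ is strictly positive (otherwise the divergence would be undefined). This allows us to interpret $\kldiv{\mu}{\nu}$ as the expectation of $-\log$ applied to the ratio variable $X$ that takes value $\nu(x)/\mu(x)$ with probability $\mu(x)$ for each $x\in\supp(\mu)$.

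With this setup, I would first compute
\begin{align*}
\E[X] = \sum_{x\in\supp(\mu)}\mu(x)\cdot\frac{\nu(x)}{\mu(x)} = \nu(\supp(\mu)) \le \nu(\Omega) \le 1,
\end{align*}
using that $\nu$ is a subdistribution in the last step. Applying \autoref{lem:jensen} to $f(t)=-\log t$ (convex on $(0,\infty)$) then yields
\begin{align*}
\kldiv{\mu}{\nu} = \E[-\log X] \ge -\log \E[X] \ge -\log 1 = 0,
\end{align*}
which proves the nonnegativity claim.

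For the equality characterization, both inequalities above must be tight. Strict convexity of $-\log t$ combined with \autoref{lem:jensen} forces $X$ to be almost surely constant, so $\nu(x)=c\,\mu(x)$ for all $x\in\supp(\mu)$ and some $c>0$; tightness of the second inequality further requires $\E[X]=1$, which pins down $c=1$ and hence $\nu(\supp(\mu))=1$. Together these imply that $\mu$ and $\nu$ agree on $\supp(\mu)$ and that $\nu$ places no mass outside $\supp(\mu)$, so $\mu=\nu$ (the converse being immediate). There is no substantive obstacle here: the only care required is to track the two distinct sources of slack — Jensen's gap and the subdistribution bound $\nu(\Omega)\le 1$ — and to verify that closing both simultaneously identifies $\mu$ with $\nu$.
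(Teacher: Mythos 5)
Your proof is correct and follows essentially the same route as the paper: both apply Jensen's inequality to the logarithm over $\supp(\mu)$ to obtain $\kldiv{\mu}{\nu}\ge -\log\nu(\supp(\mu))$, then use $\nu(\supp(\mu))\le\nu(\Omega)\le 1$, and identify the equality case via strict convexity plus the constraint $\nu(\supp(\mu))=1$. The only difference is organizational — the paper factors the Jensen step out into a separate lemma (\autoref{lem:klcond}) and then invokes it, whereas you inline the argument — but the underlying mathematics is identical.
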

\begin{proof}
This follows from the next lemma by noting that 
$-\log \nu(\Omega)\ge 0$ for $\nu$ a 
subdistribution.
\end{proof}
\begin{lemma}[Kullback and Leibler \cite{KullbackL1951},
Lemma 3.2]
\label{lem:klcond}
Let $\mu,\nu\colon\Omega\to\realspos$ be so that
$\mu$ is a distribution on $\Omega$ and 
$\supp(\mu) = \Psi\subseteq\Omega$. We have
\begin{align*}
\kldiv{\mu}{\nu}\ge -\log \nu(\Psi)
\end{align*}
with equality if an only if $\mu(x) = \nu(x)/\nu(\Psi)$ for
$x\in\Psi$ and $\mu(x) = 0$ for $x\notin\Psi$.
\end{lemma}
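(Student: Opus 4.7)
The plan is to derive the bound by applying Jensen's inequality (\autoref{lem:jensen}) to the strictly convex function $-\log$ with respect to the distribution $\mu$. Concretely, since $\mu$ is supported on $\Psi$, I can rewrite
\begin{align*}
\kldiv{\mu}{\nu} = -\sum_{x\in\Psi} \mu(x)\log\frac{\nu(x)}{\mu(x)},
\end{align*}
and view the inner sum as the expectation of $-\log Y$ where $Y$ is the random variable taking value $\nu(x)/\mu(x)$ with probability $\mu(x)$ for $x\in\Psi$. Note that the requirement that $\kldiv{\mu}{\nu}$ be defined forces $\nu(x)>0$ on $\Psi$, so $Y$ is a well-defined positive random variable.

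Next I would apply Jensen: since $-\log$ is convex,
\begin{align*}
\E[-\log Y] \ge -\log \E[Y] = -\log \sum_{x\in\Psi} \mu(x)\cdot\frac{\nu(x)}{\mu(x)} = -\log \nu(\Psi),
\end{align*}
which gives the desired inequality. I would be careful about one edge case: if $\nu(\Psi)=0$, the right-hand side is $+\infty$, but then some $\nu(x)=0$ on $\supp(\mu)$ and the divergence is undefined, so this case is excluded by hypothesis.

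For the equality characterization, I would invoke the strict convexity clause of \autoref{lem:jensen}: equality in Jensen holds if and only if $Y$ is almost surely constant, i.e.\ there is some $c>0$ with $\nu(x)/\mu(x)=c$ for every $x\in\Psi$. Summing $\nu(x)=c\,\mu(x)$ over $\Psi$ gives $c=\nu(\Psi)$, which yields $\mu(x)=\nu(x)/\nu(\Psi)$ on $\Psi$ and $\mu(x)=0$ elsewhere by definition of $\Psi=\supp(\mu)$. Conversely, a direct substitution shows this choice achieves equality.

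I do not anticipate a serious obstacle: the only subtlety is handling the support conventions (the ``$0\log 0=0$'' clause and the undefined case $\mu(x)>0, \nu(x)=0$) and verifying absolute convergence so that the reordering of sums implicit in Jensen's inequality is legitimate. Both points are standard and follow from the hypothesis that $\kldiv{\mu}{\nu}$ is defined together with the fact that $\mu$ is a probability distribution, so the computation of $\E[Y]$ really does telescope to $\nu(\Psi)$.
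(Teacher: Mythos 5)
Your proposal is correct and uses essentially the same argument as the paper: rewrite $\kldiv{\mu}{\nu}$ as an expectation over $\Psi$ and apply Jensen's inequality (\autoref{lem:jensen}) to the logarithm, with the equality case coming from strict concavity/convexity. The only cosmetic difference is that you phrase Jensen for the convex function $-\log$ while the paper phrases it for the concave function $\log$; the computation and the equality characterization are identical.
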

\begin{proof}
By \autoref{eq:kldiv} we write
\begin{align*}
\kldiv{\mu}{\nu} &= -\sum_{x\in \Psi}
    \mu(x)\log\frac{\nu(x)}{\mu(x)}
  \ge -\log\sum_{x\in \Psi}\nu(x) =-\log \nu(\Psi)\,,
\end{align*}
where the inequality follows from
\autoref{lem:jensen} and concavity of
$z\mapsto\log z$ on $\realspos$. 
If $\mu(x) = \nu(x)/\nu(\Psi)$ for $x\in\Psi$, we have 
$\kldiv{\mu}{\nu}=-\log \nu(\Psi)$ by direct computation.
Otherwise $\kldiv{\mu}{\nu}>-\log\nu(\Psi)$ by strict 
concavity of $z\mapsto\log z$.
\end{proof}

We extend the divergence notation
$\kldiv{\cdot}{\cdot}$ to apply to random
variables as follows. Let $X,Y$ be discrete random
variables on the same sample space $\Omega$. Define
\begin{align}
\label{eq:klrv}
\kldiv{X}{Y}\defeq \kldiv{\dist(X)}{\dist(Y)}.
\end{align}
With this notation in hand, we are ready to define the 
conditional divergence. Let $X_1X_2$ and $Y_1Y_2$ be 
random variables defined on the sample space $\Omega_1\times \Omega_2$.
The divergence of $X_1\emid X_2$ from $Y_1\emid Y_2$ is
defined by 
\begin{align}
\kldiv{X_1\emid X_2}{Y_1\emid Y_2}
    \defeq \E_{x_2\sim X_2}
        \kldiv{X_1\emid X_2=x_2}{Y_1\emid Y_2 = x_2}.
\end{align}
Here, for each $x_2\in\supp(X_2)$, 
$X_1\emid X_2=x_2$ and $Y_1\emid Y_2 = x_2$ 
are random variables on the sample space
$\Omega_1$ obtained from, respectively
$X_1X_2$ and $Y_1Y_2$, by conditioning on the
second coordinate equaling $x_2$.

\begin{lemma}[e.g., \cite{CoverT2006}]
\label{lem:klchain}
Let $X_1X_2$ and $Y_1Y_2$ be random variables, both on the
sample space $\Omega_1\times \Omega_2$. We have
\begin{align*}
\kldiv{X_1X_2}{Y_1Y_2} = 
    \kldiv{X_1}{Y_1} + \kldiv{X_2\emid X_1}{Y_2\emid Y_1}.
\end{align*}
\end{lemma}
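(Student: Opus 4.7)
The plan is to expand $\kldiv{X_1X_2}{Y_1Y_2}$ directly from the definitions in \autoref{eq:kldiv} and \autoref{eq:klrv}, and then apply the elementary chain rule of probability to the joint distributions inside the logarithm. Writing $\mu = \dist(X_1X_2)$ and $\nu = \dist(Y_1Y_2)$, for $(x_1,x_2)\in\supp(\mu)$ we have the factorizations $\mu(x_1,x_2)=\dist(X_1)(x_1)\cdot\dist(X_2\emid X_1=x_1)(x_2)$ and the analogous factorization for $\nu$; the logarithm of a product splits into a sum, and so
$$
\kldiv{X_1X_2}{Y_1Y_2}=\sum_{x_1,x_2}\mu(x_1,x_2)\log\frac{\dist(X_1)(x_1)}{\dist(Y_1)(x_1)}+\sum_{x_1,x_2}\mu(x_1,x_2)\log\frac{\dist(X_2\emid X_1=x_1)(x_2)}{\dist(Y_2\emid Y_1=x_1)(x_2)}.
$$

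Next I would marginalize $x_2$ out of the first sum, which collapses $\mu(x_1,x_2)$ to $\dist(X_1)(x_1)$ and leaves exactly $\kldiv{X_1}{Y_1}$ by definition. For the second sum, grouping by $x_1$ pulls out a factor of $\dist(X_1)(x_1)$ and leaves an inner sum over $x_2$ equal to $\kldiv{X_1\emid X_1=x_1}{Y_2\emid Y_1=x_1}$; by the definition of conditional divergence this is $\E_{x_1\sim X_1}\kldiv{X_2\emid X_1=x_1}{Y_2\emid Y_1=x_1}=\kldiv{X_2\emid X_1}{Y_2\emid Y_1}$, which completes the identity.

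The only care required is in the handling of zeros, so that both factorizations and the convention from \autoref{eq:kldiv} are legitimate on the terms being manipulated. On $\supp(\mu)$ the marginal $\dist(X_1)(x_1)$ is positive so the conditional $\dist(X_2\emid X_1=x_1)$ is a well-defined distribution; and for $\kldiv{X_1X_2}{Y_1Y_2}$ to be defined, every $(x_1,x_2)\in\supp(\mu)$ must satisfy $\nu(x_1,x_2)>0$, which forces $\dist(Y_1)(x_1)>0$ and ensures the corresponding conditional distribution of $Y_2$ given $Y_1=x_1$ is well-defined. Outside $\supp(\mu)$, every term contributes $0$ by the convention, so the rearrangement above is valid term-by-term. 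I expect the only mild obstacle to be this bookkeeping about zero-probability atoms (and, if one cares about fully infinite $\Omega$, an appeal to the absolute convergence remark following \autoref{eq:kldiv} to justify reordering the double sum); everything else is a direct identity.
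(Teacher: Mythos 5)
Your proof is correct and takes essentially the same approach as the paper's: expand the joint divergence from \autoref{eq:kldiv}, factor the joint densities into marginal times conditional, split the logarithm, and recognize the two resulting sums as $\kldiv{X_1}{Y_1}$ and the conditional divergence (the paper's own commented-out proof does the same rearrangement, just starting from $\kldiv{X_1\emid X_2}{Y_1\emid Y_2}$ and solving for the joint). One small typo: the inner sum you describe should be $\kldiv{X_2\emid X_1=x_1}{Y_2\emid Y_1=x_1}$ rather than $\kldiv{X_1\emid X_1=x_1}{Y_2\emid Y_1=x_1}$ --- you state it correctly in the very next clause.
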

\paragraph{The mutual information}
Let $X$ and $Y$ be jointly distributed random variables.
The 
mutual information of $X$ and $Y$, 
denoted $\muti{X}{Y}$, is defined as
\begin{align}
\muti{X}{Y} \defeq \kldiv{\dist(X,Y)}{\dist(X)\dist(Y)}.
\label{eq:muti-def}
\end{align}
The mutual information of a random variable 
with itself, i.e., the quantity $\muti{X}{X}$ is 
called the Shannon entropy of $X$.
\section{Monotonicity of 
$t\mapsto m_{2t}^{1/(2t)}$ and 
$t\mapsto m_{2t+1}^{1/(2t+1)}$}
\label{sec:monotonicity}

Let $S\colon\Omega\times\Omega\to\realspos$ be a 
symmetric matrix with nonnegative entries, 
$u,v\colon\Omega\to\realspos$ nonnegative unit vectors  
and $m_t = \iprod{v}{S^tu}$.
In this section we prove \autoref{thm:blakley-dixon}
which we restate here 
(with additional equality conditions) 
for the convenience of the reader.
Recall this theorem confirms 
\autoref{conj:blakley-dixon} 
and \autoref{conj:erdos-simonovits}.

\begingroup
\def\thetheorem{\ref{thm:blakley-dixon}}
\begin{theorem}[restated]
Let $S\colon\Omega\times\Omega\to\realspos$ be a 
symmetric matrix with nonnegative entries and 
$u,v\colon\Omega\to\realspos$ be
nonnegative unit vectors. For positive integers 
$k\ge t$ of the same
parity, we have
  \begin{align}
    \iprod{v}{S^ku}^t\ge \iprod{v}{S^tu}^k,
    \label{eq:bd-inthm}
  \end{align}
with equality if and only if $\iprod{v}{S^ku} = 0$ or
$Su =\lambda v$ and $Sv=\lambda u$ for some 
$\lambda\in\realspos$ when $t$ is odd and 
$u=v$ is an eigenvector of $S^2$ when $t$ is even.
\end{theorem}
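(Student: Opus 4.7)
The plan is to prove the base case $m_{t+2}^{1/(t+2)} \ge m_t^{1/t}$ for each $t \ge 1$ of the relevant parity and then iterate, as the authors indicate following the theorem statement. We may assume $m_t > 0$ (else the inequality holds trivially, since $m_{t+2} \ge 0$), and we may rescale $S$ freely because both sides of the inequality are homogeneous of the same degree in $S$.

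The reformulation I would use is probabilistic. Introduce the length-$n$ walk distribution $p_n$ on $\Omega^{n+1}$ defined by $p_n(x_0, \ldots, x_n) = u(x_0)\, v(x_n) \prod_{i=0}^{n-1} S(x_i, x_{i+1})/m_n$. Using symmetry of $S$, one checks that under $p_n$ the sequence $X_0 X_1 \cdots X_n$ is a reversible Markov chain, with forward and backward transition kernels expressible through $S^j u$ and $S^j v$. A short computation then yields the identity
\[
\frac{m_{t+2}}{m_t}=\E_{p_t}\!\left[\frac{(S^2 v)(X_t)}{v(X_t)}\right],
\]
and a symmetric identity with $u$ in place of $v$ at the other end of the walk. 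In view of this, the target inequality $m_{t+2}^t \ge m_t^{t+2}$ is equivalent to $\log(m_{t+2}/m_t) \ge (2/t)\log m_t$, and my approach is to lower-bound the left-hand side by a sum of $t$ non-negative KL divergences. Concretely, I would extend a sample $(X_0, \ldots, X_t)\sim p_t$ to a length-$(t+2)$ walk $(X_{-1}, X_0, \ldots, X_t, X_{t+1})$ using the reversed Markov kernels derived above, then apply the chain rule (\autoref{lem:klchain}) to decompose the divergence of a suitable product measure against $p_{t+2}$ into a telescoping sum of conditional divergences, each non-negative by Gibbs' inequality (\autoref{lem:gibbs}).

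The main obstacle is selecting the test subdistribution so that the resulting conditional divergences cleanly produce exactly the factor $(2/t) \log m_t$ on the right-hand side, with no residual terms of indeterminate sign. The counterexample in \autoref{fig:ex1} (where $m_{t-2}=0$ while $m_t>0$) rules out any construction that would deliver the full log-convexity of $(m_n)$; hence the same-parity hypothesis $k \equiv t \pmod 2$ must enter essentially through the reversibility of $p_t$ together with the observation that a length-$2$ extension preserves parity. Finally, the equality conditions would follow from the equality case of \autoref{lem:gibbs}: forcing each conditional divergence to vanish pins down the joint distributions pointwise, which on unpacking yields $Su = \lambda v$ and $Sv = \lambda u$ when $t$ is odd, and $u = v$ being an eigenvector of $S^2$ when $t$ is even; the converse is a direct substitution into $\iprod{v}{S^k u}$.
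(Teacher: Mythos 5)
Your framework is right in spirit---walk measures, KL divergence, the chain rule---but the central device is missing and the extension you propose cannot produce the exponent $1+2/t$. Extending a $p_t$-walk to a $p_{t+2}$-walk by adding one step at each end is a \emph{single} construction, and after the chain rule it yields $\kldiv{\text{extended walk}}{p_{t+2}} = \kldiv{X}{p_t} + (\text{two conditional terms})$. This gives an additive comparison of $\log m_{t+2}$ against $\log m_t$, not the multiplicative factor $\frac{t+2}{t}$ that the inequality $m_{t+2}^t \ge m_t^{t+2}$ demands. The paper's resolution is to pick a uniformly random index $J\in[t]$ and insert a local ``bounce'' (a forward step followed by a backward step, using the time-reversed kernel of the conditioned walk) at time $J$, rather than appending at the endpoints. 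Because the conditioned walk $X$ is reversible---the walk conditioned on returning to the origin looks the same run forward or backward---the bounce at time $j$ leaves the remainder of the trajectory distributed as a time-shifted copy of $X$. Averaging over $j\in[t]$ then makes $\kldiv{X}{F^t}$ appear with coefficient $1 + \frac{1}{t} + \frac{1}{t} = \frac{t+2}{t}$ (the two extra $\frac{1}{t}$'s coming from the forward and backward step at the bounce, using $\kldiv{X}{F^t}=\kldiv{X}{B^t}$), minus boundary terms controlled by $-\log\lnorm{\mu}{2}^2$ and $-\log\lnorm{\nu}{2}^2$. Without this averaging over $t$ insertion points there is no mechanism to generate the $2/t$ in the exponent, which is precisely the obstacle you flag but do not overcome.

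Two smaller points. First, the identity $\frac{m_{t+2}}{m_t}=\E_{p_t}\!\sparen{\frac{(S^2v)(X_t)}{v(X_t)}}$ followed by Jensen gives a lower bound of the form $\exp\E\log(\cdot)$, but there is no reason that expectation equals $\frac{2}{t}\log m_t$; it compares $m_{t+2}$ to $m_t$ in a way that, if pushed, is closer to the Blakley--Roy / Pate bound ($m_t \ge m_1^t$, i.e.\ the case $t \mid k$) than to the same-parity monotonicity you need. Second, the equality analysis is substantially harder than ``each conditional divergence vanishes'': the paper has to track equality through Jensen steps, through the nonnegativity of $\muti{J}{Z}$, and through the minimality of the conditioned walk among walks returning to the origin, and then rule out extraneous support via a combinatorial argument about inserting loops into walks. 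Forcing the conditional divergences to vanish alone does not directly yield $Su=\lambda v$, $Sv=\lambda u$ (odd $t$) or $u=v$ an eigenvector of $S^2$ (even $t$).
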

\addtocounter{theorem}{-1}
\endgroup

We prove \autoref{thm:blakley-dixon} by an 
information theoretic argument.
Define the distributions $\mu \defeq u/\lone{u}$ and 
$\nu \defeq v/\lone{v}$.
Since either side of 
\autoref{eq:bd-inthm} is $kt$-homogeneous in $S$, we may
assume that $S$ is substochastic by scaling as needed. 
Having fixed this normalization, we view 
\autoref{eq:bd-inthm} as a statement about random walks
on $\Omega$ that start from a state sampled according to 
$\mu$ or $\nu$ and evolve according to the transition 
matrix $S$.

\subsection{Reference random walks}
\label{sec:refwalk}
\def\OC{\Omega_\circ}
Let $\OC = \Omega\cup\set{r}$ for some state $r\notin\Omega$ and 
$t$ be a positive integer. Recall that 
$\mu = u/\lone{u}$ and $\nu = v/\lone{v}$.
We start by defining random walks $F^t, B^t$ on $\OC$
that evolve in discrete time steps $-1, 0,1, \ldots, t, t+1$.

The random walk $F^t$ starts at $r$ and transitions to 
a state $x\in\Omega$ with probability $\mu(x)$ at time step $-1$.
In steps $0,1,\ldots, t-1$, the random walk proceeds according to
the transition matrix $S$. At the time step $t$,
each state $x\in\Omega$ transitions to $r$ 
with probability $\nu(x)$ and
transitions to an arbitrary state in 
$\Omega$ with probability
$1-\nu(x)$ 
(say, all of them to the same arbitrary state).
We view $F^t$ as a joint random variable
$F^t = (F^t_{-1},F^t_0,\ldots,F^t_{t+1})$,
where $F^t_i$ is the location of the walk in time step $i$.

The random walk $B^t$ proceeds backwards in time. At time 
step $t+1$ the walk $B^t$ starts at $r$ and transitions
to a state $x\in\Omega$ with probability $\nu(x)$.
In time
steps $t, t-1,\ldots, 1$, the random walk proceeds as prescribed
by $S$.
At time step $0$, each state $x\in\Omega$ transitions
to $r$ with probability $\mu(x)$
and to an arbitrary state in
$\Omega$ with probability $1-\mu(x)$.
Similarly, $B^t$ denotes
the joint random variable 
$B^t = (B^t_{-1},B^t_0,\ldots,B^t_{t+1})$,
where $B^t_i$ is the
location of the walk at time step $i$.

The following facts about $F^t$ and $B^t$ are immediate.
The random variables $F^t_{-1}$ and $B^t_{t+1}$ are fixed
to a single value $r$. The random variables $F^t$, $B^t$ are Markovian,
namely,
$\dist(F^t_i \emid F^t_{i-1},\ldots,F^t_{-1}) 
    = \dist(F^t_i \emid F^t_{i-1})$ and 
$\dist(B^t_{i-1} \emid B^t_{i},\ldots B^t_{t+1})
    = \dist(B^t_{i-1} \emid B^t_{i})$ for 
$i\in \set{0,\ldots, t+1}$.

\subsection{Random walks returning to the origin}
Assume that $\Pr[F^t_{t+1}=r]>0$.
Let $X$ be the walk $F^t$ conditioned on
$F^t_{t+1} = r$.
Note that $X$ is a random variable on the sample space
$\OC^{t+3}$. The next two lemmas explicitly calculate 
the distribution of $X$.

For a matrix $M\colon\Omega\times \Omega\to\realspos$,
functions $f,g\colon\Omega\to\realspos$, and $x,y\in \Omega$
we use the shorthands
\begin{align*}
M(f,y)&\defeq \sum_{x\in\Omega} f(x)M(x,y) = (M^\transpose f)(y)\\
M(x,g)&\defeq \sum_{y\in\Omega} M(x,y)g(y) = (Mg)(x)\\
M(f,g)&\defeq \sum_{x,y\in\Omega} f(x)M(x,y)g(y) 
    = f^\transpose M g,
\end{align*}
where the last expression in each line is understood as
a matrix vector multiplication.
\begin{lemma}
\label{lem:xdist}
Under our assumption $S^t(\mu, \nu) > 0$,
\begin{enumerate}[(i)]
\item we have
  $\Pr\sparen{X_i = x} 
      = \frac{S^i(\mu,x)S^{t-i}(x,\nu)}{S^t(\mu,\nu)}$, and
\item if $S^{t-i}(x,\nu)>0$, 
    we have $\Pr[X_{i+1}=y \emid X_{\le i} = x_{\le i}]
    =\frac{S(x_i,y)S^{t-i-1}(y,\nu)}{S^{t-i}(x_i,\nu)}$.
\end{enumerate}
\end{lemma}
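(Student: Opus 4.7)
The plan is to prove both parts by direct application of Bayes' theorem to the definition of $X$, together with the Markov property of $F^t$ recorded in Section~\ref{sec:refwalk}, which lets me collapse probabilities conditioned on $F^t_{t+1}=r$ down to quantities depending only on the present state of the walk.

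For part~(i), I would write
$$
\Pr[X_i = x] \;=\; \Pr[F^t_i = x \,\emid\, F^t_{t+1}=r] \;=\; \frac{\Pr[F^t_i = x,\, F^t_{t+1}=r]}{\Pr[F^t_{t+1}=r]}.
$$
The denominator is obtained by enumerating a trajectory step by step (initial $\mu$-transition out of $r$, then $t$ applications of $S$, then terminal $\nu$-transition into $r$), giving $\Pr[F^t_{t+1}=r]=\mu^\transpose S^t\nu = S^t(\mu,\nu)$, which is positive by hypothesis. The numerator factors via the Markov property of $F^t$ as $\Pr[F^t_i = x]\cdot\Pr[F^t_{t+1}=r\emid F^t_i=x]=S^i(\mu,x)\cdot S^{t-i}(x,\nu)$. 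Taking the ratio yields the formula.

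For part~(ii), the point is that conditioning the Markov chain $F^t$ on the single-time event $F^t_{t+1}=r$ preserves the Markov property (Doob's $h$-transform). Explicitly, Bayes' rule applied to that event, conditional on $F^t_{\le i}=x_{\le i}$, gives
$$
\Pr[F^t_{i+1}=y \,\emid\, F^t_{\le i}=x_{\le i},\, F^t_{t+1}=r] \;=\; \frac{\Pr[F^t_{i+1}=y,\, F^t_{t+1}=r \,\emid\, F^t_{\le i}=x_{\le i}]}{\Pr[F^t_{t+1}=r \,\emid\, F^t_{\le i}=x_{\le i}]}.
$$
By the Markov property of $F^t$ both the numerator and the denominator depend only on $x_i$: the numerator equals $S(x_i,y)\cdot S^{t-i-1}(y,\nu)$ and the denominator equals $S^{t-i}(x_i,\nu)$, which is positive by hypothesis. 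Dividing yields the stated formula, and as a by-product shows $X$ itself is Markov on $\Omega_\circ$. No step constitutes a real obstacle; the only care needed is in correctly accounting for the two non-$S$ boundary transitions of $F^t$ at times $-1$ and $t$, so that $\Pr[F^t_{t+1}=r\emid F^t_i=x]=S^{t-i}(x,\nu)$ and $\Pr[F^t_i=x]=S^i(\mu,x)$ are read off correctly.
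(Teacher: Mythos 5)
Your proposal is correct and follows essentially the same route as the paper: the paper likewise records the four basic facts $\Pr[F^t_i=x]=S^i(\mu,x)$, $\Pr[F^t_{t+1}=r\emid F^t_i=x]=S^{t-i}(x,\nu)$, $\Pr[F^t_{t+1}=r]=S^t(\mu,\nu)$, and $\Pr[F^t_{i+1}=y,\,F^t_{t+1}=r\emid F^t_i=x]=S(x_i,y)S^{t-i-1}(y,\nu)$, and derives (i) and (ii) by Bayes' rule together with the Markov property of $F^t$, exactly as you do.
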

\begin{proof}
From the definition of $F^t$ (cf.\ \autoref{sec:refwalk}), 
we have 
  \begin{align}
    \Pr[F^t_i =x] &= S^i(\mu, x)\label{teq:xda}\\
    \Pr[F^t_{t+1}=r\emid F^t_{i}=x] 
        &= S^{t-i}(x,\nu)\label{teq:xdb}\\
    \Pr[F^t_{t+1}=r] &= S^t(\mu,\nu)\label{teq:xdc}\\
    \Pr[F^{t}_{i+1}=y\text{ and }F^t_{t+1}=r\emid 
        F^{t}_i =  x] &= S(x,y)S^{t-i-1}(y,\nu)\label{teq:xde}.
  \end{align}
Using Bayes' rule with \autoref{teq:xda},
\eqref{teq:xdb} and \eqref{teq:xdc} gives (i).
Combining \autoref{teq:xdb}, \eqref{teq:xde}
and the observation that $F^t$ is Markovian
gives (ii).
\end{proof}

With \autoref{lem:xdist} we confirm that the random variable $X=(X_{-1},X_0,\ldots,X_{t+1})$ is Markovian; 
in particular a time inhomogeneous random walk on $\OC$.
Next we observe that the random variable $B^t$ conditioned
on $B^t_{-1}=r$ is precisely $X$ also.

\begin{lemma}
\label{lem:xbackwards}
Under our assumption $S^t(\mu,\nu)>0$,
\begin{enumerate}[(i)]
  \item we have $\dist(X) = \dist(B^t\emid B^t_{-1}=r)$, and
  \item if $S^i(\mu, x)>0$, we have 
      $\Pr[X_{i-1}=y\emid X_{\ge i} = x_{\ge i}] = 
          \frac{S(x_i,y)S^{i-1}(y,\mu)}{S^i(x_i,\mu)}$.
\end{enumerate}
\end{lemma}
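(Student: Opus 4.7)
The plan is to derive both parts from a single observation: because $S$ is symmetric, writing out the joint density of $F^t$ along a closed path is invariant under reversal, so $F^t$ and $B^t$ have identical distributions on paths that start and end at $r$.

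For part (i), I would compute, for any path $(r,x_0,x_1,\ldots,x_t,r)$,
\begin{align*}
\Pr\sparen{F^t=(r,x_0,\ldots,x_t,r)} &= \mu(x_0)\,S(x_0,x_1)\cdots S(x_{t-1},x_t)\,\nu(x_t),\\
\Pr\sparen{B^t=(r,x_0,\ldots,x_t,r)} &= \nu(x_t)\,S(x_t,x_{t-1})\cdots S(x_1,x_0)\,\mu(x_0),
\end{align*}
straight from the constructions in \autoref{sec:refwalk}. By symmetry $S(x_i,x_{i+1})=S(x_{i+1},x_i)$, so the two expressions coincide. Summing over all such paths gives $\Pr[F^t_{t+1}=r]=\Pr[B^t_{-1}=r]=S^t(\mu,\nu)$, and normalizing then yields $\dist(X)=\dist(B^t\emid B^t_{-1}=r)$, which is (i).

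For part (ii), once (i) is in hand, $X$ is Markov both forward and backward (being a conditioned Markov chain). So $\Pr[X_{i-1}=y\emid X_{\ge i}=x_{\ge i}]=\Pr[X_{i-1}=y\emid X_i=x_i]$, which I would then compute by Bayes' rule using the marginals and forward transition probabilities already supplied by \autoref{lem:xdist}:
\begin{align*}
\Pr\sparen{X_{i-1}=y\emid X_i=x_i}
&=\frac{\Pr\sparen{X_i=x_i\emid X_{i-1}=y}\,\Pr\sparen{X_{i-1}=y}}{\Pr\sparen{X_i=x_i}}\\
&=\frac{S(y,x_i)S^{t-i}(x_i,\nu)}{S^{t-i+1}(y,\nu)}\cdot\frac{S^{i-1}(\mu,y)S^{t-i+1}(y,\nu)}{S^i(\mu,x_i)S^{t-i}(x_i,\nu)}.
\end{align*}
The $(t-i)$- and $(t-i+1)$-step factors cancel, leaving $S(y,x_i)S^{i-1}(\mu,y)/S^i(\mu,x_i)$, which, using symmetry of $S$ (so $S^{i-1}(\mu,y)=S^{i-1}(y,\mu)$, etc.), is exactly the claimed formula.

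There is no genuine obstacle here: the only thing to watch is that in the conditional expressions the arbitrary-state transitions out of $\Omega$ at the endpoints drop out cleanly (they contribute to events disjoint from $\{F^t_{t+1}=r\}$ and $\{B^t_{-1}=r\}$, hence do not appear after conditioning), and that the Markovianity used in (ii) is the one already noted at the end of \autoref{sec:refwalk}. The essential content is just time-reversal symmetry of a reversible chain.
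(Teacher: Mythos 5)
Your proof is correct and takes essentially the same approach as the paper's. Part (i) is the same path-by-path density calculation using symmetry of $S$, and part (ii) is a short Bayes'-rule computation that the paper obtains equivalently by pointing to the symmetric argument of \autoref{lem:xdist}(ii).
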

\begin{proof}
For any $x\in\OC^{t+3}$ with $x_{t+1} = r$,
  \begin{align*}
    \Pr[X = x] &= \frac{\mu(x_0)\prod_{i=1}^t S(x_{i-1},x_i)\nu(x_t)}
    {S^t(\mu,\nu)}\\
    &=\frac{\nu(x_t)\prod_{i=1}^t S(x_i, x_{i-1}) \mu(x_0)}
        {S^t(\mu,\nu)}
            &&\text{(as $S$ is symmetric)}\\
    &= \frac{\Pr[B^t = x]}{\Pr[B^t_{t+1} = r]}
     = \Pr[B^t = x \emid B^t_{t+1} = r]
        &&\text{(by Bayes' rule).}
  \end{align*}
This proves (i). Given (i), the proof of (ii) is 
the same as \autoref{lem:xdist}(ii).
\end{proof}

\begin{lemma}
\label{lem:xvsf}
We have $\kldiv{X}{F^t} = \kldiv{X}{B^t} = -\log S^t(\mu,\nu)$.
\end{lemma}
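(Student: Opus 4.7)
The plan is to reduce both equalities to a single elementary identity: for any discrete random variable $Y$ and event $E$ with $\Pr[E]>0$, one has $\kldiv{Y\emid E}{Y} = -\log \Pr[E]$. This is immediate, because for every $y$ in the support of $(Y\emid E)$, Bayes' rule gives $\log(\Pr[Y=y\emid E]/\Pr[Y=y]) = -\log\Pr[E]$, a constant that pulls out of the expectation defining the divergence; equivalently, it is the equality case of \autoref{lem:klcond} applied to $\dist(Y\emid E)$, $\dist(Y)$ with $\Psi = E\cap\supp(Y)$.

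With this identity in hand, both equalities are essentially one-line consequences. For $\kldiv{X}{F^t}$: by the very definition of $X$, $\dist(X)=\dist(F^t\emid F^t_{t+1}=r)$, and the construction of $F^t$ in \autoref{sec:refwalk} yields $\Pr[F^t_{t+1}=r]=S^t(\mu,\nu)$ (this return probability was already computed inside the proof of \autoref{lem:xdist}). For $\kldiv{X}{B^t}$: \autoref{lem:xbackwards}(i) identifies $\dist(X)$ with $\dist(B^t\emid B^t_{-1}=r)$, so it remains only to compute $\Pr[B^t_{-1}=r]$. Walking the backward construction of $B^t$ from time $t+1$ down, one finds $\Pr[B^t_0=y]=S^t(\nu,y)$ for $y\in\Omega$, and hence
\[\Pr[B^t_{-1}=r]=\sum_{y\in\Omega} S^t(\nu,y)\,\mu(y) = S^t(\nu,\mu) = S^t(\mu,\nu),\]
the last equality being the symmetry of $S$ (and therefore of $S^t$). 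Invoking the general identity twice, both divergences equal $-\log S^t(\mu,\nu)$.

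I do not expect any real obstacle. The substantive content --- that conditioning the backward walk on returning to $r$ yields the same distribution as conditioning the forward walk on reaching $r$ --- has already been packaged into \autoref{lem:xbackwards}. The present lemma is simply its information-theoretic repackaging, and the proof reduces to the conditional-divergence identity above together with a routine verification that the two return probabilities are both $S^t(\mu,\nu)$.
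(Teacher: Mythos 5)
Your proposal is correct and takes essentially the same approach as the paper: both invoke the equality case of \autoref{lem:klcond} (your general conditioning identity is precisely that equality case) applied to $\dist(X)=\dist(F^t\emid F^t_{t+1}=r)$ and, via \autoref{lem:xbackwards}(i), to $\dist(X)=\dist(B^t\emid B^t_{-1}=r)$. You spell out the computation of $\Pr[B^t_{-1}=r]=S^t(\mu,\nu)$ which the paper leaves implicit, but the argument is the same.
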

\begin{proof}
Recall that $\Pr[F^t_{t+1}=r] = S^t(\mu,\nu)$. 
Since $X$ is obtained from $F^t$ by
conditioning on $F^t_{t+1}=r$, the equality
criteria of \autoref{lem:klcond} are
fulfilled and thus $\kldiv{X}{F^t}=-\log S^t(\mu,\nu)$.
The derivation of $\kldiv{X}{B^t}$ is
identical as per \autoref{lem:xbackwards}(i).
\end{proof}

\subsection{Longer random walks}
\label{sec:zdef}
Let $J$ be an integer valued random variable taking
the values $\set{1,2,\ldots,t}$, each with equal probability. For each fixing
$j$ of $J$ we perform a random walk $Z\emid J=j$ on 
$\OC$ that evolves
in time steps $-1,0,1,\ldots,t,t+1,t+2,t+3$ as follows.

The random walk starts at $r$ and for each time step 
$-1\le i <j$, proceeds according to the
transition kernel $\dist(X_{i+1}\emid X_{i})$. At time step
$j$, the random walk proceeds according to 
$\dist(X_{j-1}\emid X_{j})$ and in time steps 
$j< i \le t+3$ proceeds according to the transition kernel 
$\dist(X_{i-1}\emid X_{i-2})$. We view $Z$ as a joint random
variable $Z=(Z_{-1},Z_0,\ldots,Z_{t+3})$, where $Z_i$ denotes
the location of the random walk at time step $i$.

\begin{lemma}
\label{lem:ydist}
For $-1\le i\le j$, we have 
$\dist(Z_i\emid J=j) = \dist(X_i)$ and for 
$j< i \le t + 3$,
$\dist(Z_i\emid J=j) = \dist(X_{i-2})$.
\end{lemma}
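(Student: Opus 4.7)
The plan is to prove both claims by a single induction on $i$, exploiting the fact that the kernels defining $Z \emid J = j$ are engineered to replay the kernels of $X$, with a single ``reversal'' at time step $j$ that offsets the $X$-index by $-2$ from that point on.

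For the base case $i = -1$, both $Z_{-1}$ and $X_{-1}$ equal $r$ deterministically. For the inductive step in the forward regime $-1 \le i < j$, I would assume $\dist(Z_i \emid J=j) = \dist(X_i)$ and use the prescribed kernel $\dist(Z_{i+1} \emid Z_i, J=j) = \dist(X_{i+1} \emid X_i)$; marginalizing against the inductive hypothesis yields $\dist(Z_{i+1} \emid J=j) = \dist(X_{i+1})$, which carries the first claim up to $i = j$.

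The reversal at $i = j$ is the key transition: starting from $\dist(Z_j \emid J=j) = \dist(X_j)$, the construction prescribes $\dist(Z_{j+1} \emid Z_j, J=j) = \dist(X_{j-1} \emid X_j)$, so marginalizing gives $\dist(Z_{j+1} \emid J=j) = \dist(X_{j-1}) = \dist(X_{(j+1)-2})$, which seeds the second regime. For $j < i \le t+2$ the inductive step has the same flavor: combining $\dist(Z_i \emid J=j) = \dist(X_{i-2})$ with the prescribed kernel $\dist(X_{i-1} \emid X_{i-2})$ gives $\dist(Z_{i+1} \emid J=j) = \dist(X_{i-1}) = \dist(X_{(i+1)-2})$, closing the induction.

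I do not expect any substantive obstacle---the lemma is essentially a definitional unwinding of the construction of $Z$. The only care required is the bookkeeping around the reversal, together with the observation that each marginalization step is legitimate because the kernels $\dist(X_{i+1} \emid X_i)$ and $\dist(X_{i-1} \emid X_i)$ transport the marginal $\dist(X_i)$ forward, respectively backward, to exactly $\dist(X_{i+1})$ and $\dist(X_{i-1})$; this is precisely the Markov structure of $X$ recorded in Lemma~\ref{lem:xdist}(ii) and Lemma~\ref{lem:xbackwards}(ii).
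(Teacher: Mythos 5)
Your proof is correct and is essentially the paper's argument: the paper also proceeds step by step, noting that $Z$ and $X$ share the same start and forward kernels up to time $j$, that the single backward step at time $j$ maps $\dist(X_j)$ to $\dist(X_{j-1})$ (invoking \autoref{lem:xbackwards}), and that the subsequent forward kernels carry the offset marginals along. The only cosmetic difference is that you spell out the induction more explicitly, while the paper first states the identity $\dist(F^t\emid F^t_{t+1}=r)=\dist(X)=\dist(B^t\emid B^t_{-1}=r)$ and the Markovianity of $X$ as the governing facts before unwinding them.
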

\begin{proof}
This follows from the fact that 
\begin{align*}
\dist(F^t \emid F^t_{t+1} = r) = \dist(X)
    = \dist(B^t \emid B^t_{-1} = r)
\end{align*}
and that $X$ is an actual random walk (i.e., Markovian)
on $\OC$.

To be more explicit, we have $\dist(X_i)=\dist(Z_i)$ for 
$i\le j$ since both $X$ and $Z$ start at $r$ in time step 
$-1$ and evolve according to the transition kernel
$\dist(X_{i+1}\emid X_i)$ for $i=-1,\ldots, j-1$. Since
$\dist(X_j) = \dist(Z_j)$ and $Z$ proceeds according to
$\dist(X_{i-1}\emid X_j)$ at time step $j$, by 
\autoref{lem:xbackwards}, $\dist(X_{j-1}) = \dist(Z_{j+1})$.
Finally in time steps $i>j$, we have $\dist(Z_i)=\dist(X_{i-2})$
since $\dist(X_{j-1}) = \dist(Z_{j+1})$ and $Z$ proceeds according
to $\dist(X_{i-1}\emid X_{i-2})$.
\end{proof}

From this we can deduce that $Z$ always
ends up in $r$ at time step $t+3$. We next
argue that if $X$ does not diverge too much from
the reference random walk $F^t$, then $Z$ does 
not diverge too much from $F^{t+2}$.

\begin{lemma}
\label{lem:zdiv}
We have
\begin{align*}
\kldiv{Z\emid J}{F^{t+2}} = \frac{t+2}{t}\kldiv{X}{F^t} 
  &- \frac{1}{t}\nparen{
    \kldiv{X_0}{F^t_0} + 
    \kldiv{X_{t+1}\emid X_{t}}{F^t_{t+1}\emid F^t_t}
  }\\
  &- \frac{1}{t}\nparen{
    \kldiv{X_t}{B^t_t} + 
    \kldiv{X_{-1}\emid X_{0}}{B^t_{-1}\emid B^t_0}
  }.
\end{align*}
\end{lemma}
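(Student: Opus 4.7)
The plan is to compute $\kldiv{Z\emid J=j}{F^{t+2}}$ for each fixed $j$ by a direct expansion of the density ratio $p_j/q$, where $p_j\defeq\dist(Z\emid J=j)$ and $q\defeq\dist(F^{t+2})$ (both supported on the event $x_{-1}=x_{t+3}=r$), and then average over $J$. Writing out the step-by-step kernels from the definition of $Z\emid J=j$ in \autoref{sec:zdef} and substituting the closed-form transitions given in \autoref{lem:xdist} and \autoref{lem:xbackwards}, each forward or backward intermediate factor $\Pr[X_k=y\emid X_{k-1}=x]$ contributes a bare edge weight $S(x,y)$ times a ratio $S^{t-k}(y,\nu)/S^{t-k+1}(x,\nu)$ (or, on the backward side, $S^{k-1}(y,\mu)/S^k(x,\mu)$). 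Consecutive ratios cancel pairwise, and after dividing by $q(x)=\mu(x_0)\prod_{i=1}^{t+2}S(x_{i-1},x_i)\,\nu(x_{t+2})$ only the endpoint factors survive, so I expect
\begin{align*}
\frac{p_j(x)}{q(x)}=\frac{S^{t-j}(x_j,\nu)\,S^{j-1}(x_{j+1},\mu)}{S^t(\mu,\nu)\cdot S^j(x_j,\mu)\,S^{t-j+1}(x_{j+1},\nu)}.
\end{align*}

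Next, I would take the expectation of $\log(p_j/q)$ against $p_j$. By \autoref{lem:ydist}, $Z_j\sim X_j$ and $Z_{j+1}\sim X_{j-1}$, so the four remaining log-terms become expectations under the law of $X$. Setting $A_i\defeq \E\log S^{t-i}(X_i,\nu)$ and $B_i\defeq \E\log S^i(X_i,\mu)$ and averaging over $J$ uniform on $\set{1,\ldots,t}$, the per-$j$ contribution $A_j+B_{j-1}-B_j-A_{j-1}$ sums by a reindexing $k=j-1$ to the telescoping form $\sum_{j=1}^t A_j-\sum_{k=0}^{t-1} A_k+\sum_{k=0}^{t-1} B_k-\sum_{j=1}^t B_j=(A_t-A_0)+(B_0-B_t)$. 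Hence
\begin{align*}
\kldiv{Z\emid J}{F^{t+2}}=-\log S^t(\mu,\nu)+\tfrac{1}{t}\nparen{A_t-A_0+B_0-B_t}.
\end{align*}

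Finally, to match the claimed right-hand side I would evaluate each of the four boundary divergences directly. Using \autoref{lem:xdist}(i) together with $\dist(F^t_0)=\mu$ gives $\kldiv{X_0}{F^t_0}=A_0-\log S^t(\mu,\nu)$. Because $X_{t+1}=r$ deterministically while $\Pr[F^t_{t+1}=r\emid F^t_t=x]=\nu(x)$, the conditional divergence $\kldiv{X_{t+1}\emid X_t}{F^t_{t+1}\emid F^t_t}$ collapses to $-A_t$. Symmetrically, using $\dist(B^t_t)=\nu$ and the symmetry of $S$ to identify $S^t(\mu,x)=S^t(x,\mu)$, one obtains $\kldiv{X_t}{B^t_t}=B_t-\log S^t(\mu,\nu)$ and $\kldiv{X_{-1}\emid X_0}{B^t_{-1}\emid B^t_0}=-B_0$. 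Summing these four and combining with $\kldiv{X}{F^t}=-\log S^t(\mu,\nu)$ from \autoref{lem:xvsf} reproduces the claimed identity. The one real obstacle is the telescoping bookkeeping in the first step: the many forward and backward intermediate factors from \autoref{lem:xdist} and \autoref{lem:xbackwards} must be cancelled pairwise with the correct endpoint factors surviving at positions $j$ and $j+1$. Once the closed form of $p_j/q$ above is in hand, the rest of the argument is routine algebra.
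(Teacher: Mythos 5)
Your proof is correct, and I've verified the key computation: the density ratio does telescope to exactly
\begin{align*}
\frac{p_j(x)}{q(x)}=\frac{S^{t-j}(x_j,\nu)\,S^{j-1}(x_{j+1},\mu)}{S^t(\mu,\nu)\cdot S^j(x_j,\mu)\,S^{t-j+1}(x_{j+1},\nu)},
\end{align*}
the averaging yields $-\log S^t(\mu,\nu)+\tfrac1t(A_t-A_0+B_0-B_t)$, and your evaluations of the four boundary divergences ($\kldiv{X_0}{F^t_0}=A_0-\log S^t(\mu,\nu)$, $\kldiv{X_{t+1}\emid X_t}{F^t_{t+1}\emid F^t_t}=-A_t$, $\kldiv{X_t}{B^t_t}=B_t-\log S^t(\mu,\nu)$, $\kldiv{X_{-1}\emid X_0}{B^t_{-1}\emid B^t_0}=-B_0$) are exactly right and recombine to the claimed formula.

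The paper's proof reaches the same identity, but it is organized differently. Rather than expanding the raw densities, the paper applies the chain rule (\autoref{lem:klchain}) to both $Z\emid J=j$ and $F^{t+2}$, regroups the per-step conditional divergences to recover $\kldiv{X}{F^t}$ plus two extra terms $\kldiv{X_{j-1}\emid X_j}{B^t_{j-1}\emid B^t_j}$ and $\kldiv{X_j\emid X_{j-1}}{F^t_j\emid F^t_{j-1}}$, and then, after averaging over $j$, re-collapses those sums into $\kldiv{X}{B^t}$ and $\kldiv{X}{F^t}$ minus boundary terms, invoking $\kldiv{X}{F^t}=\kldiv{X}{B^t}$ from \autoref{lem:xvsf}. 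The content is the same telescoping cancellation; your version never needs the identity $\kldiv{X}{F^t}=\kldiv{X}{B^t}$ explicitly and keeps everything at the density level, which is arguably more transparent about where the four boundary quantities come from, at the cost of some index bookkeeping. The paper's version stays within the divergence-chain-rule calculus, which fits better with the machinery reused elsewhere (in particular in \autoref{sec:logconvex}), and makes the structural role of $B^t$ explicit.
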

\begin{proof}
For a fixing $j$ of $J$, we have 
\begin{align*}
\kldiv{Z\emid J=j}{F^{t+2}} 
    &= \sum_{i=-1}^{j-1}\kldiv{X_{i+1} \emid X_{i}}
        {F^{t+2}_{i+1} \emid F^{t+2}_{i}}
    + \kldiv{X_{j-1}\emid X_j}{F^{t+2}_{j+1}
                    \emid F^{t+2}_{j}}\\
    &\quad\quad
        + \sum_{i=j+1}^{t+2}\kldiv{X_{i-1}\emid X_{i-2}}
      {F^{t+2}_{i+1}\emid F^{t+2}_{i}},
\end{align*}
where we have used the chain rule for divergence (cf.\ 
\autoref{lem:klchain}), the fact that $Z\emid J=j$ and 
$F^{t+2}$ are Markovian and \autoref{lem:ydist}. 
Recalling that $F^t$ and $B^t$ evolve according 
to $S$ in time steps $0,1,\ldots,t-1$, and 
$\dist(F^t_{t+1}\emid F^t_t)
    =\dist(F^{t+2}_{t+3}\emid F^{t+2}_{t+2})$,
we write
\begin{align*}
\kldiv{Z\emid J=j}{F^{t+2}}
  &= \sum_{i=-1}^t \kldiv{X_{i+1}\emid X_i}{F^t_{i+1}\emid F^t_i}\\
     &\qquad\qquad
     + \kldiv{X_{j-1}\emid X_j}{B^t_{j-1}\emid B^t_{j}}
     + \kldiv{X_j\emid X_{j-1}}{F^t_{j}\emid F^t_{j-1}}\\
  &= \kldiv{X}{F^t} 
    + \kldiv{X_{j-1}\emid X_j}{B^t_{j-1}\emid B^t_{j}}
    + \kldiv{X_j\emid X_{j-1}}{F^t_{j}\emid F^t_{j-1}}
\end{align*}
again by the chain rule for divergence 
(\autoref{lem:klchain}) and the
fact that $X$ and $F^t$ are Markovian. Now taking 
an expectation over all $j\in \supp(J)$, we have
\begin{align*}
\kldiv{Z\emid J}{F^{t+2}} 
  &= \frac{1}{t}\sum_j\kldiv{Z\emid J=j}{F^{t+2}}\\
  &= \kldiv{X}{F^t} 
    + \frac{1}{t}\sum\nparen{\kldiv{X_{j-1}\emid X_j}
                            {B^t_{j-1}\emid B^t_{j}}
    + \kldiv{X_j\emid X_{j-1}}{F^t_{j}\emid F^t_{j-1}}}\\
  &= \kldiv{X}{F^t}
  + \frac{\kldiv{X}{B^t} 
      - \kldiv{X_t}{B^t_t}
      - \kldiv{X_{-1}\emid X_0}{B^t_{-1}\emid B^t_0}
    }{t}\\
  &\qquad\qquad\qquad\;+ \frac{\kldiv{X}{F^t} 
      - \kldiv{X_0}{F^t_0}
      - \kldiv{X_{t+1}\emid X_t}{F^t_{t+1}\emid F^t_t}
    }{t}.
\end{align*}
Since $\kldiv{X}{B^t} = \kldiv{X}{F^t}$ by 
\autoref{lem:xvsf}, collecting the $\kldiv{X}{F^t}$
terms we finish the proof.
\end{proof}

Finally, we lower bound the negative terms in the statement
of \autoref{lem:zdiv}.
\begin{lemma}
\label{lem:negterms}
We have
\begin{align}
\kldiv{X_0}{F^t_0}
    + \kldiv{X_{-1}\emid X_{0}}{B^t_{-1}\emid B^t_0}
    &\ge \ryent{2}{\mu} \defeq -\log \lnorm{\mu}{2}^2,
        \text{ and} \label{teq:negterm1}\\
\kldiv{X_t}{B^t_t}
    + \kldiv{X_{t+1}\emid X_{t}}{F^t_{t+1}\emid F^t_t}
    &\ge \ryent{2}{\nu} \defeq -\log \lnorm{\nu}{2}^2,
        \label{teq:negterm2}
\end{align}
where $\ryent{2}{\cdot}$ denotes the 
second order Rényi entropy. 
\end{lemma}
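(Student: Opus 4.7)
The plan is to compute all four divergences in closed form, sum them, and observe that the result reduces to a single expectation to which one application of Jensen's inequality (concavity of $\log$) yields exactly the Rényi entropy bound on the right hand side.

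For \eqref{teq:negterm1}, I would first read off the relevant distributions from the construction of $F^t,B^t$ in \autoref{sec:refwalk} together with \autoref{lem:xdist}. By definition $\dist(F^t_0)=\mu$, and by \autoref{lem:xdist}(i), $\Pr[X_0=x]=\mu(x)S^t(x,\nu)/S^t(\mu,\nu)$, so $\kldiv{X_0}{F^t_0}$ has an explicit closed form. The conditional divergence is cleaner still: since $F^t_{-1}=r$ is deterministic and $X$ is $F^t$ conditioned on $F^t_{t+1}=r$, the variable $X_{-1}$ equals $r$ almost surely, so $\dist(X_{-1}\emid X_0=x)$ is the point mass at $r$; meanwhile the construction of $B^t$ places mass exactly $\mu(x)$ on $r$ under $\dist(B^t_{-1}\emid B^t_0=x)$. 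Hence $\kldiv{X_{-1}\emid X_0=x}{B^t_{-1}\emid B^t_0=x}=-\log\mu(x)$.

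Summing the two divergences, the $\log\mu(x)$ contribution combines with the $\kldiv{X_0}{F^t_0}$ sum and the left hand side of \eqref{teq:negterm1} collapses to
\begin{align*}
\sum_{x\in\Omega}\frac{\mu(x)S^t(x,\nu)}{S^t(\mu,\nu)}\log\frac{S^t(x,\nu)}{\mu(x)S^t(\mu,\nu)}
=\E_{x\sim X_0}\sparen{\log\frac{1}{f(x)}}
\end{align*}
where $f(x)\defeq \mu(x)S^t(\mu,\nu)/S^t(x,\nu)$. Applying \autoref{lem:jensen} to the concave map $\log$ gives the lower bound $-\log\E[f(X_0)]$, and this expectation telescopes: $\E[f(X_0)]=\sum_x\mu(x)^2=\lnorm{\mu}{2}^2$, which yields precisely $\ryent{2}{\mu}$.

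The inequality \eqref{teq:negterm2} follows by the same argument with the pairs $(\mu,F^t)$ and $(\nu,B^t)$ interchanged; the fact that $X$ has the same distribution whether viewed as a forward walk from $F^t$ or a backward walk from $B^t$, as established in \autoref{lem:xbackwards}(i), is exactly what legitimizes this swap. I do not expect a genuine obstacle. The only real insight is noticing that the conditional divergence compared against the ``wrong direction'' walk ($X_{-1}$ against the backward walk $B^t$, and dually $X_{t+1}$ against the forward walk $F^t$) is what produces the bare $-\log\mu(x)$ (resp.\ $-\log\nu(x)$) factor that merges with the other divergence to leave a single Jensen-amenable expectation; the main care needed is bookkeeping which walk each conditional is scored against.
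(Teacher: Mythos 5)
Your proof is correct and follows the paper's argument essentially verbatim: compute $\Pr[X_0=x]$ and the deterministic-to-$r$ conditional divergence $-\log\mu(x)$, merge them into a single expectation, and apply Jensen to the concave $\log$. One small bookkeeping slip: $\E[f(X_0)]=\sum_{x\in\supp(X_0)}\mu(x)^2$, which is $\le\|\mu\|_2^2$ rather than equal to it in general (the sum ranges only over states $x$ with $\mu(x)S^t(x,\nu)>0$); the paper handles this with an extra inequality step, and since $-\log$ is decreasing the bound still goes the right way, so your conclusion stands.
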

\begin{proof}
We only prove the first inequality as the
second one is symmetric. By \autoref{lem:xdist},
we have
\begin{align*}
\kldiv{X_0}{F^t_0} &= 
  \sum_{x\in \Omega}
      \frac{\mu(x)S^{t}(x,\nu)}{S^t(\mu,\nu)}
      \log \frac{S^{t}(x,\nu)}{S^t(\mu,\nu)}
          \quad\text{ and}\\
\kldiv{X_{-1}\emid X_{0}}{B^t_{-1}\emid B^t_0} &=
  \sum_{x\in \Omega}
      \frac{\mu(x)S^{t}(x,\nu)}{S^t(\mu,\nu)}
      \log\frac{1}{\mu(x)}.
\end{align*}
Let $\Psi=\supp(X_0)$. By adding the two terms we get
\begin{align}
\kldiv{X_0}{F^t_0}+
\kldiv{X_{-1}\emid X_{0}}{B^t_{-1}\emid B^t_0} &=
  -\sum_{x\in\Psi}
      \frac{\mu(x)S^{t}(x,\nu)}{S^t(\mu,\nu)}
      \log\frac{\mu(x)S^t(\mu,\nu)}{S^{t}(x,\nu)}
          \nonumber\\
  &\ge -\log \sum_{x\in\Psi}\frac%
    {\mu(x)^2 S^{t}(x,\nu)S^t(\mu,\nu)}
      {S^t(\mu,\nu)S^{t}(x,\nu)}
          \label{eq:norm-u-jens}\\
  &=   -\log \sum_{x\in\Psi} \mu(x)^2
      \nonumber\\
  &\ge -\log \sum_{x\in\Omega} \mu(x)^2\;,
      \label{eq:psivsomega}
\end{align}
where the first inequality is by concavity of 
$z\mapsto\log z$
and the second inequality is true as the 
summands are nonnegative.
\end{proof}

\subsection{Combining the inequalities}
\label{sec:bd-final}
\begin{proof}[Proof of \autoref{thm:blakley-dixon}]
Note that $Z_{-1}$ is fixed to $r$ by definition 
(cf.\ \autoref{sec:zdef}) and $Z_{t+3}$ is fixed to $r$
by \autoref{lem:ydist}. Therefore by 
\autoref{lem:klcond} we have
\begin{align}
-\log S^{t+2}(\mu,\nu)
  &\le \kldiv{Z}{F^{t+2}} \label{eq:kvk2} \\
  &= \kldiv{Z\emid J}{F^{t+2}} - \muti{J}{Z}
      \label{eq:ztozcondj}\\
  &\le \kldiv{Z\emid J}{F^{t+2}}
      \label{eq:mutrelax}\\
  &\le \frac{t+2}{t}\kldiv{X}{F^t} 
      + \frac{\log\lnorm{\mu}{2}^2 + \log\lnorm{\nu}{2}^2}
            {t}.
        \nonumber
\intertext{%
Here \autoref{eq:ztozcondj} follows from the chain rule 
for the divergence (\autoref{lem:klchain}) 
and the definition of mutual information 
(cf.\ \autoref{eq:muti-def}),
\autoref{eq:mutrelax} follows from
the nonnegativity of mutual information and
the last line follows from 
\autoref{lem:zdiv} and \autoref{lem:negterms}.
Plugging in $\kldiv{X}{F^t}=-\log S^t(\mu,\nu)$, provided by 
\autoref{lem:xvsf}, we obtain}
  -\log S^{t+2}(\mu,\nu)
    &\le -\frac{t+2}{t}\log S^t(\mu,\nu) 
         + \frac{\log\lnorm{\mu}{2}^2 + \log\lnorm{\nu}{2}^2}
            {t}.
        \nonumber
\end{align}
Arranging, we get
\begin{align*}
\lnorm{\mu}{2}^2 \lnorm{\nu}{2}^2\iprod{\nu}{S^{t+2}\mu}^t
  \ge
    \iprod{\nu}{S^t\mu}^{t+2}
\end{align*}
and substituting $\mu = u/\lone{u}$, 
$\nu = v/\lone{v}$, and recalling that $u,v$ 
are unit vectors, we obtain
\begin{align}
\iprod{v}{S^{t+2}u}^t 
&\ge \iprod{v}{S^t u}^{t+2}, \text{ i.e.,} \nonumber\\
m_{t+2} &\ge m_t^{1+2/t}\label{eq:absclaim1}.
\end{align}
By applying this inequality iteratively, we get
$\iprod{v}{S^{k}u}^t\ge\iprod{v}{S^t u}^{k}$ or
written differently
$m_k^{1/k}\ge m_t^{1/t}$
as long as $k>t$ and $k,t$ have the same parity.

Next we characterize the equality conditions of 
\autoref{eq:absclaim1}. Let us verify the
`if' direction of the statement. 
Clearly if $\iprod{v}{S^ku}=0$ then we have $\iprod{v}{S^ku} = \iprod{v}{S^tu}$ by the
first part of the theorem and the fact that 
$\iprod{v}{S^tu}\ge 0$. If $S^2u = \lambda^2 u$, 
then $m_{2t} = \lambda^{2t}$ and if $Su=\lambda v$ and $Sv=\lambda u$, then $m_{2t+1}=\lambda^{2t+1}$, therefore
in both $t$ even and $t$ odd cases 
the inequality holds with equality.

Conversely, if 
$0\neq \iprod{v}{S^{k+2}u}=\iprod{v}{S^ku}$, then the
inequalities
\eqref{eq:norm-u-jens}, 
\eqref{eq:psivsomega},
\eqref{eq:kvk2},
 and
\eqref{eq:ztozcondj} must hold with equality.
Combining the assumption that \autoref{eq:psivsomega}
and \eqref{eq:norm-u-jens} 
hold with equality with the strict concavity of
$z\mapsto \log z$ and Jensen's lemma, we get that 
$S^t\nu = \lambda_1\mu + \sigma_1$ for some $\lambda_1\le 1$ and $\sigma_1\in\reals^{\Omega}_{+}$ satisfying 
$\supp(\sigma_1)\cap\supp(\mu)=\emptyset$. This also means that $\Pr[X_0 = x] = \mu(x)^2/\lnorm{\mu}{2}^2$ for $x\in \Omega$ and a similar and symmetrical argument shows that
$\Pr[X_t = x] = \nu(x)^2/\lnorm{\nu}{2}^2$ for 
$x\in\Omega$.
Assuming \autoref{eq:ztozcondj} holds 
with equality leads to $\muti{Z}{J}=0$, which in turn
shows that $\dist(X_i)=\dist(X_{i+2})$ for 
$i=0,\ldots,t-2$. 
Let 
$X^{k+2} \defeq \nparen{F^{t+2}\emid F^{t+2}_{t+3}=r}$.
From our assumption $0\neq \iprod{v}{S^{k+2}u}=\iprod{v}{S^ku}$ we conclude that $\dist(Z)=\dist(X^{t+2})$
as $X^{t+2}$ is the minimally divergent distribution from $F^{t+2}$ among distributions on walks ending at state $r$. Since 
$\dist(X^{t+2}_2)=\dist(X^{t+2}_0)=\dist(X_0)$, and $S^t\nu=\lambda_1\mu+\sigma_1$ we get that 
$S^2\mu = \lambda_2 \mu + \sigma_2$  for some $\lambda_2\le 1$ and $\sigma_2\in\reals^{\Omega}_{+}$ satisfying 
$\supp(\sigma_2)\cap\supp(\mu)=\emptyset$. Now we will show that it must be that $\sigma_2=0$. Suppose for sake of contradiction that $\sigma_2(z)>0$ for some $z\notin \supp(\mu)$. There exists $x,y\in\Omega$ so that $\mu(x)S(x,y)S(y,z)>0$. If $y\in\supp(X_1)$ then adding to a walk $w\in \supp(X)$ with $w_1=y$ the loop $(y,z)(z,y)$
we obtain a length $t+2$ walk which is not in the support of $X^{t+2}$ as $\dist(X^{t+2}_2)=\dist(X_2)=\dist(X_0)$,
which contradicts the fact that 
$X^{t+2}$ is defined as $F^{t+2}\emid F^{t+2}_{t+3}=r$.
If on the other hand $y\notin\supp(X_1)$, 
adding the loop $(x,y)(y,x)$ to a walk with $w_0=x$ 
leads to a walk which is not in the support of $X^{t+2}$, 
which is a contradiction. Having established $S^2\mu = \lambda \mu$, we complete the proof for even $t$
by recalling that $\dist(X_0)=\dist(X_t)$ therefore $\mu=\nu$. For $t$ odd, the last argument shows that
 $S\mu = \lambda_3\nu + \sigma_3$ for some $\lambda_3\le 1$ and $\sigma_3\in\reals^{\Omega}_{+}$ satisfying 
$\supp(\sigma_3)\cap\supp(\mu)=\emptyset$. 
It remains to show that $\sigma_3=0$ 
by using the assumption that $\dist(Z)=\dist(X^{t+2})$. Suppose $\sigma_3(y)>0$ for some $y\notin\supp(\nu)$. There exists $x\in\Omega$ such that $\mu(x)S(x,y)>0$. Adding the loop $(x,y)(y,x)$ to
a walk $w$ with $w_{t-1}=x$ leads to a length $t+2$ walk
which is not in the support of $X^{t+2}$. A symmetrical
argument shows that $\lambda'\mu = S\nu$. This completes the proof.
\end{proof}
\section{Near log-convexity of 
         $t\mapsto m_{2t}$ and $t\mapsto m_{2t+1}$}
\label{sec:logconvex}
In this section we would like to prove the following
improvement to \autoref{eq:absclaim1}: for all $\eps>0$ 
there exists a $\delta>0$ such that
\begin{align}
m_{t+2}
\ge m_{t}^{1+2/t}\cdot
\min\set{t^{1-\eps}, \ceil{\delta\frac{m_t^{1-2/t}}{m_{t-2}}}},
\quad \forall t \ge 2.
\label{eq:logconv}
\end{align}
Recall that in proving \autoref{eq:absclaim1}, 
in line \ref{eq:mutrelax}, we used the relaxation 
$\muti{J}{Z}\ge 0$.
Note that $J$ is uniformly distributed on $[t]$
therefore has $\log t$ bits of entropy and provided
that it is possible to infer $J$ from $Z$ 
(i.e., it is possibly to locate the time reversal we 
have inserted in $Z$)
the $\muti{J}{K}$ term appears to be large enough to 
recover the factor
\begin{align*}
\min\set{t^{1-\eps}, 
  \ceil{\delta\frac{m_t^{1-2/t}}{m_{t-2}}}}.
\end{align*}
Note moreover that intuitively we are able to 
infer $J$ from $Z$ better when 
$\frac{m_t^{1-2/t}}{m_{t-2}}$ is high, 
as in such cases on average for a time step $i\in[t]$ 
and a typical $x\sim X_i$,
the distributions $\dist(X_{i-1}\emid X_i=x)$ and
$\dist(X_{i+1}\emid X_i=x)$ should be far from each other, 
as otherwise we can argue that there should be many $t-2$ 
walks as follows. If $\dist(X_{i-1}\emid X_i=x)$ and
$\dist(X_{i+1}\emid X_i=x)$ are close to each other, 
there should be many $p\in\Omega$ which has high probability
in both
these distributions. Sample such a $p$, and 
attach to it a walk sampled from $X_{-1}X_1\ldots X_{i-1}\emid X_{i-1}=p$ 
and another walk sampled from $X_{i+1}X_{i+2}\ldots X_{t+1}\emid X_{i+1}=p$,
which leads to a length $t-2$ walk returning to the origin.
However if $m_{t-2}$ is low, this should not happen and therefore
$\dist(X_{i-1}\emid X_i=x)$ and
$\dist(X_{i+1}\emid X_i=x)$ on average should be far apart, 
which means that we can notice when we take a step backwards in time
and therefore infer $J$. In particular, \autoref{fig:ex1} 
gives such an example where $m_{t-2}=0$ and we can always
recover $J$ with certainty from a sample from $Z$: whenever we take
a step to the left, it must be that we are at time step $J$.

Given this discussion, a direct approach to proving 
\autoref{eq:logconv} appears to bound 
\begin{align}
\muti{Z}{J}\ge \log \min\set{t^{1-\eps}, 
\ceil{\delta\frac{m_t^{1-2/t}}{m_{t-2}}}}.
\label{eq:hope}
\end{align}
Unfortunately, this approach does not seem to 
work as we demonstrate with an example in the full version of this paper.
The problem here appears to be that we 
fix a single distribution $Z$ to explore the
two cases of \autoref{eq:logconv}. 
In our final approach, we pick different 
distributions depending on the case we would like 
to prove. Namely, if 
$\muti{J}{Z}\ge (1-\eps) \log t$, then carrying 
out the calculations in \autoref{eq:kvk2} through 
\ref{eq:absclaim1} with the assumption 
$\muti{J}{Z}\ge (1-\eps) \log t$, we prove 
the first case, namely $m_{t+2}\ge t^{1-\eps}m_t^{1+2/t}$ 
using the distribution $Z$.
If $\muti{J}{Z}< (1-\eps) \log t$ on the other hand, 
we demonstrate two new random variables $W,Y$ which are 
distributed respectively on length $t+2$ and length $t-2$ 
paths so that 
  $\kldiv{W}{F^{t+2}} + \kldiv{Y}{F^{t-2}}\le 
    -2\log S^t(\mu, \nu) - \log \delta$,
which implies that $m_{t-2}m_{t+2}\ge \delta m_t^2$.
While $W$ and $Y$ are constructed by modifying $X$ in 
suitable ways, which is how $Z$ was constructed also,
we do so with the hindsight of 
having inspected what causes $\muti{J}{Z}$ to be smaller
than $(1-\eps)\log t$. It is precisely this adaptivity
which enables this approach to overcome the difficulties
encountered by the one suggested in 
\autoref{eq:hope}.

If $\muti{J}{Z}\ge (1-\eps) \log t$, by plugging this into 
\autoref{eq:mutrelax} and carrying out the 
following calculations, 
we get $m_{t+2} \ge t^{1-\eps}\cdot m_t^{1+2/t}$.
Therefore it remains to show there exists a $\delta>0$
such that assuming $\muti{J}{Z} < (1-\eps) \log t$,
we have
$m_{t+2}m_{t-2}\ge \delta m_t^2$.
To do so, we will demonstrate distributions 
$W$ and $Y$
on walks that start from $r\in \OC$
and return to $r$ 
after spending respectively $t+2$ and $t-2$ time 
steps in 
$\Omega$ such that $\kldiv{W}{F^{t+2}} + 
\kldiv{Y}{F^{t-2}}\le -2\log S^t(\mu, \nu) - 
\log \delta$. 
Notice that by \autoref{lem:jensen} this indeed implies
that $m_{t+2}m_{t-2}\ge \delta m_t^2$.
The distributions $W$ and $Y$ will be mixture of $\Theta(t)$
random walks, in particular, they are not Markovian in 
general.

For brevity let us set $\mu_i^x \defeq \dist(X_{i-1}\emid X_i = x)$ and 
$\nu_i^x \defeq \dist(X_{i+1}\emid X_i=x)$.
Let $U$ be the unary encoding of $J$: a length $t$ bit
vector of which only the $J$th coordinate is set.
First we would like to understand 
the contribution of each bit of $U$ to 
$\muti{Z}{J}=\muti{Z}{U}$.
Using the chain rule, we write
\begin{align}
(1-\eps)\log t&>   \muti{U}{Z} \nonumber\\
              &=   \sum_{i=1}^t \muti{U_i}{Z\emid U_{<i}}
                    \label{eq:t1-chain}\\
              &=   \sum_{i=1}^t \frac{t-i+1}{t}
                   \muti{U_i}{Z\emid U_{<i}=0}
                   \label{eq:t1-condition}\\
           &\ge \sum_{i=1}^t \frac{t-i+1}{t}
                \muti{U_i}{Z_iZ_{i+1}\emid U_{<i}=0}
                \label{eq:t1-data}\\
           &=   \sum_{i=1}^t\frac{1}{t}
                \E_{x\sim X_i}\kldiv{\mu_i^x}
                {\lambda_i \mu_i^x + (1-\lambda_i)\nu^x_i}
                \nonumber\\
           &\quad\quad\quad\quad\quad
                + \sum_{i=1}^t\frac{t-i}{t}
                \E_{x\sim X_i}\kldiv{\nu_i^x} 
                {\lambda_i\mu_i^x + (1-\lambda_i)\nu^x_i}
                \label{eq:t1-split}
\end{align}
where we set $\lambda_i \defeq 1/(t-i+1)$, which is
the probability that $U_i=1\emid U_{<i}=0$.
Here, \autoref{eq:t1-chain} follows from the 
chain rule, \autoref{eq:t1-condition} is 
true because if $U_{<i}\neq 0$ then $U_i=0$ 
(as $U$ has a single coordinate that is one) 
and consequently the mutual information is
zero, and \autoref{eq:t1-data} is the data 
processing inequality. Next we lower bound 
\autoref{eq:t1-split} by its first term 
(which is valid since $\mu^x_i, \nu^x_i$ 
are distributions hence the second term of 
\autoref{eq:t1-split} is nonnegative), obtaining
\begin{align}
       (1-\eps) \log t &> \E_{i\sim J}\E_{x\sim X_i}
                \kldiv{\mu_i^x}
                      {\lambda_i \mu_i^x 
                      + (1-\lambda_i)\nu^x_i}\,.
                \label{eq:mut-inf}
\end{align}

To simplify the presentation, here we only 
provide the proof of \autoref{thm:main} for 
$\eps >7/8$ which demonstrates the ideas in 
their simplest form. This bound already implies all 
our results in complexity theory, with a constant
factor loss of no more than 8.
The proof for any $\eps>0$ can be found in the 
full version of this paper.

\subsection{The bound for $\eps> 7/8$}
If we condition on the event $i\in\set{1,\ldots,\lceil t/2 \rceil}$, this expectation
increases by a factor of at most 2; namely
\begin{align*}
\E_{i\sim \sparen{t/2}}
  \E_{x\sim X_i}
      \kldiv{\mu_i^x}
      {\lambda_i \mu_i^x 
          + (1-\lambda_i)\nu^x_i} < 2(1-\eps)\log t.
\end{align*}
By Markov's inequality
\begin{align*}
\Pr_{i\sim\sparen{t/2},x\sim X_i}
  \sparen{\kldiv{\mu_i^x}
      {\lambda_i \mu_i^x 
          + (1-\lambda_i)\nu^x_i} \ge 8(1-\eps)\log t} < 1/4,
\end{align*}
so it follows that there is a set 
$T\subseteq\sparen{\lceil t/2\rceil}$ of size
at least $\floor{t/4}$ such that if $i\in T$ we have
\begin{align}
\Pr_{x\sim X_i}
  \sparen{\kldiv{\mu_i^x}
      {\lambda_i \mu_i^x 
          + (1-\lambda_i)\nu_i^x} \ge 8(1-\eps)\log t} < 1/2.
          \label{eq:halfp}
\end{align}
For each $i\in T$ let $X'_i$ be the random variable 
obtained from $X_i$ by conditioning on those 
$x\in \supp(X_i)$ satisfying $\kldiv{\mu_i^x}
{\lambda_i \mu_i^x + (1-\lambda_i)\nu^x_i}<8(1-\eps)\log t$.
Furthermore, for each $i\in T$ and $x\in\supp(X'_i)$,
we construct distributions
$\pi_i^x\colon \Omega\to\realspos$ to be specified later.
Let $P_i$ be sampled by $x\sim X'_i$ first and then picking
$p\sim \pi_i^x$. 

\subsection{The distributions $W$ and $Y$}
Let $K$ be an integer sampled uniformly at
random from the set $T$ (constructed in 
the previous section).
For each fixing $k$ of $K$, the random variables $W\emid K=k$
and $Y\emid K=k$ are random walks 
(i.e., they are Markovian) 
constructed as follows. We first pick $x,p\sim X'_kP_k$.
The walk $Y\emid K=k$ is generated by concatenating a sample
from $X_{-1}X_0\ldots X_{k-1}\emid X_{k-1} = p$ and 
an independent sample from $X_{k+1}\ldots X_{t+1}\emid X_{k+1} = p$.
The walk $W$ is generated by concatenating a sample from 
$X_{-1}X_0\ldots X_{k}\emid X_{k} = x$, the path
$(x, p)$ and $(p,x')$ for an independent sample 
$x'\sim\nparen{X'_k\emid P_k = p}$ and 
an independent sample from $X_{k}\ldots X_{t+1}\emid X_{k} = x'$.

For $k\in T$ we define another random walk 
$\check{X}^k = 
(\check{X}^k_{-1},\ldots,\check{X}^k_{t+1})$, 
only to be used in the analysis of
$W$ and $Y$. We sample $x\sim X'_k$
and set $\check{X}^k_k = x$. We pick the rest of the coordinates of $\check{X}^k$ according to the distribution $X\emid X_k = x$. Note that for any $k\in T$, we have
\begin{align*}
\kldivs{\check{X}^k}{X} = \kldiv{X'_k}{X_k}\le 1
\end{align*} by \autoref{eq:halfp}
and \autoref{lem:klcond} and the fact that both $X$ and 
$\check{X}^k$ are Markovian.

\begin{lemma}
\label{lem:wy}
We have
\begin{align*}
\kldiv{W\emid K=k}{F^{t+2}} +& \kldiv{Y\emid K=k}{F^{t-2}}\\
&\le -2\log S^t(\mu,\nu) + 2 + \E_{x\sim X'_k}\kldiv{\pi_k^x}{\mu_k^x} + 
        \E_{x\sim X'_k}\kldiv{\pi_k^x}{\nu_k^x}.
\end{align*}
\end{lemma}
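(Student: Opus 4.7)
My plan is to prove the lemma by a direct computation of the two KL divergences, and to show that after the walk-structure factors cancel against the two terms $\E_{x\sim X'_k}\kldiv{\pi_k^x}{\mu_k^x}+\E_{x\sim X'_k}\kldiv{\pi_k^x}{\nu_k^x}$ on the right-hand side, the entire inequality collapses to the single bound $\kldiv{X'_k}{X_k}\le 1$, which is immediate from the definition of $X'_k$.

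The first step is to write the joint density of $W\mid K=k$ in closed form using the sampling recipe for $W$ together with the explicit distribution of $X$ from \autoref{lem:xdist}. The left segment $W_{-1},\ldots,W_k$ contributes a factor $X'_k(w_k)\cdot\mu(w_0)\prod_{i<k}S(w_i,w_{i+1})/S^k(\mu,w_k)$; the middle three positions contribute $\pi_k^{w_k}(w_{k+1})\cdot X'_k(w_{k+2})\pi_k^{w_{k+2}}(w_{k+1})/\phi_k(w_{k+1})$, where $\phi_k$ denotes the marginal distribution of $P_k$; and the right segment contributes $\prod_{i>k+1}S(w_i,w_{i+1})\nu(w_{t+2})/S^{t-k}(w_{k+2},\nu)$. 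Dividing by $\Pr[F^{t+2}=w]=\mu(w_0)\prod_{i=0}^{t+1}S(w_i,w_{i+1})\nu(w_{t+2})$, the walk-structure factors nearly all cancel, and writing $(x,p,x'):=(w_k,w_{k+1},w_{k+2})$ one is left with
\begin{align*}
\frac{\Pr[W=w\mid K=k]}{\Pr[F^{t+2}=w]}=\frac{X'_k(x)\,X'_k(x')\,\pi_k^x(p)\,\pi_k^{x'}(p)}{\phi_k(p)\,S^k(\mu,x)\,S^{t-k}(x',\nu)\,S(x,p)\,S(p,x')}.
\end{align*}
An entirely analogous computation for $Y\mid K=k$, noting that the two independent halves glue at $y_{k-1}=p$, yields
\begin{align*}
\frac{\Pr[Y=y\mid K=k]}{\Pr[F^{t-2}=y]}=\frac{\phi_k(q)}{S^{k-1}(\mu,q)\,S^{t-k-1}(q,\nu)},\qquad q:=y_{k-1}.
\end{align*}

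Next I will take logs and expectations under the respective distributions to obtain the two KL divergences, and in parallel expand $\E_{x\sim X'_k}\kldiv{\pi_k^x}{\mu_k^x}+\E_{x\sim X'_k}\kldiv{\pi_k^x}{\nu_k^x}$ using the explicit forms $\mu_k^x(p)=S^{k-1}(\mu,p)S(p,x)/S^k(\mu,x)$ and $\nu_k^x(p)=S(x,p)S^{t-k-1}(p,\nu)/S^{t-k}(x,\nu)$ from \autoref{lem:xdist}. The terms match up in pairs via three structural facts about the joint law of $(x,p,x')$: (i) the marginals of both $x$ and $x'$ equal $X'_k$; (ii) given $p$, the pair $(x,x')$ is exchangeable, so $(x,p)$ and $(x',p)$ are both distributed as $X'_k(\cdot)\pi_k^{\cdot}(p)$; (iii) the symmetry of $S$ gives $\E\log S(x,p)=\E\log S(p,x')$. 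Consequently every term involving $H(\pi_k^x)$, $H(\phi_k)$, $\E\log S(x,p)$, and the $\phi_k$-marginals of $\log S^{k-1}(\mu,\cdot)$ and $\log S^{t-k-1}(\cdot,\nu)$ cancels exactly between the two sides.

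The residue after all these cancellations, $\mathrm{RHS}-\mathrm{LHS}$, collapses further: substituting the identity $\Pr[X_k=x]=S^k(\mu,x)S^{t-k}(x,\nu)/S^t(\mu,\nu)$ from \autoref{lem:xdist} turns it into exactly $2-2\kldiv{X'_k}{X_k}$. Thus the lemma reduces to the single inequality $\kldiv{X'_k}{X_k}\le 1$, which is immediate: by \eqref{eq:halfp}, $X'_k$ is obtained from $X_k$ by conditioning on an event of $X_k$-probability at least $1/2$, so \autoref{lem:klcond} gives $\kldiv{X'_k}{X_k}\le-\log(1/2)=1$. The bulk of the proof is thus careful term-by-term bookkeeping of the symmetries described above; the only substantive inequality used is the very last one, and the $+2$ constant in the statement is exactly tight to absorb the $2\,\kldiv{X'_k}{X_k}$ residue.
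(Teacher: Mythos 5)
Your proof is correct, and your final identity matches what the paper's argument gives: after all the cancellations both proofs reduce the lemma to the single inequality $\kldiv{X'_k}{X_k}\le 1$, with the $+2$ constant being exactly $2\kldiv{X'_k}{X_k}$ at its worst.

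The route is genuinely different from the paper's, though. You compute the density ratios $\Pr[W=w\mid K=k]/\Pr[F^{t+2}=w]$ and $\Pr[Y=y\mid K=k]/\Pr[F^{t-2}=y]$ in closed form, expand the two divergences and the two $\E_{x\sim X'_k}\kldiv{\pi_k^x}{\cdot}$ terms as sums of expected logarithms, and cancel term by term using the exchangeability of $(x,x')$ given $p$, the symmetry of $S$, and the identity $\Pr[X_k=x]=S^k(\mu,x)S^{t-k}(x,\nu)/S^t(\mu,\nu)$. The paper instead organizes the same underlying computation around the auxiliary walk $\check{X}^k$: it writes
\begin{align*}
\kldiv{W\emid K = k}{F^{t+2}} &= \kldivs{\check{X}^k}{F^t} +
  \kldiv{P_k\emid X'_k}{F_{k+1}\emid F_{k}}
  +\kldiv{X'_k\emid P_k}{F_{k+1}\emid F_{k}},
\end{align*}
and a companion equality that turns the sum of the two left-hand divergences into $2\kldivs{\check{X}^k}{F^t}+\E\kldiv{\pi_k^x}{\mu_k^x}+\E\kldiv{\pi_k^x}{\nu_k^x}$; it then uses $\kldivs{\check{X}^k}{F^t}=\kldiv{X'_k}{X_k}-\log S^t(\mu,\nu)$, which is the same substitution you make at the end, just dressed as a fact about $\check{X}^k$. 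Your version trades the abstraction of $\check{X}^k$ for explicit bookkeeping; it makes every cancellation visible at the cost of carrying around more notation, and it makes it transparent that the \emph{only} substantive inequality is $\kldiv{X'_k}{X_k}\le 1$. One small typo to watch: in the right-segment density for $W$ the product should run over $k+2\le i\le t+1$ (not $i>k+1$ open-ended), and the last $S$-free factor is $\nu(w_{t+2})$ which then cancels against the $\nu(w_{t+2})$ in $\Pr[F^{t+2}=w]$; your displayed ratio is nevertheless correct.
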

\begin{proof}
We have
\begin{align*}
\kldiv{W\emid K = k}{F^{t+2}} &= \kldivs{\check{X}^k}{F^t} + 
  \kldiv{P_k\emid X'_k}{F_{k+1}\emid F_{k}}
  +\kldiv{X'_k\emid P_k}{F_{k+1}\emid F_{k}}
\end{align*}
and further
\begin{align*}
\kldiv{Y\emid K = k}{F^{t-2}} +& 
  \kldiv{P_k\emid X'_k}{F_{k+1}\emid F_{k}}
  +\kldiv{X'_k\emid P_k}{F_{k+1}\emid F_{k}}
  \\&= \kldivs{\check{X}^k}{F^t}+\E_{x\sim X'_k}
      \kldiv{\pi_k^x}{\mu_k^x} + 
      \E_{x\sim X'_k}\kldiv{\pi_k^x}{\nu_k^x}.
\end{align*}
Summing up the two inequalities and substituting 
$\kldivs{\check{X}^k}{X}\le 1$ we get the result.
\end{proof}
At this point, in light of \autoref{lem:wy},
we could pick each $\pi_k^x$ so that it minimizes
$\kldiv{\pi_k^x}{\mu_k^x} + \kldiv{\pi_k^x}{\nu_k^x}$:
the unique minimizer is given by 
$\pi_k^x = \sqrt{\mu_k^x\nu_k^x} / 
\iprod{\sqrt{\mu_k^x}}{\sqrt{\nu_k^x}}$. 
However doing so leads to $W,Y$ which 
diverge from the $F$ walk by more than a constant,
and therefore is not good enough for our needs.
To obtain better random variables $W$ and $Y$, 
we crucially use the fact that 
$W$ is a mixture of $\Theta(t)$ random walks. Namely, if
we consider the entropy coming from the $\muti{W}{K}$ term
also, a better strategy for picking the distributions 
$\pi_k^x$ becomes available.
By contrast, we do not use the fact that $Y$ is a mixture
and, in fact, it can be replaced by $Y\emid K=k_0$ where
$k_0 = \arg\min_k\kldiv{Y\emid K = k}{F^{t-2}}$, however
the averaged quantity $\kldiv{Y\emid K}{F^{t-2}}$ is
far more convenient to work with.

\subsection{The contribution of $\muti{K}{W}$}
Similar to \autoref{eq:mut-inf}, we would like to understand
the contribution of each time step $t\in T$ to 
$\muti{K}{W}$. Let $V$ be the unary encoding of $K$:
a length $t$ bit vector of which only the $V$th 
coordinate is set. Using the chain rule for 
mutual information
\begin{align*}
\muti{W}{V} &=   \sum_{i\in T}\muti{V_i}{W\emid V_{<i}}\\
            &\ge \E_{k\sim K} \E_{x\sim X'_k} 
                 \kldiv{\pi_k^x}
           {\eta_i \pi_k^x + (1-\eta_i)\widetilde{v}_k^x},
\end{align*}
where $\eta_k = 1/\rank_{T}(k)$ and 
$\widetilde{v}_k^x 
    \defeq \E_{j > k : j\in T} 
    \dist(\check{X}^j_{k+1}\emid \check{X}^j_k = x)$.
Here $\rank_T(i)$ denotes the position of 
$i\in T$ when the elements of $T$ are sorted in decreasing order.
By \autoref{eq:halfp}, and the definition of
$X'_k$, we have
$\widetilde{v}_k^x(y)\le 2\nu_k^x(y)$ for all $y\in \Omega$.
Therefore we conclude that 
\begin{align}
\muti{W}{K} \ge \E_{k\sim K} \E_{x\sim X'_k} 
                 \kldiv{\pi_k^x}
           {\eta_k \pi_k^x + 2(1-\eta_k)\nu_k^x}.\label{eq:muti-nu}
\end{align}
Note in the above divergence expression the 
reference measure is not a probability distribution, 
which our definition permits (cf.\ \autoref{eq:kldiv}).

Recall our goal in this section is to upper bound 
$\kldiv{W}{F^{t+2}} + \kldiv{Y}{F^{t-2}} 
+ 2\log S^t(\mu, \nu)$ by $\log 1/\delta$. 
Let us write
\begin{align}
\kldiv{W}{F^{t+2}} +& \kldiv{Y}{F^{t-2}} 
+ 2\log S^t(\mu, \nu) \nonumber \\
&\le \kldiv{W\emid K}{F^{t+2}} 
  + \kldiv{Y\emid K}{F^{t-2}} 
  - \muti{K}{W} + 2\log S^t(\mu, \nu)
    \nonumber\\
&\le 2 + \E_{k\sim K, x\sim X'_k}
  \kldiv{\pi^k_x}{\mu^k_x} + 
  \kldiv{\pi^k_x}{\nu^k_x} - \muti{K}{W}
        \nonumber\\
&\le 2 + \E_{k\sim K, x\sim X'_k} \E_{y\sim \pi_k^x} \log
\frac{\eta_k \pi_k^x(y)^2 + 2(1-\eta_k)\nu_k^x(y)\pi_k^x(y)}
{\mu_k^x(y)\nu_k^x(y)}\label{eq:div-min},
\end{align}
where the second inequality follows from 
\autoref{lem:wy} and the last inequality is 
obtained by plugging in \autoref{eq:muti-nu}.
Note that the function $z\mapsto z \log(az^2 + bz)$ is 
strictly convex
in $\realspos$ whenever $ab > 0$, therefore 
for each $k,x$ there is a unique minimizer $(\pi_k^x)^*$
of \autoref{eq:div-min},
which can be calculated, say, using Lagrange multipliers. 
However, instead of the minimizer, we work with a simple approximation of it.
For each $k\in T$ and $x\in \supp(X'_k)$, we let
\begin{align*}
\Psi_k^x \defeq \setbuilder{y\in\Omega}
{\nu_k^x(y)\ge \frac{\lambda_k}{1-\lambda_k}\mu_k^x(y)}.
\end{align*}
By definition of $X'_k$, we have
$\kldiv{\mu_i^x}
      {\lambda_i \mu_i^x 
          + (1-\lambda_i)\nu^x_i} < 8(1-\eps)\log t$.
Let $\gamma = 1- 8(1-\eps)$, which is positive by our assumption
$\eps>7/8$.
By Markov's inequality, and the fact that 
$\lambda_k \le 2/t$,
we get
\begin{align*}
\mu_k^x(\Psi_k^x)\ge \gamma
\end{align*}
for large enough $t$.
Let $\pi_k^x$ be $\mu_k^x \emid \Psi_k^x$, namely we have
$\pi_k^x(y) = \mu_k^x(y) / \mu_k^x(\Psi_k^x)$ if 
$y\in\Psi_k^x$, and $\pi_k^x(y) = 0$ otherwise.
Continuing from \autoref{eq:div-min}, we have
\begin{align}
 &\le 2 +  \E_{k\sim K, x\sim X'_k} \E_{y\sim \pi_k^x} \log
\frac{\eta_k \pi_k^x(y)^2 + 2(1-\eta_k)\nu_k^x(y)\pi_k^x(y)}
{\mu_k^x(y)\nu_k^x(y)}\nonumber\\
 &\le 2+ 
 \E_{k\sim K} \log
\nparen{\frac{\eta_k(1-\lambda_k)}{\lambda_k \gamma^2}+\frac{2}{\gamma}},
\label{eq:setsh}
\end{align}
where the second inequality is true by definition of
$\Psi^x_k$ and $\pi^x_k$. 
Now we argue that the expectation term in \autoref{eq:setsh} is maximized
when $T$ is the set containing the smallest $|T|$ elements of 
$[\ceil{t/2}]$. To see this suppose there is an $i\notin T$
which is smaller than the maximum element of $T$.
Let $j$ be the smallest item in $T$ which is greater than $i$. 
We see that the expectation term
increases if we replace $T$ by $T\setminus\set{j}\cup\set{i}$
as $\log\nparen{\frac{C(1-\lambda_k)}
{\lambda_k \gamma^2}+\frac{2}{\gamma}}$ is decreasing in $k$ 
and the ranks
do not change after swapping $j$ with $i$.
Therefore, 
\begin{align}
\kldiv{W}{F^{t+2}} + &\kldiv{Y}{F^{t-2}} + 2\log S^t(\mu, \nu)
\nonumber \\
&\le 2+
\log\nparen{\prod_{i=1}^{|T|}\frac{t/2 + 3i}{i \gamma^2}}^{1/|T|}
\nonumber \\
&=\log\frac{12}{\gamma^2} + 
\log\nparen{\prod_{i=1}^{|T|}\frac{t/6 + i}{i}}^{1/|T|}
\nonumber \\
&\le \log\frac{12}{\gamma^2} + \log\binom{2|T|}{|T|}^{1/|T|}
\label{eq:binomub}\\
&\le \log\frac{48}{\gamma^2},
\end{align}
where the $\binom{2|T|}{|T|}$ is the middle binomial coefficient,
in the second inequality we use the fact $|T|>t/6$,
and the last inequality is true as $\smash{\binom{2n}{n}< 2^{2n}}$.
Therefore it is enough to choose $\eps > 7/8$ and 
$\delta\le \frac{\nparen{1-8(1-\eps)}^2}{48} = \frac{4}{3}(\eps- \sfrac{7}{8})^2$. We have established
the following.

\begingroup
\def\thetheorem{\ref{thm:main}}
\begin{theorem}[restated]
For any $\eps>7/8$ there is a $\delta>0$ such that
$m_{t+2}\ge t^{1-\eps} m_t^{1+2/t}$ unless 
$m_{t+2}m_{t-2}\ge \delta m_t^2$.
\end{theorem}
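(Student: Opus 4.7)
The plan is to case-split on the size of $\muti{J}{Z}$, where $Z$ and $J$ are the random variables from \autoref{sec:zdef}. In the easy case $\muti{J}{Z} \ge (1-\eps)\log t$, I would rerun the calculation from \autoref{eq:kvk2} through \autoref{eq:absclaim1} but retain the $\muti{J}{Z}$ term in \autoref{eq:mutrelax} instead of discarding it via $\muti{J}{Z}\ge 0$; this immediately produces the first branch $m_{t+2}\ge t^{1-\eps} m_t^{1+2/t}$.

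The hard case is $\muti{J}{Z} < (1-\eps)\log t$, where the target is to build two distributions $W, Y$ on walks in $\OC$ of respective lengths $t+2$ and $t-2$, both starting and ending at $r$, with
\begin{align*}
\kldiv{W}{F^{t+2}} + \kldiv{Y}{F^{t-2}} \;\le\; -2\log S^t(\mu,\nu) + \log(1/\delta),
\end{align*}
which by \autoref{lem:klcond} yields $S^{t+2}(\mu,\nu)S^{t-2}(\mu,\nu) \ge \delta\, S^t(\mu,\nu)^2$, hence $m_{t+2}m_{t-2}\ge \delta m_t^2$. Both $W$ and $Y$ will be constructed by drawing an index $k$ uniformly from a carefully chosen set $T \subseteq [\lceil t/2\rceil]$ of size $\approx t/4$ and sampling $(x,p)$ from $X'_k \times \pi_k^x$ for an auxiliary distribution $\pi_k^x$ on $\Omega$; $Y$ will be an $X$-walk forced to pass through state $p$ at time $k$ (hence two steps shorter), and $W$ an $X$-walk containing the detour $x\to p\to x'$ at time $k$ (hence two steps longer).

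The first step is to convert the bound on $\muti{J}{Z}$ into a pointwise statement. Applying the chain rule, data processing, and conditioning on the unary encoding of $J$ being zero on the first $i-1$ coordinates, I would derive
\begin{align*}
\E_{i\sim J, x\sim X_i}\kldiv{\mu_i^x}{\lambda_i \mu_i^x + (1-\lambda_i)\nu_i^x} \;<\; (1-\eps)\log t,
\end{align*}
with $\lambda_i = 1/(t-i+1)$ and $\mu_i^x, \nu_i^x$ the backward/forward one-step distributions of $X$ at state $x$ and time $i$. Restricting $i$ to the first half of $[t]$ and applying Markov's inequality twice extracts the set $T$ and, within each $i\in T$, a subset $X'_i\subseteq \supp(X_i)$ on which the inner divergence is at most $8(1-\eps)\log t$. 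The assumption $\eps>7/8$ enters here to guarantee $\gamma := 1-8(1-\eps) > 0$.

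The heart of the argument, and the main obstacle, is choosing $\pi_k^x$ so that $\kldiv{W}{F^{t+2}} + \kldiv{Y}{F^{t-2}}$ is genuinely bounded. A chain-rule expansion analogous to \autoref{lem:zdiv} identifies the sum, up to an $O(1)$ constant, as
\begin{align*}
-2\log S^t(\mu,\nu) + \E_{k,x}\bigl[\kldiv{\pi_k^x}{\mu_k^x} + \kldiv{\pi_k^x}{\nu_k^x}\bigr].
\end{align*}
The naive minimizer $\pi \propto \sqrt{\mu\nu}$ gives only a constant-factor gain, which is insufficient. To go further I would exploit that $W$ is a mixture of $\Theta(t)$ walks indexed by $K$ and harvest $\muti{K}{W}$ through the same chain-rule/data-processing route, together with the pointwise bound $\widetilde{\nu}_k^x \le 2\nu_k^x$ on $X'_k$; this contributes a negative term $\E_{k,x}\kldiv{\pi_k^x}{\eta_k\pi_k^x + 2(1-\eta_k)\nu_k^x}$ with $\eta_k = 1/\rank_T(k)$. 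Then I would take $\pi_k^x$ to be $\mu_k^x$ restricted to $\Psi_k^x = \{y : \nu_k^x(y) \ge \tfrac{\lambda_k}{1-\lambda_k}\mu_k^x(y)\}$, a set of $\mu_k^x$-mass at least $\gamma$ by Markov on the definition of $X'_k$. This choice collapses the combined expression into a telescoping product bounded by the central binomial coefficient $\binom{2|T|}{|T|} \le 4^{|T|}$, giving the needed $O(1)$ bound and yielding $\delta = \Theta\bigl((\eps-\tfrac{7}{8})^2\bigr)$.
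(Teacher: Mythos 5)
Your proposal is correct and follows essentially the same route as the paper: the same case split on $\muti{J}{Z}$, the same reduction via the chain rule, data processing and Markov's inequality to extract $T$ and $X'_k$ (using $\eps>7/8$ to keep $\gamma=1-8(1-\eps)>0$), the same constructions of $W$ and $Y$ as mixtures over $K$, the same choice $\pi_k^x=\mu_k^x\emid\Psi_k^x$, the same exploitation of $\muti{K}{W}$ to beat the $\sqrt{\mu\nu}$ minimizer, and the same telescoping bound by $\binom{2|T|}{|T|}$ giving $\delta=\Theta\bigl((\eps-\sfrac{7}{8})^2\bigr)$. There is no substantive deviation from the paper's argument.
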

\addtocounter{theorem}{-1}
\endgroup

\section{Randomized computational models}
\label{sec:applications}
In this section we show the connection between
\autoref{thm:blakley-dixon}, \autoref{thm:main} and 
the randomized communication and query complexities 
of the Hamming distance problem.

\subsection{Communication complexity}
In a two player communication problem the players, 
named Alice and Bob, receive separate inputs, 
respectively $x\in \mathcal{X}$ and $y\in \mathcal{Y}$, 
and they communicate in order to compute the value 
$f(x,y)$ of a function
$f\colon \mathcal{X}\times \mathcal{Y}\to\binary$ 
(known to both players). 
In an $r$-round protocol, the players can take at most 
$r$ turns alternately sending each other a message 
(that is, a bit string) and the last player to receive 
a message declares the output of the protocol.
A protocol can be {\em deterministic} or {\em randomized}; 
in the latter case the players can base their actions on 
a common random source and we measure the 
{\em error probability}: the maximum over inputs 
$(x,y)\in \mathcal{X}\times\mathcal{Y}$, of the probability 
that the output of the protocol differs from $f(x,y)$. 
The {\em communication cost} of a protocol is the maximum, 
over the inputs and the random string, of the total number 
of bits sent between the players.
For a function 
$f\colon \mathcal{X}\times\mathcal{Y}\to\binary$, 
an integer $r$ and $\delta\in[0,1]$, we denote by 
$R^r_{\delta}(f)$ the minimum over all protocols
for $f$ having $r$-rounds and error probability at most 
$\delta$, of the communication cost incurred. We define 
$R_{\delta}(f)$ similarly, but we take the maximum over 
$\delta$-error protocols with no restriction on the number 
of rounds it uses.

In the $k$-Hamming distance problem, denoted $\Ham^n_k$,
the players receive length-$n$ bit strings, respectively 
$x,y\in\cube$, and are required determine if 
$\lone{x-y}\le k$ or not.
There is a well known one-round communication protocol
which accomplishes this with error probability $\delta$ 
by communicating $O(k\log\nparen{k/\delta})$ bits.

\begin{theorem}
[e.g., Huang, Shi, Zhang and Zhu \cite{HuangSZZ2006}]
\label{thm:ub}
It holds that $$R^1_\delta(\Ham^n_k) = 
O(\min\set{k\log\nparen{k/\delta}, k\log (n/k)}).$$
\end{theorem}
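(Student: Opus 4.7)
The plan is to prove the upper bound by exhibiting two one-round randomized protocols, one achieving each of the two quantities inside the minimum, and taking whichever is smaller for the given $(n,k,\delta)$.

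For the $O(k\log(n/k))$ bound I would use a random linear sketch. With shared randomness, Alice and Bob sample a uniformly random matrix $M\in\binary^{s\times n}$ with $s=O(k\log(n/k)+\log(1/\delta))$; Alice sends $Mx$ in $s$ bits and Bob computes $M(x\oplus y)=Mx\oplus My$. A union bound over the at most $\binom{n}{\leq 2k}\leq (en/k)^{2k}$ nonzero vectors of Hamming weight at most $2k$ shows that, with probability $\geq 1-\delta$, the map $M$ is injective on such vectors; conditioned on this event, Bob decodes $x\oplus y$ (e.g.\ by syndrome decoding) and outputs YES iff the recovered weight is at most $k$.

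For the $O(k\log(k/\delta))$ bound I would use a parity-bucketing protocol. With shared randomness, pick $r=\Theta(\log(k/\delta))$ independent hash functions $h_1,\ldots,h_r\colon[n]\to[B]$ with $B=\Theta(k)$, and for each $j\in[r]$ and $b\in[B]$ have Alice send the single bit $\bigoplus_{i\colon h_j(i)=b}x_i$, for total communication $rB=O(k\log(k/\delta))$. Bob XORs these with the analogous bits of $y$ to obtain, for each $j$, a vector whose Hamming weight $N_j$ equals the number of buckets receiving an odd number of disagreement positions. By a birthday-collision calculation, when $\lone{x-y}\leq k$, with constant probability over each $h_j$ no two disagreement positions collide, giving $N_j=\lone{x-y}$; amplifying over the $r$ repetitions and taking $\min_j N_j$ recovers the exact value of $\lone{x-y}$ with probability $\geq 1-\delta$ in the close case, from which Bob decides by comparing to $k$.

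The main obstacle, and the one that accounts for the long line of works on this problem, is obtaining a clean separation between the close ($\lone{x-y}\leq k$) and far ($\lone{x-y}>k$) cases in the second protocol when $\lone{x-y}$ is near the threshold $k$: in the far case, random hash collisions can cause the $N_j$ to underestimate $\lone{x-y}$, and naive thresholding on $\min_j N_j$ needs the bucket count $B$ and the amplification $r$ to be jointly calibrated so that concentration kicks in on both sides. Since the present theorem is quoted from prior work \cite{HuangSZZ2006}, I would invoke that construction and its refined analysis rather than reproduce this calibration here.
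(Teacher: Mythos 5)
The paper does not prove \autoref{thm:ub}; it is cited as known from prior work, so there is no internal argument to compare against. On the merits, your first protocol (the random linear sketch) is sound in spirit: a random $M$ with $s=\Theta\bigl(k\log(n/k)+\log(1/\delta)\bigr)$ rows leaves, with probability $\ge 1-\delta$, no nonzero vector of weight at most $2k$ (better: no vector $z\oplus(x\oplus y)$ with $|z|\le k$) in its kernel, and Bob can then decide by checking whether any weight-$\le k$ vector shares the syndrome $M(x\oplus y)$. The extra $\log(1/\delta)$ is harmless in the regime where $k\log(n/k)$ is the smaller of the two terms, so this piece is fine.

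Your second protocol has a genuine gap. With $B=\Theta(k)$ buckets and up to $k$ differing coordinates, the probability that \emph{no two} of the $d\le k$ disagreement positions collide under a single hash $h_j$ is about $\exp(-d^2/(2B)) = \exp(-\Theta(k))$, not a constant: this is the birthday bound working \emph{against} you, since $k \gg \sqrt{B}$. Consequently, amplifying over only $r=\Theta(\log(k/\delta))$ hashes and taking $\max_j N_j$ (you wrote $\min$, which cannot be right since $N_j\le d$ always) does not recover $d$ except with probability exponentially small in $k$, so the protocol as described does not achieve error $\delta$ at cost $O(k\log(k/\delta))$. The standard fix is different: first use a \emph{single} random parity hash $g\colon[n]\to[m]$ with $m=\Theta(k^2/\delta)$, under which, when $|x\oplus y|\le 2k$, the $\le 2k$ disagreement positions are pairwise non-colliding with probability $\ge 1-O(\delta)$, so the Hamming distance is preserved exactly by the bucket-XOR map; then apply your first (linear sketch / syndrome) protocol on the reduced $m$-bit universe, at cost $O\bigl(k\log(m/k)\bigr) = O\bigl(k\log(k/\delta)\bigr)$. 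One still must argue that when $|x\oplus y|>k$ the bucketed distance stays $>k$ with probability $\ge 1-\delta$ (this holds for all $d>k$, not just $d\le 2k$, by a separate balls-in-bins argument), which is essentially the calibration issue you flagged, but the more basic problem is the exponentially small no-collision probability with only $\Theta(k)$ buckets.
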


Highly related to the $\Ham^n_k$ is the $k$-disjointness 
problem $\Disj^n_k$, wherein the players each receive a 
$k$-subset of $[n]$ and their goal is to determine if their 
sets intersect.
Notice that $\Disj^n_k$ can be seen as a promise version of
$\Ham^n_{2k-2}$ where each player is guaranteed to have a string
with Hamming weight $k$: the sets are disjoint if and only if the
Hamming distance between the characteristic vectors of the sets
is more than $2k-2$. Therefore any upper bound for the $\Ham^n_k$ 
carries over to $\Disj^n_k$ and any lower bound for $\Disj^n_k$
carries over to $\Ham^n_k$.
Around 1993, 
Håstad and Wigderson \cite{HastadW2007} showed that
there is a more efficient protocol for $\Disj^n_k$ than that implied 
by \autoref{thm:ub}, which communicates only $O(k)$ bits, but over
$O(\log k)$ rounds.

On the lower bounds side, the result of \cite{KalyanasundaramS1992}
implies that $\Omega(k)$ bits is needed for these problems even
if one uses arbitrarily large number of round protocols.
In \cite{BuhrmanGMW2012} it was shown that
any 1-round protocol for $\Disj^n_k$ needs to communicate
at least $\Omega(k \log k)$ bits when $k^2<n$
(this result was proven later in \cite{DasguptaKS12} also).
In Theorem~3.2 of \cite{Saglam2011}, an $\Omega(k\log(1/\delta))$
bound for $1$-round complexity of $\Disj^n_k$ was shown even 
when Bob receives just one element (i.e., the indexing problem) 
for $k<\delta n$ and a 
slightly more general result was shown in \cite{JayramW2011}.
Finally in \cite{SaglamT2013} the communication complexity
of $\Disj^n_k$ was settled:
\begin{align*}
R^r_{1/3}(\Disj^n_k)=\Theta(k\log^{(r)}k)
\end{align*}
for $1\le r < \log^*k$ and $k<n^2$.
Their upper bound solves the disjointness problem with 
error probability at most $1/\exp k + 1/\exp^{(r)}(c\log^{(r)}k)$ 
for any $c>1$ by communicating
$O(k \log^{(r)} k)$ bits over $r$ rounds. In fact bulk of the bits
is sent in the first round and the rest of the rounds
amount to an $O(k)$ bits of communication. Taking $r=\log^* k$,
this leads to an $O(k)$ bits protocol with error probability 
that is exponentially small in $k$.
Their lower bound shows that at least one
message of size 
$\Omega(k\log^{(r)} k)$ bits needs to be sent by any 
$r$-round protocol, even if it has error probability $1/3$.
Prior to this work, this lower bound provided the strongest lower
bound for $\Ham^n_k$ also, along with the incomparable bound of 
$\Omega(k\log (1/\delta))$ due to \cite{BlaisBG2014}
which holds for any number of rounds, which we discuss shortly.

\begin{table}[]
\def\arraystretch{1.3}
\makeatletter
\newcommand{\mleftarrow}{\rotatebox[origin=c]{90}{appl.}\bBigg@{2}\downarrow}
\newcommand{\mrightarro}{\rotatebox[origin=c]{90}{applies}\bBigg@{4}\uparrow}
\makeatother
\begin{tabular}{lllllr}
{\bf Problem}			& {\bf Upper bound}		& {\bf Rounds}			& {\bf Error}    			& {\bf Lower bound}		& {\bf Reference} \\\hline
\multirow{3}{*}{$\Ham^n_k$}	& $O(k\log (k/\delta))$		& $1$				& $\delta$				&				& Folklore, \cite{HuangSZZ2006}\\
				& \multirow{2}{*}{$\mleftarrow$}& any				& $\delta$				& $\Omega(k\log(1/\delta))$	& \cite{BlaisBG2014}\\
				&				& any				& $\delta$				& $\Omega(k\log(k/\delta))$ 	& This work\\
\hline
\multirow{8}{*}{$\Disj^n_k$}	& $O(k\log (k/\delta))$		& $1$				& $\delta$				& \multirow{4}{*}{$\mrightarro$}& Folklore          \\
				& $O(k)$			& $O(\log k)$			& $1/3$					&				& \cite{HastadW2007}\\
                              	& $O(k\log^{(r)}k)$		& $r$				& $1/\exp^{(r)}\nparen{c\log^{(r)} k}$	&				& \cite{SaglamT2013}\\
			      	& $O(k)$			& $\log^* k$			& $1/\exp k$				&				& \cite{SaglamT2013}\\
				&				& r				& $1/3$					& $\Omega(k \log^{(r)} k)$	& \cite{SaglamT2013}\\
				&				& 1				& $1/3$					& $\Omega(k\log k)$ 		& \cite{BuhrmanGMW2012, DasguptaKS12}\\
				&				& 1				& $\delta$				& $\Omega(k \log(1/\delta))$	& \cite{Saglam2011,JayramW2011}\\
				&				& any				& $1/3$					& $\Omega(k)$			& \cite{KalyanasundaramS1992}\\
\hline
\end{tabular}
\caption{Known bounds for $\Disj^{n}_k$ and $\Ham^{n}_k$.}
\label{table:hamvsdisj}
\end{table}

To summarize the above results, the 1-round communication 
complexity of both $\Disj^n_k$ and $\Ham^n_k$ is 
$\Theta(k\log(k/\delta))$ by 
\cite{BuhrmanGMW2012, Saglam2011, JayramW2011} and 
\cite{HuangSZZ2006}. We know that $\Disj^n_k$ can be solved 
much more efficiently if one is allowed multiple rounds:
firstly the $\log k$ factor can be removed \cite{HastadW2007} 
and secondly the error probability can be brought down to 
$\exp(-k)$ \cite{SaglamT2013}, by using no more than 
$\log^* k$ rounds. It is an interesting question whether 
similar efficiency improvements can be obtained for $\Ham^n_k$ 
also, by using multiple rounds.
The first separation of $\Disj^n_k$ and $\Ham^n_k$ was proven
in \cite{BlaisBG2014}, which shows that $\Omega(k\log(1/\delta))$
lower bound holds for any protocol solving $\Ham^n_k$. 
Therefore in $\Ham^n_k$, we get no improvements in error probability
by interactive communication. It remained an open 
question whether {\em any} improvement can be made 
at all to the 1-round protocol by communicating interactively.
In this work we answer this question negatively: 

\begingroup
\def\thetheorem{\ref{thm:klogk}}
\begin{theorem}[restated]
For $k^2<\delta n$ we have 
$R_{\delta}(\Ham^n_k) = \Omega(k\log (k/\delta))$.
The bound applies even to protocols that may
output an arbitrary answer when $\lone{x-y} \notin \set{k-2, k,k+2}$.
\end{theorem}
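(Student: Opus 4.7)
The plan is to run a corruption-style communication lower bound on a three-distance hard distribution, plugging Theorem \ref{thm:main} into the per-rectangle analysis in place of the weaker log-convexity estimate (Theorem \ref{thm:blakley-dixon}) that \cite{BlaisBG2014} used to obtain their $\Omega(k\log(1/\delta))$ bound; the factor $t^{1-\eps}$ of Theorem \ref{thm:main} is exactly what upgrades this to $\Omega(k\log(k/\delta))$. By Yao's minimax principle it suffices to exhibit a distribution $\pi$ on $\cube\times\cube$ under which every deterministic protocol of cost $c$ incurs error $\geq\delta$ unless $c=\Omega(k\log(k/\delta))$. Take $\pi$ to be the equal mixture of $\pi_{k-2},\pi_k,\pi_{k+2}$, where $\pi_t$ samples $x\in\cube$ uniformly and $y=x\oplus z$ for a uniform weight-$t$ string $z$. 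Since $\pi$ is supported on $\lone{x-y}\in\{k-2,k,k+2\}$, the ``arbitrary answer outside'' clause is automatic.

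A deterministic $c$-bit protocol partitions the input space into $2^c$ rectangles $R=A\times B$, each labeled YES or NO; writing $\widetilde m_t(R)$ for the fraction of distance-$t$ pairs lying in $R$, correctness forces $\sum_{R\in\mathrm{YES}}\widetilde m_{k+2}(R)\leq 3\delta$ and $\sum_{R\in\mathrm{YES}}\widetilde m_k(R)\geq 1-3\delta$. To invoke Theorem \ref{thm:main}, take $\Omega=\cube$, $u=\mathbf{1}_A/\sqrt{|A|}$, $v=\mathbf{1}_B/\sqrt{|B|}$, and let $S$ be the normalized bit-flip walk $S(x,y)=\mathbf{1}[\lone{x-y}=1]/n$ (symmetric, nonnegative, substochastic); set $m_t(R):=\iprod{v}{S^tu}$. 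Expanding gives $m_t(R)=\sum_{d\leq t,\,d\equiv t\pmod 2}\omega_t(d)N_d(R)/(n^t\sqrt{|A|\,|B|})$, where $\omega_t(d)$ is the number of length-$t$ cube walks between a fixed distance-$d$ pair. Under $k^2\leq\delta n$ the leading $d=t$ term dominates up to a $1+O(\delta)$ factor (pointwise by nonnegativity on the lower-bound side, and in aggregate across rectangles via $\|K_d\|_{\mathrm{op}}=\binom{n}{d}$ on the upper-bound side), so $m_t(R)$ is proportional (up to $1+O(\delta)$) to $2^n\widetilde m_t(R)/\sqrt{|A||B|}$. Substituting into Theorem \ref{thm:main} at $t=k$, the rectangle-dependent prefactors collapse into a single $\sigma(R)^{1/k}$ with $\sigma(R)=|A||B|/2^{2n}$, yielding for each YES rectangle one of
\begin{equation*}
\text{(I)}\ \widetilde m_k(R)^{1+2/k}\leq \sigma(R)^{1/k}\widetilde m_{k+2}(R)/k^{1-\eps},\qquad\text{(II)}\ \widetilde m_k(R)^2\leq \widetilde m_{k+2}(R)\widetilde m_{k-2}(R)/\delta_0,
\end{equation*}
where $\delta_0=\delta_0(\eps)$ is the constant from Theorem \ref{thm:main}.

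Partition $\mathrm{YES}=\mathcal Y_{\mathrm I}\sqcup\mathcal Y_{\mathrm{II}}$ by which branch applies. Cauchy--Schwarz handles branch II: $\sum_{R\in\mathcal Y_{\mathrm{II}}}\widetilde m_k(R)\leq\delta_0^{-1/2}\sqrt{(\sum_R\widetilde m_{k+2}(R))(\sum_R\widetilde m_{k-2}(R))}\leq\sqrt{3\delta/\delta_0}\leq 1/10$ for $\delta\leq\delta_0/300$, so $\sum_{R\in\mathcal Y_{\mathrm I}}\widetilde m_k(R)\geq\Omega(1)$. For branch I, raise (I) to the $k/(k+2)$ power and apply Hölder with exponents $(k+2)/k,\,k+2,\,k+2$ to the product $\widetilde m_{k+2}(R)^{k/(k+2)}\,\sigma(R)^{1/(k+2)}\cdot 1$:
\begin{equation*}
\sum_{R\in\mathcal Y_{\mathrm I}}\widetilde m_k(R)\leq k^{-(1-\eps)k/(k+2)}\Bigl(\sum_R\widetilde m_{k+2}(R)\Bigr)^{\!k/(k+2)}\Bigl(\sum_R\sigma(R)\Bigr)^{\!1/(k+2)}|\mathcal Y_{\mathrm I}|^{1/(k+2)}.
\end{equation*}
Plugging in $\sum_R\widetilde m_{k+2}(R)\leq 3\delta$, $\sum_R\sigma(R)\leq 1$, $|\mathcal Y_{\mathrm I}|\leq 2^c$, and demanding that the LHS be $\Omega(1)$ forces $2^{c/(k+2)}\geq\Omega(k^{(1-\eps)k/(k+2)}/\delta^{k/(k+2)})$, i.e., $c=\Omega(k\log(k/\delta))$ for any fixed $\eps\in(0,1)$.

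\textbf{Main obstacle.} The principal technical hurdle is the translation step $m_t(R)\leftrightarrow\widetilde m_t(R)$: while $k^2\leq\delta n$ controls the \emph{global} ratio $N_{t-2}/N_t$, an individual rectangle can have $N_{t-2}(R)/N_t(R)$ far larger than the global one, so the cross-distance corrections in the walk expansion of $m_t(R)$ must be handled in aggregate across rectangles (using $\|K_d\|_{\mathrm{op}}=\binom{n}{d}$) rather than pointwise; this is the place where the hypothesis $k^2\leq\delta n$ is consumed and where the modification akin to \cite{Sherstov2012} enters. Once this translation is in place, branch II contributes only $O(\sqrt{\delta})$ to the distance-$k$ mass in YES rectangles, and branch I alone drives the quantitative bound through the single Hölder application above; the factor $k^{1-\eps}$ supplied by Theorem \ref{thm:main} is exactly what produces the additional $\log k$ in $\log(k/\delta)$.
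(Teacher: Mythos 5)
Your overall plan is right and closely parallels the paper's: corruption bound against a three-distance hard distribution, with Theorem \ref{thm:main} supplying the per-rectangle dichotomy. The two-branch case split, the Cauchy–Schwarz handling of the near-log-convex branch, and the Hölder aggregation of the $t^{1-\eps}$ branch would all work given valid per-rectangle inequalities, and the Hölder packaging is a legitimate alternative to the paper's simpler ``bound $\iprod{R}{H}$ for every rank-one $R$ and then use $\iprod{A}{H}\le 2^c\max_R\iprod{R}{H}$'' step.

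However, the gap you flag as the ``main obstacle'' is not a side detail but a genuine hole, and it stems from an avoidable choice. You take $\pi_t$ to be the \emph{exact}-distance distribution (uniform weight-$t$ shift), which forces you to relate $\widetilde m_t(R)$ (exact-distance mass) to $m_t(R)=\iprod{v}{S^tu}$ (walk mass). You concede this proportionality can fail pointwise for a single rectangle, and the direction that fails is the one you actually need for inequality (I): Theorem \ref{thm:main} gives a lower bound on $m_{k+2}(R)$, and to convert it into a lower bound on $\widetilde m_{k+2}(R)$ you need an upper bound $m_{k+2}(R)\lesssim\widetilde m_{k+2}(R)$ per rectangle, which is precisely the direction that does \emph{not} hold pointwise. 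So as written, (I) and (II) are not established for every $R$, and the downstream Hölder step has nothing to stand on. The paper simply never encounters this: it defines $\mu_k$ by choosing the $k$ coordinates to flip \emph{with replacement}, i.e.\ $\mu_k(x,y)=W^k(x,y)/2^n$ exactly for the normalized cube walk $W$, so $\iprod{R}{\mu_k}=\iprod{v}{W^ku}/2^n$ is an identity and Theorem \ref{thm:main} applies verbatim with $u,v$ the $0$-$1$ rectangle indicators and $\sqrt{\sigma(R)}=\lnorm{u}{2}\lnorm{v}{2}/2^n$. The mass $\mu_k$ leaks onto smaller even distances only with total probability at most $\binom{k}{2}/n\le\delta/2$, and this is absorbed once and for all in the lower bound $\iprod{A}{H}\ge 1/3$, not rectangle by rectangle. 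If you swap $\pi_t$ for this with-replacement $\mu_t$, your translation problem disappears, (I) and (II) hold exactly per rectangle, and your Hölder and Cauchy--Schwarz steps then give a complete proof. One further minor point: on the branch-II rectangles the paper's AM--GM actually shows $\iprod{R}{H}<0$ outright for $\delta\le\sqrt{\alpha_2}/3$, which is cleaner than your aggregate $O(\sqrt{\delta})$ bound, though yours suffices.
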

\addtocounter{theorem}{-1}
\endgroup
Before we proceed with proving \autoref{thm:klogk}, let us
first warm up by showing that \autoref{thm:blakley-dixon} 
implies an  $\Omega(k\log(1/\delta))$ lower bound on 
$R_\delta(\Ham^n_k)$.
To do so, let us review the so called {\em corruption bound}
method. Let  $f\colon\mathcal{X}\times\mathcal{Y}\to\binary$ 
be the function the players would like to compute with 
Alice having received $x\in \mathcal{X}$ and Bob $y\in \mathcal{Y}$. 
For a protocol $P$ for $f$, define the matrix 
$A_P\colon\mathcal{X}\times\mathcal{Y}\to[0,1]$ 
such that $A_P(x,y)$ is the probability
that the protocol outputs 1 on input $(x,y)$.
It is well known and not difficult to see that
if $P$ has communication cost $c$, then $A_P$ is the
average of matrices each of which is the sum of at most
$2^c$ rank 1 matrices $u v^\transpose$ with 
$u\in \binary^\mathcal{X}$ and $v\in\binary^\mathcal{Y}$.
Therefore to show the communication cost of a protocol
$P$ is more than $c$, it suffices to argue $A_P$ lies
outside $2^c$ times the polytope
\begin{align*}
\mathcal{T}\defeq 
\conv\set{uv^\transpose\mid 
u\in\binary^\mathcal{X}, v\in\binary^\mathcal{Y}},
\end{align*}
where $\conv$ denotes the convex hull. By convexity, 
$A_p$ lies outside of $\mathcal{T}$ if and only if there 
is a hyperplane (with normal $H$) separating the two; 
namely that $\iprod{A_P}{H}>2^c\iprod{R}{H}$ for all 
vertices $R$ of the polytope $\mathcal{T}$.

Let $\mu_k\colon\cube\times\cube\to\realspos$ be the distribution
on pairs $(x,y)$ obtained as follows. Sample $x$ uniformly at random
and obtain $y$ by flipping $k$ coordinates of $x$ chosen uniformly 
at random and with replacement (here if a coordinate gets flipped 
twice it reverts back to its initial value).
\begingroup
\def\thetheorem{\ref{thm:klogdelta}}
\begin{theorem}[restated]
For $k^2<\delta n$ we have 
$R_{\delta}(\Ham^n_k) = \Omega(k\log (1/\delta))$.
The bound applies even to protocols that may
output an arbitrary answer when $\lone{x-y} \notin \set{k,k+2}$.
\end{theorem}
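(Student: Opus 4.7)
I would carry out the corruption method outlined immediately above with the separating matrix $H\defeq \mu_k - \alpha\mu_{k+2}$, where $\alpha$ is of order $1/\delta$ chosen below, and use Theorem~\ref{thm:blakley-dixon} to pointwise bound $\iprod{R}{H}$ on combinatorial rectangles. Because $k^2<\delta n$, a pairs-of-repeats calculation shows that under $\mu_k$ the event $\lone{x-y}=k$ has probability at least $1-\binom{k}{2}/n\ge 1-\delta/2$, and similarly $\mu_{k+2}$ concentrates on pairs at distance exactly $k+2$. Any protocol $P$ with the promised error guarantee then satisfies $\iprod{A_P}{\mu_k}\ge 1-O(\delta)$ and $\iprod{A_P}{\mu_{k+2}}\le O(\delta)$, so for $\alpha=\Theta(1/\delta)$ with a suitable constant we obtain $\iprod{A_P}{H}\ge 1/4$.

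For the rectangle side, I would identify each $R=A\times B$ with the nonnegative unit vectors $u\defeq \mathbf{1}_A/\sqrt{|A|}$, $v\defeq \mathbf{1}_B/\sqrt{|B|}$ and the symmetric substochastic matrix $S(x,y)\defeq \mathbf{1}[\lone{x-y}=1]/n$, and write $a\defeq |A|/2^n$, $b\defeq |B|/2^n$. A direct computation gives $\mu_t(R)=\sqrt{ab}\cdot\iprod{v}{S^tu}$, so Theorem~\ref{thm:blakley-dixon} applied to the same-parity pair $k,k+2$ yields
\[
   \mu_{k+2}(R)\;\ge\;\mu_k(R)^{1+2/k}\,(\sqrt{ab})^{-2/k}\;\ge\;\mu_k(R)^{1+2/k},
\]
using $\sqrt{ab}\le 1$ in the last step. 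Setting $\tau\defeq \alpha^{-k/2}=(c_1\delta)^{k/2}$, I split into two cases: if $\mu_k(R)\le\tau$ then trivially $\iprod{R}{H}\le\tau$; otherwise $\mu_k(R)^{2/k}>1/\alpha$, so $\alpha\mu_{k+2}(R)\ge \alpha\mu_k(R)^{1+2/k}>\mu_k(R)$, giving $\iprod{R}{H}<0$. Either way $\max_R\iprod{R}{H}\le\tau$.

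Since $A_P$ is a convex combination of matrices, each a sum of at most $2^c$ of the rank-1 rectangles generating $\mathcal{T}$, we get $1/4 \le \iprod{A_P}{H}\le 2^c\tau=2^c(c_1\delta)^{k/2}$, whence $c\ge \tfrac{k}{2}\log(1/(c_1\delta))-O(1)=\Omega(k\log(1/\delta))$. The conceptual crux, and the main thing to get right, is the matched tuning $\alpha\cdot\tau^{2/k}=1$: on its own, the exponent $1+2/k$ in Theorem~\ref{thm:blakley-dixon} is only a polynomial improvement, but balancing the protocol's corruption budget $O(\delta)$ on the $\mu_{k+2}$ side against the Blakley-Dixon threshold $\tau=(c_1\delta)^{k/2}$ on the rectangle side converts that polynomial exponent into the extra factor of $k$ that distinguishes $\Omega(k\log(1/\delta))$ from the naive $\Omega(\log(1/\delta))$ rectangle bound.
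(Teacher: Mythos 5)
Your proposal follows exactly the paper's own proof: the corruption method with the separating hyperplane $H=\mu_k-\Theta(1/\delta)\mu_{k+2}$, the identification of a 0-1 rectangle $R$ with nonnegative unit vectors and the normalized Hamming-cube walk matrix, \autoref{thm:blakley-dixon} applied to the same-parity pair $(k,k+2)$, the observation that the normalization factor $(\sqrt{ab})^{-2/k}\ge 1$ may be dropped, and the threshold $\tau=(c_1\delta)^{k/2}$ tuned so that $\alpha\tau^{2/k}=1$. The only differences from the paper are the explicit constants ($1/3$ versus $1/4$, $3\delta$ versus $c_1\delta$), which are immaterial.
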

\addtocounter{theorem}{-1}
\endgroup
\begin{proof}
Suppose we have a randomized protocol for $\Ham^n_k$ with 
error probability $\delta$. Form the matrix $A$, where
$A(x,y)$ is the probability that the protocol reports
$\lone{x-y}\le k$ on input $(x,y)$.

Set $H = \mu_k - \mu_{k+2}/(3\delta)$. Let us first argue that
$\iprod{A}{H}\ge 1/3$. Note that when we sample $k$ elements 
from $[n]$ uniformly at random, by a union bound, the probability
that there is collision is at most $\binom{k}{2}/n$, therefore 
$\mu_k$ chooses a pair $(x,y)$ at distance $k$ with probability 
at least $1-\binom{k}{2}/n$.
Hence,
$\iprod{A}{\mu_k} 
   \ge (1-\delta)(1-\binom{k}{2}/n)
     > 1-3\delta/2$,
where in the last step we used $k^2< \delta n$.
Similarly, we have $\iprod{P}{\mu_{k+2}} \le 
                      \delta + \binom{k+2}{2}/n \le 3\delta/2$,
thus it follows that $\iprod{A}{H}\ge 1/3$ for $\delta \le 1/9$.

Next we argue that $\iprod{R}{H}< (3\delta)^{k/2}$ for any 
$R=uv^\transpose$ with $u,v\in\cube$. If 
$\iprod{R}{\mu_k}<(3\delta)^{k/2}$, we are done as 
$\iprod{R}{\mu_{k+2}}\ge 0$ and is a negative term in 
$\iprod{R}{H}$. If $\iprod{R}{\mu_k}\ge (3\delta)^{k/2}$
on the other hand,
observing $\iprod{R}{\mu_k}=\iprod{v}{W^k u}/2^n$,
where $W$ is the normalized adjacency matrix of the Hamming cube,
we have by \autoref{thm:blakley-dixon}
\begin{align*}
\nparen{\frac{\lnorm{u}{2}\lnorm{v}{2}}{2^n}}^{2/k}
\iprod{v}{W^{k+2}u} \ge 
\iprod{v}{W^{k}u}^{1+2/k}.
\end{align*}
Note $\lnorm{u}{2}\lnorm{v}{2}\le 2^n$ since $u,v$ are 0-1 vectors,
therefore 
$\iprod{R}{\mu_{k+2}}\ge 3\delta\iprod{R}{\mu_{k}}$
and hence $\iprod{R}{H}\le 0$. In either case we have shown
that $\iprod{R}{H} < (3\delta)^{k/2}$. This implies an 
$\log((3\delta)^{-k/2} / 3) = \Omega(k\log(1/\delta))$ bits 
lower bound on $R_\delta(\Ham^n_k)$.
\end{proof}
Interestingly, \autoref{thm:klogk} cannot be proved by a direct 
application of the corruption method described above. 
If we assume that the protocol
is supposed to output 1 on inputs $\lone{x-y}\le k$, then there are 
vertices of the polytope $\mathcal{T}$ for which 
the $\Omega(k\log(1/\delta))$ bound of 
\autoref{thm:klogdelta} is tight. If we assume that the protocol
is supposed to output 1 on inputs $\lone{x-y}> k$ on the other
hand, no bound above $\Omega(k)$ can be obtained, as there are
vertices for which this is tight.
If we insist however that the protocol outputs 1 for $\lone{x-y}=k$
and 0 for $\lone{x-y}\in\set{k-2,k+2}$ then a protocol with cost
smaller than $O(k\log (k/\delta))$
would be in violation of the near log-convexity principle we 
established in \autoref{thm:main} as we argue next.
Of course, if we had a $\delta$-error randomized protocol 
$P$ for 
$\Ham^{n+2}_k$ outputting 1 when $\lone{x-y}\in\set{k-2,k}$ 
and 0 if $\lone{x-y}=k+2$ 
(but without any guarantees for other types of inputs), then 
given inputs $a,b\in\cube$ Alice and Bob can run $P$
(say, in parallel)
on instances $(00a, 00b)$ and $(00a, 11b)$ and declare 
$\lone{a-b}=k$ if $P$ returns 1 on $(00a, 00b)$ and 
0 on $(00a, 11b)$.
This would lead to a protocol with twice the error probability
and communication cost of $P$, deciding between $\lone{a-b}=k$,
$\lone{a-b}=k-2$ and $\lone{a-b}=k+2$. The table below shows that
$P$ outputting 1 on $(00a, 00b)$ and 0 on $(00a, 11b)$ implies
$\lone{a-b}=k$ or at least one invocation of $P$ erred.

\begin{center}
\def\arraystretch{1.2}
\begin{tabular}{rccc}
Input       & $k-2$ & $k$  & $k+2$ \\\hline
$(00a,00b)$ &   1   &  1   &  0  \\
$(00a,11b)$ &   1   &  0   &  ? 
\end{tabular}
\end{center}

\begin{proof}[Proof of \autoref{thm:klogk}]
Suppose we have a $\delta$-error randomized protocol that 
outputs 1 when $\lone{x-y} = k$ and 0 when 
$\lone{x-y} = k-2$ or $\lone{x-y} = k+2$.
 
Form the matrix $A$, where $A(x,y)$ is the probability 
that the protocol reports that $\lone{x-y}= k$ on 
input $(x,y)$. Let $\alpha_1,\alpha_2> 0$ be some reals so that
\autoref{thm:main} implies
$m_{t+2}\ge t^{\alpha_1}m^{1+2/t}$ or 
$m_{t+2}m_{t-2}\ge \alpha_2{m_t}^2$ for $m_t$ 
defined in statement of this theorem. 

Set $H = \mu_{k} - (\mu_{k-2} + \mu_{k+2})/(6\delta)$.
Let us argue first that $\iprod{A}{H}\ge 1/3$.
One can verify that 
$\iprod{A}{\mu_k}\ge 
  (1-\delta)(1-\binom{k}{2}/n)> 1-3\delta/2$
and $\iprod{A}{\mu_{k-2}} + \iprod{A}{\mu_{k-2}} < 3\delta$.
Hence $\iprod{A}{H}\ge 1/3$ for $\delta \le 1/9$.

We upper bound $\iprod{R}{H}$ for some rank-1
matrix $R=uv^\transpose$ with 0-1 values. Let $W$ be
the normalized adjacency matrix of the Hamming cube graph.
Observe that $\iprod{R}{\mu_k} = \iprod{v}{W^ku}/2^n$.
By \autoref{thm:main}, either 
$\iprod{R}{\mu_{k+2}}\iprod{R}{\mu_{k-2}}
\ge \alpha_2 \iprod{R}{\mu_k}^2$ or 
\begin{align*}
\nparen{\frac{\lnorm{u}{2}\lnorm{v}{2}}{2^n}}^{2/k}
\iprod{R}{\mu_{k+2}}\ge k^{\alpha_1}\iprod{R}{\mu_{k}}^{1+2/k}.
\end{align*}
In the former case,
\begin{align*}
\frac{\iprod{R}{\mu_{k+2}}+\iprod{R}{\mu_{k-2}}}{2}\ge 
\sqrt{\iprod{R}{\mu_{k+2}}\iprod{R}{\mu_{k-2}}}\ge
\sqrt{\alpha_2}\iprod{R}{\mu_k},
\end{align*}
which implies that $\iprod{R}{H}<0$ whenever 
$\delta<2\sqrt{\alpha_2}/6$ (recall $\alpha_2$ is a constant).
In the latter case we get $\iprod{R}{H}<0$ unless
$6\delta\iprod{R}{\mu_{k+2}} \le \iprod{R}{\mu_{k}}$, 
which implies, 
recalling $\lnorm{v}{2}\lnorm{u}{2}\le 2^n$,
that $k^{\alpha_1}\iprod{R}{\mu_k}^{2/k}<6\delta$.
From this we get
\begin{align*}
  \iprod{R}{H}\le \iprod{R}{\mu_k}\le
  \nparen{\frac{6\delta}{k^{\alpha_1}}}^{k/2},
\end{align*}
and hence $\iprod{R}{H} < \nparen{\frac{6\delta}{k^{\alpha_1}}}^{k/2}$ 
in every case and 
$R_\delta(\Ham^n_k)=\Omega(k\log(k/\delta))$
whenever $k^2<\delta n$.
\end{proof}

For a protocol $P$, denote by $\Pi=\Pi(x,y)$ the random variable 
entailing all the messages communicated between the players 
on input $(x,y)$.
So far we have considered the communication cost of a protocol
which is the maximum length of $\Pi$ over all inputs and the 
configurations of the random source (these together determine 
the value of $\Pi$). When a distribution $\mu$
on the inputs is available, we may speak of
a more refined notion of cost, {\em the internal information cost}, 
for a protocol $P$ which is defined as
\begin{align*}
\mathsf{IC}_\mu(P)\defeq \muti{\Pi}{Y\emid X} + \muti{\Pi}{X\emid Y},
\end{align*}
where $(X,Y)\sim\mu$. Combining our 
\autoref{thm:klogk} with a result of \cite{KerenidisLLRX2015}
which relates information and communication costs of a protocol 
under suitable circumstances, one
can conclude that any randomized protocol for $\Ham^n_k$ has
information cost $\Omega(k\log k)$ as well, under the distribution
$\mu = (\mu_k+\mu_{k-2}+\mu_{k+2})/3$. However we note that instead 
of using \autoref{thm:main} black-box,
taking a closer look at the proof of \autoref{thm:blakley-dixon} 
and not performing the relaxation provided in \autoref{lem:negterms},
we get the following more directly.
\begin{theorem}
\label{thm:infocost}
Let $P$ be a protocol outputting 1 on pairs $(x,y)$ having
$\lone{x-y}=k$ with 
probability $1-\delta$ and outputting 0 on pairs 
$(x,y)$ having
$\lone{x-y}\in\set{k-2,k+2}$ with probability $1-\delta$. 
We have 
$\mathsf{IC}_{\mu_k}(P)=\Omega(k\log (k/\delta))$.
\end{theorem}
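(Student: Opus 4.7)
The plan is to rerun the argument of Section \ref{sec:monotonicity} on a per-leaf basis for the protocol $P$, keeping---rather than relaxing via Lemma \ref{lem:negterms}---the Rényi-$2$ entropy terms $\ryent{2}{\mu}$ and $\ryent{2}{\nu}$. These terms, once summed against the rectangle probabilities induced by $\mu_k$, account for the conditional entropies that make up $\mathsf{IC}_{\mu_k}(P)$.

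First I would fix the public coins of $P$ so that it becomes deterministic (doubling $\delta$). Each transcript $\pi$ then labels a combinatorial rectangle $R_\pi=A_\pi\times B_\pi$. For each $\pi$ define $\mu_\pi=\mathbf 1_{A_\pi}/|A_\pi|$ and $\nu_\pi=\mathbf 1_{B_\pi}/|B_\pi|$, let $W$ be the normalized adjacency matrix of the Hamming cube, and put $m_t(\pi)=\iprod{\nu_\pi}{W^t\mu_\pi}$, so that $\mu_t(R_\pi)=m_t(\pi)\cdot|A_\pi||B_\pi|/2^n$. Running \eqref{eq:kvk2}--\eqref{eq:absclaim1} on $(\mu_\pi,\nu_\pi)$ but leaving the Jensen step of \eqref{eq:psivsomega} alone yields, for each leaf, the refined inequality
\begin{equation*}
-\log m_{k+2}(\pi)\;\le\;\tfrac{k+2}{k}(-\log m_k(\pi))\;-\;\tfrac{1}{k}\Lambda(\pi),
\end{equation*}
where $\Lambda(\pi)=\kldiv{X_0}{F_0}+\kldiv{X_k}{B_k}+\kldiv{X_{-1}\mid X_0}{B_{-1}\mid B_0}+\kldiv{X_{k+1}\mid X_k}{F_{k+1}\mid F_k}$ is the sum of the divergence terms appearing inside Lemma \ref{lem:negterms} for the forward/backward walks associated to $(\mu_\pi,\nu_\pi)$.

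Next I would weight the per-leaf inequality by $\Pr_{\mu_k}[\Pi=\pi]=\mu_k(R_\pi)$, sum, and combine with the mirror inequality relating $m_{k-2}(\pi)$ to $m_k(\pi)$. On the left-hand side, the corruption calculation that drives the proof of Theorem \ref{thm:klogk}---which simultaneously exploits the two sides of the $\delta$-error hypothesis on $\mu_{k-2}$ and $\mu_{k+2}$---contributes $\Omega(k\log(k/\delta))$. On the right-hand side, the averaged divergence $\E_\pi[\Lambda(\pi)]$ is to be identified, after matching the reference walks $F,B$ of Section \ref{sec:refwalk} to the \emph{true} $\mu_k$-conditional marginals of $X$ and $Y$ (i.e., replacing the uniform-on-$A_\pi$ choices by $\dist(X\mid\Pi{=}\pi)$ and $\dist(Y\mid\Pi{=}\pi)$) with $\muti{\Pi}{X\mid Y}+\muti{\Pi}{Y\mid X}=\mathsf{IC}_{\mu_k}(P)$. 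This identification is what Lemma \ref{lem:negterms} obliterates in the original proof by replacing the divergence terms with their much weaker $\ryent{2}{\cdot}$ lower bounds.

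The main obstacle is precisely this final identification. Under a product input distribution the divergences of $\Lambda(\pi)$ would match $I(\Pi;X)+I(\Pi;Y)$ straightforwardly, but $\mu_k$ couples $X$ and $Y$ tightly---they are at fixed Hamming distance $k$---so the reference walks $F^k,B^k$ must be redesigned to reflect the $Y$-conditional marginal of $X$ (respectively $X$-conditional of $Y$) in order for the averaged divergence terms to pick up the \emph{conditional} mutual informations that constitute $\mathsf{IC}_{\mu_k}$, rather than the unconditional ones. Doing this bookkeeping cleanly---while simultaneously verifying that replacing uniforms by $\mu_k$-conditionals does not degrade the $\Omega(k\log(k/\delta))$ lower bound on the left-hand side that came from the corruption step---is where I expect the bulk of the work to lie; beyond this, the argument is a mechanical recombination of the identities already developed in Section \ref{sec:monotonicity} and the proof of Theorem \ref{thm:klogk}.
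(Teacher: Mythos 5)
Your high-level plan does track the paper's one-sentence hint (keep the divergence terms of Lemma~\ref{lem:negterms} rather than discarding them in favor of R\'enyi-$2$ entropies), but as written the proposal has several gaps that go beyond bookkeeping and that you only partially acknowledge.

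First, the identification of $\E_\pi[\Lambda(\pi)]$ with $\mathsf{IC}_{\mu_k}(P)$ is the entire content of the theorem, and it is simply deferred. Unwinding the two terms on the Alice side, one finds $\kldiv{X_0}{F^k_0}+\kldiv{X_{-1}\mid X_0}{B^k_{-1}\mid B^k_0}=2\log|A_\pi|-H(X\mid\Pi{=}\pi)$ (for uniform $\mu_\pi$), and averaging gives $I(\Pi;X)+2\E_\pi[\log|A_\pi|]-n$, which is neither $I(\Pi;X)$ nor $I(\Pi;X\mid Y)$; the additive offset $2\E_\pi[\log|A_\pi|]-n$ cannot be waved away, and the conditional mutual information $I(\Pi;X\mid Y)$ that appears in $\mathsf{IC}_{\mu_k}$ involves the $Y$-conditional entropy $H(X\mid Y)$, not $n=H(X)$. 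You flag this as the hard part, but without resolving it the argument is not a proof.

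Second, the direction of your per-leaf inequality is suspicious. What falls out of Lemma~\ref{lem:zdiv} plus Lemma~\ref{lem:klcond} is
\begin{equation*}
\Lambda(\pi)\;\le\;(k+2)\bigl(-\log m_k(\pi)\bigr)-k\bigl(-\log m_{k+2}(\pi)\bigr),
\end{equation*}
i.e.\ an \emph{upper} bound on $\Lambda(\pi)$. If $\E_\pi[\Lambda(\pi)]$ were the information cost, you would be deriving an upper bound on $\mathsf{IC}_{\mu_k}(P)$, which is the wrong direction. You need to explain concretely how the protocol's accuracy on $\mu_{k-2},\mu_k,\mu_{k+2}$ reverses this into a lower bound, and that step is not in the proposal.

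Third, the strength of the bound. Your per-leaf inequality is the Blakley--Dixon inequality refined by $\Lambda$ but with the $\muti{J}{Z}$ term of \eqref{eq:ztozcondj} still discarded via \eqref{eq:mutrelax}. On the communication side, exactly this level of refinement gives Theorem~\ref{thm:klogdelta}, i.e.\ $\Omega(k\log(1/\delta))$, and the extra $\log k$ in Theorem~\ref{thm:klogk} comes from the $\muti{J}{Z}$ term (or, what is the same, Theorem~\ref{thm:main} used black-box). Your appeal to ``the corruption calculation that drives the proof of Theorem~\ref{thm:klogk}'' is therefore circular: that calculation needs the $\min$-form near-log-convexity, which your per-leaf inequality does not supply. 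To reach $\Omega(k\log(k/\delta))$ you would need to keep \emph{both} the $\Lambda(\pi)$ terms and the $\muti{J}{Z}$ term, and the proposal as written keeps only the former.

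Finally, replacing the uniform measures $\mathbf 1_{A_\pi}/|A_\pi|$, $\mathbf 1_{B_\pi}/|B_\pi|$ by the $\mu_k$-conditional marginals of $X$ and $Y$ breaks the identity $\mu_t(R_\pi)=m_t(\pi)|A_\pi||B_\pi|/2^n$ that ties the corruption side of the calculation to $m_t(\pi)$; with the conditional marginals $W^k(\mu_\pi,\nu_\pi)$ is not a normalized rectangle weight under $\mu_k$. So this suggested fix, even if it helped with the $\mathsf{IC}$ identification, would sever the other half of the argument. In short, the outline correctly locates the paper's hint but leaves the crux unresolved and, as currently phrased, contains a direction error and misses where the $\log k$ factor actually originates.
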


Let us finally mention another highly related problem, 
the so called the gap Hamming distance problem. 
In $\Ghd^n_{k}$, each of the players receive a bit string,
respectively $x,y\in\cube$, with the promise that either 
$\lone{x-y}\le k$ or $\lone{x-y}\ge k + \sqrt{k}$.
Their goal is to determine which is the case for any given input.
In \cite{ChakrabartiR2012}, an $\Omega(k)$ lower bound
for this problem was shown, which applies to protocols with any
number of rounds.
Here we conjecture an improvement to this bound
and argue that it would follow from a natural analogue of
of \autoref{thm:main} for continuous time Markov chains, which
we discuss in \autoref{sec:discussion}.
\begin{conjecture}
\label{conj:ghd}
For $k<\delta n$, we have $R_\delta(\Ghd^n_k) 
= \Omega(k \log(1/\delta))$.
\end{conjecture}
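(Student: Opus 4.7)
The goal is to show $\mathsf{IC}_{\mu_k}(P) = \Omega(k\log(k/\delta))$ for any protocol $P$ distinguishing $\lone{x-y}=k$ from $\lone{x-y}\in\set{k-2,k+2}$ with error $\delta$. The plan follows the hint in the paper: revisit the proof of \autoref{thm:blakley-dixon}, but avoid the $L^2$ relaxation of \autoref{lem:negterms}. In the proof of \autoref{thm:klogk}, that relaxation replaces the divergence terms $\kldiv{X_0}{F^k_0}+\kldiv{X_{-1}\mid X_0}{B^k_{-1}\mid B^k_0}$ (and their symmetric counterparts) with the R\'enyi-$2$ entropies $\ryent{2}{\mu}+\ryent{2}{\nu}$, which in turn produce the $\lnorm{u}{2}\lnorm{v}{2}\le 2^n$ bound that trades rank for communication cost. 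For information cost the divergence form is already what we need---it is a sum of conditional divergences, and after averaging over transcripts will yield the information cost itself.

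Concretely, let $\Pi$ be the transcript random variable of $P$ on input $(X,Y)\sim\mu_k$. By the rectangle property, $(X,Y)\mid \Pi=\pi$ factors as $\hat{\mu}_\pi\otimes\hat{\nu}_\pi$ for each $\pi$. Rerun the proof of \autoref{thm:blakley-dixon} with $\mu:=\hat{\mu}_\pi$, $\nu:=\hat{\nu}_\pi$, and $t=k$, but stop at the inequality
\begin{align*}
-\log \iprod{\hat{\nu}_\pi}{W^{k+2}\hat{\mu}_\pi}
\;\le\; \tfrac{k+2}{k}\log\tfrac{1}{\iprod{\hat{\nu}_\pi}{W^k \hat{\mu}_\pi}}
   \;-\;\tfrac{1}{k}\Delta(\pi),
\end{align*}
where $\Delta(\pi)$ is the sum of the four divergence terms from \autoref{lem:zdiv} evaluated at $\hat{\mu}_\pi,\hat{\nu}_\pi$, \emph{before} \autoref{lem:negterms} is applied. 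An analogous inequality holds for the $W^{k-2}$ side. The correctness hypotheses of the theorem bound the $\log\iprod{\hat{\nu}_\pi}{W^{k\pm2}\hat{\mu}_\pi}$ terms on transcripts that typically declare distance $k$; taking expectations over $\pi\sim\Pi$ turns $\E_\pi[\Delta(\pi)]$ into the information cost, while the left and right sides become quantities controlled by the protocol's error probability to within a $\log(k/\delta)$ factor per ``step''.

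The main obstacle is making the identification $\E_\pi[\Delta(\pi)]\asymp\mathsf{IC}_{\mu_k}(P)$ rigorous. Each of the four divergences in $\Delta(\pi)$, e.g.\ $\kldiv{X_0}{F^k_0}$, measures how much the rectangle distribution $\hat{\mu}_\pi$ diverges from the marginal of $\mu_k$ on Alice's side, conditioned on the event that the auxiliary walk constructed in \autoref{sec:refwalk} returns to the origin. Upon averaging over $\pi$, this becomes a mutual-information-type quantity between $\Pi$ and $X$, and once also conditioned on Bob's side it becomes $\muti{\Pi}{X\mid Y}$ up to constants, via the standard Braverman--Rao style identity that interprets averaged posterior-to-prior divergences as conditional mutual information; the symmetric pair of terms handles $\muti{\Pi}{Y\mid X}$. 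Assembling these pieces with the $\log(k/\delta)$ per-step gain from the protocol's error guarantees on distances $k$ and $k\pm2$ (exactly as in the proof of \autoref{thm:klogk}, but tracked multiplicatively at the level of divergences rather than additively at the level of $\log$-rectangle-size) yields the desired $\Omega(k\log(k/\delta))$ information cost lower bound.
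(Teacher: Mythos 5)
Your proposal does not address \autoref{conj:ghd} at all: the very first sentence announces a bound on $\mathsf{IC}_{\mu_k}(P)$ for protocols distinguishing $\lone{x-y}=k$ from $\lone{x-y}\in\set{k-2,k+2}$, which is precisely \autoref{thm:infocost} for the $k$-Hamming distance problem. \autoref{conj:ghd} concerns the gap Hamming distance problem $\Ghd^n_k$, whose promise is $\lone{x-y}\le k$ versus $\lone{x-y}\ge k+\sqrt{k}$. A $\Ghd^n_k$ protocol carries no correctness guarantee whatsoever on inputs at distance $k+2$ (or at any distance strictly between $k$ and $k+\sqrt{k}$), so a lower bound for the $k$-versus-$(k\pm 2)$ task does not transfer. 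Moreover you prove $\Omega(k\log(k/\delta))$, while the conjectured bound is the (for this problem, incomparable) $\Omega(k\log(1/\delta))$.

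This mismatch is not a cosmetic one. The gap of $\sqrt{k}$ is exactly why the paper leaves \autoref{conj:ghd} as a conjecture: \autoref{thm:main} controls a jump of $2$ in the time parameter, and iterating it $\Theta(\sqrt{k})$ times to reach distance $k+\sqrt{k}$ destroys the quantitative gain. The paper's proposed route is therefore through \autoref{conj:continuouslogconv}, a continuous-time analogue whose differential form $x^2(\log f)''(x)\ge 2\log f(x)$ can be integrated over an interval of length $\Theta(\sqrt{k})$, together with the matrix $e^{t(S-I)}$ in place of $S^t$ so that the relevant walk length becomes a continuous parameter. Your argument --- dropping the $L^2$ relaxation of \autoref{lem:negterms} and averaging posterior-to-prior divergences over transcripts to recover information cost --- is the right machinery for \autoref{thm:infocost}, but it proves a different statement from the one asked, and that other statement does not imply \autoref{conj:ghd}.
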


\subsection{Parity decision trees}
\def\PD{\mathrm{PD}}
\def\PS{\mathrm{PS}}
In the parity decision tree model, we are given a string 
$x\in\field_2^n$ and our goal is to determine whether 
$x$ satisfies a fixed predicate
$P\colon\field_2^n\to\binary$ by only making linear 
measurements of the form $\iprod{x}{y}$ for some 
$y\in \field_2^n$ we get to choose. Here, the inner product
is over $\field_2^n$, and therefore we get a single bit 
answer for every measurement we make.

Such measurements can be identified by binary decision
trees wherein each internal node is labeled by a
$y\in\field_2^n$ denoting the linear measurement
$\iprod{x}{y}$ we would make at that node and each leaf
is labeled by a YES or a NO denoting the final decision we 
arrive. Given such a tree and an $x$, the output of the 
decision tree is obtained by a root to leaf walk, where at 
each internal node $v$ with label $y_v$, 
we perform the measurement 
$\iprod{x}{y_v}$ and walk to the left child of $v$ if 
$\iprod{x}{y_v}=0$ and to the right child if 
$\iprod{x}{y_v}=1$. If a leaf node is reached, 
the label of the node is taken as the answer of 
the decision tree. Two quantities we are concerned with
are the depth and the size (i.e., the total number of nodes)
of the tree.

A $\delta$-error randomized decision tree is a distribution
$\nu$ over deterministic trees such that for any fixed $x$, 
the sampled decision tree outputs the correct answer with 
probability at least $1-\delta$, where the randomness is over 
the choice of the decision tree from $\nu$. 
The depth and the size of a randomized decision tree can be 
taken as the maximum over the decision trees in the support 
of $\nu$ (here, one can also take the average depth or size also;
our result on size actually lower bounds this potentially smaller
quantity).

For a predicate $P\colon\field_2^n\to\binary$, let 
$\PD_\delta(P)$ be the minimum, over all randomized decision 
trees $T$ computing $P$ with probability $1-\delta$, of the 
depth of $T$.
Let $\PS_\delta(P)$ be the minimum, over all randomized decision 
trees $T$ computing $P$ with probability $1-\delta$, of the 
size of $T$. The following inequalities are immediate
\begin{align}
  R_\delta(P\circ\oplus)&\le 2\PD_\delta(P)\label{eq:rvspd},\\
  \log\PS_\delta(P)&\le \PD_\delta(P)\nonumber,
\end{align}
where $P\circ \oplus$ is the two player communication game 
in which the two players are given strings $x,y\in\field_2^n$ 
and are required to calculate $P(x+y)$. 
We remark that $\log \PS_\delta$ is incomparable to $R_\delta$ 
in general.

Here we study the predicate $H^n_k$ which equals 1 if 
and only if the Hamming weight of its input is precisely $k$. 
By \autoref{eq:rvspd} and a padding argument similar to the 
one we gave before the proof of \autoref{thm:klogk}, each lower 
bound for $\Ham^n_k$ listed in \autoref{table:hamvsdisj} applies 
to $\PD_\delta(H^n_k)$ as well. In \cite{BlaisK2012} another 
direct $\Omega(k)$ bound for $\PD_\delta(H^n_k)$ was shown. In 
\cite{BhrushundiCK2014}, showing an $\Omega(k\log k)$ lower bound 
to a variant of $\PD_\delta(H^n_k)$ to obtain tight bounds for 
$k$-linearity problem (see \autoref{sec:proptest}) was suggested. 
Finally, our \autoref{thm:klogk} shows that 
$\PD_\delta(H^n_k)=\Omega(k\log(k/\delta))$, which is tight. 
Next we show the same bound holds even for $\log \PS_\delta(H^n_k)$.

\begingroup
\def\thetheorem{\ref{thm:paritysize}}
\begin{theorem}[restated]
For $k^2<\delta n$, 
$\log \PS_\delta(H^n_k)=\Omega(k\log(k/\delta))$.
\end{theorem}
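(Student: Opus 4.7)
The plan is to reduce to the information cost bound of \autoref{thm:infocost}. The naive conversion of a size-$s$ PDT to a communication protocol of cost $O(\text{depth})$ fails, since trees of size $s$ can have depth as large as $s-1$; however, the derived protocol reveals only $O(\log s)$ \emph{information} about each party's input, and \autoref{thm:infocost} is a bound on information cost rather than communication cost, which is exactly what is needed.

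Concretely, given a $\delta$-error randomized PDT of size $s$ for $H^n_k$, I will form a protocol $P$ for $H^n_k\circ\oplus$ as follows. Using public randomness, Alice and Bob first sample a common deterministic tree $T$ from the PDT's distribution; they then simulate $T$ on $x+y$, and at each internal node $v$ with query vector $y_v$, Alice sends $\iprod{x}{y_v}$ and Bob sends $\iprod{y}{y_v}$, letting both compute $\iprod{x+y}{y_v}$ and continue. The output is the label of the leaf reached. Because the simulation is exact, $P$ outputs $1$ with probability at least $1-\delta$ on pairs $(x,y)$ with $\lone{x-y}=k$ and outputs $0$ with probability at least $1-\delta$ on pairs with $\lone{x-y}\in\set{k-2,k+2}$, matching the hypotheses of \autoref{thm:infocost}.

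The key step will be to bound $\mathsf{IC}_{\mu_k}(P)\le 2\log s$. Let $R$ denote the public randomness and $\Pi$ the transcript. Conditioned on $(Y,R)$, the tree $T$ is fixed and Bob knows $\iprod{Y}{y_v}$ at every node of $T$; moreover, at each node visited on the path, Alice's bit $\iprod{X}{y_v}$ is recoverable as the XOR of the publicly observed query answer and Bob's known bit. Hence $\Pi$ is a deterministic function of the root-to-leaf path taken, and since $T$ has at most $s$ leaves, $\Pi$ takes at most $s$ distinct values given $(Y,R)$. Thus $\muti{\Pi}{X\emid Y,R}\le\log s$, and because $R$ is independent of $(X,Y)$, conditioning on $R$ only increases conditional mutual information (via $H(X\emid\Pi,Y,R)\le H(X\emid\Pi,Y)$), so $\muti{\Pi}{X\emid Y}\le\muti{\Pi}{X\emid Y,R}\le\log s$. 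A symmetric argument bounds $\muti{\Pi}{Y\emid X}\le\log s$.

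Combining, $2\log s\ge\mathsf{IC}_{\mu_k}(P)=\Omega(k\log(k/\delta))$ by \autoref{thm:infocost}, yielding $\log\PS_\delta(H^n_k)\ge\log s=\Omega(k\log(k/\delta))$. I do not expect any substantial obstacle beyond verifying the standard fact that independent external randomness only inflates conditional mutual information; the whole argument hinges on using \autoref{thm:infocost} rather than \autoref{thm:klogk}, because only an \emph{information}-cost bound converts, via the entropy of the simulated transcript, into a bound on $\log s$ rather than on depth.
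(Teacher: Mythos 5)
Your proof is correct but takes a genuinely different route from the paper's. The paper directly adapts the corruption method to parity decision trees: it observes that a size-$s$ PDT's acceptance function $A$ lies in $s\mathcal{P}$, where $\mathcal{P}$ is the convex hull of indicator functions of affine subspaces of $\field_2^n$; it then bounds $\iprod{V}{H}$ for each such indicator $V$ by recognizing $\iprod{V}{\mu_k}$ as $\iprod{\indicate{c}}{S^k\indicate{0}}$ for a convolution random walk $S$ determined by the affine subspace (at each step, add a uniformly random column of $B$), and applying \autoref{thm:main} to that walk. Your route instead converts a size-$s$ PDT into a two-party protocol of internal information cost at most $2\log s$ under $\mu_k$ and invokes \autoref{thm:infocost}. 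Both exploit the same structural fact --- that a size-$s$ PDT partitions $\field_2^n$ into at most $s$ affine cells --- the paper geometrically (a polytope), you information-theoretically (the transcript takes at most $s$ values once $(Y,R)$ is fixed, so its conditional entropy is at most $\log s$). The trade-off: the paper's proof is self-contained given \autoref{thm:main}, whereas yours delegates to \autoref{thm:infocost}, which the paper states but for which it gives only a pointer (inspect the proof of \autoref{thm:blakley-dixon} without the relaxation in \autoref{lem:negterms}); on the other hand, your reduction makes transparent why \emph{size} rather than \emph{depth} is the parameter controlled by the bound. Your monotonicity step $\muti{\Pi}{X\emid Y}\le\muti{\Pi}{X\emid Y,R}$, valid since $R$ is independent of $(X,Y)$, is the only technical wrinkle and you handle it correctly.
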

\addtocounter{theorem}{-1}
\endgroup
\begin{proof}
\def\f{\field_2^n}
The proof is very similar to that of \autoref{thm:klogk}, so we only 
describe the differences.

Let $T$ be a $\delta$-error randomized parity decision tree computing
$H^n_k$. Form $A\colon\f\to[0,1]$ so that $A(x)$ is the probability 
$T$ outputs 1 on input $x\in\f$.
Define the polytope
\begin{align*}
\mathcal{P}\defeq\conv
\set{x\mapsto \indicate{}[Bx = c]\mid B\in\field_2^{n\times n}, c\in\f}
\end{align*}
whose vertices are indicator functions for affine subspaces of $\f$.
Given a parity decision tree, the set of inputs that end up in
a particular leaf of it is an affine subspace in $\f$. Therefore if
$T$ has at most $s$ leaves, then $A$ is inside $s\mathcal{P}$.
It remains to demonstrate a hyperplane
with normal $H$ so that 
$\iprod{A}{H}>s\iprod{V}{H}$ for any vertex $V$ of the polytope 
$\mathcal{P}$ for $s=\exp \Omega(k\log(k/\delta))$.

Let $\mu_k$ be a distribution on $\f$ obtained as follows. 
Start with the $0$ vector, and flip a coordinate chosen 
uniformly at random with replacement $k$ times. 
Here, flipping a coordinate an even number of times leaves it 
as $0$.
Set $H= \mu_k - (\mu_{k-2}+\mu_{k+2})/(6\delta)$.

First observe that 
$\iprod{A}{\mu_k}>(1-\delta)(1-\binom{k}{2}/n)>1-3\delta/2$
and $\iprod{A}{\mu_{k+2}}+ \iprod{A}{\mu_{k-2}}<3\delta$ so
$\iprod{A}{H}\ge 1/3$ for $\delta\le 1/9$.
Next we would like to upper bound $\iprod{V}{H}$ for an indicator
function $V$ of an affine subspace $\setbuilder{x\in\f}{Bx=c}$.
The key observation is
\begin{align}
\iprod{V}{\mu_k} = \iprod{\indicate{c}}{S^k\indicate{0}}
\label{eq:chi}
\end{align}
where $S$ is a stochastic matrix describing the following transition:
For any $x\in\f$, sample a column $y$ of $B\in\field_2^{n\times n}$ 
uniformly at random and transition to $x+y$. 
Namely, the right hand side of \autoref{eq:chi} describes the following
probability.
We start with the 0 vector in $\f$ and in each time step sample a 
uniform random column $y$ of $B$ and add $y$ to the current state. 
We measure the probability of reaching $c\in\f$ at time step $k$. 
Having observed \autoref{eq:chi}, and that 
$\lnorm{\indicate{0}}{2} = \lnorm{\indicate{c}}{2} = 1$,
the rest of the proof is identical to that of \autoref{thm:klogk}: 
by \autoref{thm:main}, we either have 
\begin{align*}
\iprod{\indicate{c}}{S^{k+2}\indicate{0}}
\iprod{\indicate{c}}{S^{k-2}\indicate{0}}\ge
  \alpha_2 \iprod{\indicate{c}}{S^k\indicate{0}}^2
\end{align*}
or
\begin{align*}
\iprod{\indicate{c}}{S^{k+2}\indicate{0}}
\ge k^{\alpha_1} \iprod{\indicate{c}}{S^k\indicate{0}}^{1+2/k}.
\end{align*}
In either event, we conclude that $\iprod{V}{H}\le 
\nparen{\frac{6\delta}{k^{\alpha_1}}}^{k/2}$.  This completes the proof.
\end{proof}
Note in \autoref{thm:klogk}, we use \autoref{thm:main} with a 
simple and fixed $S$ 
(i.e., the standard random walk on the Hamming cube), 
but with complicated vectors $u,v$ that come from the particular 
communication protocol whose communication cost we would like 
to lower bound. By contrast, in \autoref{thm:paritysize} the 
vectors $u,v$ are simple point masses on states $0$ and $c$ but 
the matrix $S$ is a convolution random walk on the Hamming cube 
that comes from the particular decision tree whose size we lower 
bound.

\subsection{Property testing}
\label{sec:proptest}
In the property testing model, given black box access to 
an otherwise unknown function $f\colon\field_2^n\to\field_2$, 
our goal is to tell apart whether $x\in P$ for some fixed set 
of functions $P$ or $\lone{f-g}\ge \eps2^n$ for any $g\in P$. 
Here, the black box queries are done by providing an input 
$x\in\field_2^n$ to the function and observing $f(x)$.

A function $f\colon\field_2^n\to\field_2$ is
called $k$-linear if $f$ is given by
\begin{align*}
f(x) = \sum_{i\in S}x_i
\end{align*} 
for some $S\subseteq[n]$ of size at most $k$.
By combining our communication complexity lower bound 
\autoref{thm:klogk} with the reduction technique developed 
in \cite{BlaisBM2012} or by combining
our parity decision tree lower bound \autoref{thm:paritysize} 
with a reduction given in \cite{BhrushundiCK2014}, one obtains 
the following.

\begingroup
\def\thecorollary{\ref{cor:propertytest}}
\begin{corollary}[restated]
Any $\delta$-error property testing algorithm for 
$k$-linearity with $\epsilon=1/2$ requires 
$\Omega(k\log (k/\delta))$ queries.
\end{corollary}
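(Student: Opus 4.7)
The plan is to reduce from one of the lower bounds already proved, following the simulation of \cite{BlaisBM2012} from $\Ham^n_k$. Let $\mathcal{A}$ be a $q$-query, $\delta$-error property tester for $k$-linearity at distance parameter $\eps = 1/2$, and pick $n$ large enough that $k^2 < \delta n$. I would convert $\mathcal{A}$ into a two-party protocol for $\Ham^n_k$ on inputs $x,y \in \field_2^n$: using shared randomness, the players jointly simulate $\mathcal{A}$ on the implicit oracle $g(z) \defeq \iprod{x}{z} \oplus \iprod{y}{z}$. To serve an individual query $z$, Alice sends $\iprod{x}{z}$ and Bob sends $\iprod{y}{z}$, so $q$ queries cost at most $2q$ bits of communication.

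The key observation is that $g$ is the character of $\field_2^n$ indexed by the coordinates where $x$ and $y$ differ, hence is exactly $\lone{x \oplus y}$-linear. Since any two distinct characters of $\field_2^n$ are at Hamming distance $2^{n-1}$, the function $g$ is either exactly $k$-linear (when $\lone{x \oplus y} = k$) or $1/2$-far from every $k$-linear function (when $\lone{x \oplus y} \neq k$). Consequently $\mathcal{A}$ accepts with probability $\geq 1-\delta$ in the former case and rejects with probability $\geq 1-\delta$ in the latter, which is precisely what is needed to satisfy the hypothesis of Theorem \ref{thm:klogk} (and indeed yields a protocol whose behavior is correct even outside $\lone{x-y} \in \set{k-2, k, k+2}$). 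Adaptivity of $\mathcal{A}$ poses no obstacle because each query is served in isolation by a single round of two bits. Applying Theorem \ref{thm:klogk} then forces $2q = \Omega(k\log(k/\delta))$.

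The parallel route of \cite{BhrushundiCK2014} bypasses communication entirely: given an input $x \in \field_2^n$ for the $H^n_k$ predicate, run $\mathcal{A}$ on $f_x(z) \defeq \iprod{x}{z}$, which is $\lone{x}$-linear; each query becomes a parity measurement on $x$, producing a randomized parity decision tree of size at most $2^q$ that decides $H^n_k$, and Theorem \ref{thm:paritysize} then forces $2^q \geq \exp\Omega(k\log(k/\delta))$. In either path the substantive work lives inside Theorem \ref{thm:klogk} or \ref{thm:paritysize}; the reduction itself is bookkeeping about the characters of $\field_2^n$, and there is no real obstacle to overcome beyond invoking the appropriate theorem.
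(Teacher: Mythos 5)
Your high-level plan---simulate the tester on the character $\chi_{x\oplus y}$ (respectively $\chi_x$) and invoke \autoref{thm:klogk} (respectively \autoref{thm:paritysize})---is exactly the reduction the paper points to via \cite{BlaisBM2012} and \cite{BhrushundiCK2014}. The communication route does go through, but for a slightly different reason than you state. The paper defines a $k$-linear function as one depending on \emph{at most} $k$ coordinates (\autoref{sec:proptest}), so when $\lone{x\oplus y}<k$ the function $g$ is \emph{still} $k$-linear and $\mathcal{A}$ accepts. Your claim that $g$ is $1/2$-far from $k$-linear whenever $\lone{x\oplus y}\neq k$ is therefore false; the protocol you actually get outputs $1$ precisely when $\lone{x-y}\le k$, not when $\lone{x-y}=k$. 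Fortunately that is exactly what $\Ham^n_k$ asks for and in particular satisfies the guarantee in \autoref{thm:klogk} (output $1$ on $\set{k-2,k}$ and $0$ on $\set{k+2}$), so the $\Omega(k\log(k/\delta))$ bound on $2q$ still follows---your conclusion is right even though the intermediate claim is wrong.

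The parity-decision-tree route, however, has a genuine gap stemming from the same confusion. The tree you build from $\mathcal{A}$ outputs $1$ exactly on inputs of weight $\le k$, so in particular it returns $1$ on every weight-$(k-2)$ input, whereas the $k$-Hamming weight predicate $H^n_k$ demands $0$ there. It is therefore \emph{not} a $\delta$-error tree for $H^n_k$, and \autoref{thm:paritysize} cannot be invoked as a black box: its corruption argument relies on the tree answering $0$ with probability $\ge 1-\delta$ under $\mu_{k-2}$. The fix is the two-bit padding trick the paper spells out just before the proof of \autoref{thm:klogk}: run the ``$\le k$'' tree on both $00x$ and $11x$ (a parity query on these is a parity query on $x$ up to a known constant) and declare ``weight $=k$'' iff the first run accepts and the second rejects. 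This doubles the size and error and produces a tree that is correct on weights $k-2$, $k$, $k+2$, which is all the corruption argument needs. That padding is the bookkeeping carried out in \cite{BhrushundiCK2014}; your sketch omits it and the conclusion does not follow without it.
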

\addtocounter{theorem}{-1}
\endgroup
In fact through this, one obtains similar lower bounds to
property testing for $k$-juntas, $k$-term DNFs, size-$k$ formulas,
size-$k$ decision trees, $k$-sparse $\field_2$-polynomials;
see \cite{Blais2009, ChakrabortyGM2011}.

\section{Discussion}
\label{sec:discussion}
We showed that for a symmetric matrix 
$S\colon\Omega\times\Omega\to\realspos$ and 
unit vectors $u,v\colon\Omega\to\realspos$, defining
$m_t = \iprod{v}{S^tu}$ for $t=0,1,\ldots$, we have 
\begin{align}
m_{t+2}    &\ge m_t^{1+2/t}, \text{ and}\label{eq:dbd66}\\
m_{t+2}    &\ge m_t^{1+2/t}\cdot \min\set{t^{1-\eps}, 
\ceil{\delta\frac{m_t^{1-2/t}}{m_{t-2}}}} \label{eq:dconv}
\end{align}
and argued that \autoref{eq:dconv} and \eqref{eq:dbd66}, 
in this order, are best viewed as gradual weakenings of
the log-convexity of $\set{m_t}_{t=0}^\infty$.
We conjecture that a similar principle holds
true for continuous time Markov chains as well.

Call a function $f\colon\realspos\to[0,1]$,
whose logarithm is continuously twice differentiable 
(i.e., $\log f\in C^2(\realspos)$, {\em nearly-log-convex} 
if $x^2 (\log f)''(x)\ge 2 \log f(x)$ for $x\in\realspos$. 
Note that $\log f \le 0$, therefore this is a weakening
of the usual log-convexity definition, which requires 
$(\log f)''\ge 0$.

\begin{conjecture}
\label{conj:continuouslogconv}
Let $S\colon\Omega\times\Omega\to\realspos$
be a symmetric substochastic matrix and $u,v\colon\Omega\to\realspos$
be unit vectors. The function
\begin{align*}
t\mapsto \iprod{v}{e^{t(S-I)}u}
\end{align*}
is nearly-log-convex.
\end{conjecture}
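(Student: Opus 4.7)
The plan is to prove \autoref{conj:continuouslogconv} by developing a continuous-time analogue of the proof of \autoref{thm:main} in \autoref{sec:logconvex}. The target is to establish
\[
\log f(x+\eta)+\log f(x-\eta)-2\log f(x)\;\ge\; \frac{2\eta^2}{x^2}\log f(x)+o(\eta^2)
\]
as $\eta\to 0^+$ for every $x>0$, where $f(x)\defeq\iprod{v}{e^{x(S-I)}u}$. Dividing by $\eta^2$ and letting $\eta\to 0$ yields $(\log f)''(x)\ge (2/x^2)\log f(x)$, which is the nearly-log-convex condition; since $\log f(x)\le 0$, the right hand side is nonpositive, matching the fact that $f$ need not be genuinely log-convex (as illustrated by the continuous-time analogue of \autoref{fig:ex1}).

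First I would set up continuous-time analogues of the reference walks $F^x, B^x$ from \autoref{sec:refwalk}: continuous-time Markov chains on $\Omega\cup\set{r}$ with generator $A\defeq S-I$ on $\Omega$, endpointed at $r$ via $\mu\defeq u/\lone{u}$ and $\nu\defeq v/\lone{v}$. Conditioning $F^x$ on returning to $r$ yields a process $X^x$ satisfying the continuous analogues of \autoref{lem:xdist}, \autoref{lem:xbackwards}, and \autoref{lem:xvsf}, in particular $\kldiv{X^x}{F^x}=-\log f(x)$. These analogues should go through because the continuous-time semigroup $e^{sA}$ preserves the structural properties (symmetry, entrywise nonnegativity, substochasticity) used in the discrete arguments, and can be approached as a limit of its discretizations $e^{sA/n}$.

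Next I would construct two modified processes $W^{x,\eta}$ and $Y^{x,\eta}$ of total durations $x+\eta$ and $x-\eta$ respectively, paralleling the $W,Y$ constructions in \autoref{sec:logconvex}. Both are obtained by sampling a uniform random time $J\in[0,x]$ and, at time $J$, inserting (for $W$) or skipping (for $Y$) a time interval of length $\eta$ through a carefully chosen auxiliary kernel $\pi_j^y$ on $\Omega$. Following the continuous analogues of \autoref{lem:zdiv}, \autoref{lem:negterms}, and \autoref{lem:wy}, applying \autoref{lem:klcond} to bound $\kldiv{W^{x,\eta}}{F^{x+\eta}}\ge -\log f(x+\eta)$ and $\kldiv{Y^{x,\eta}}{F^{x-\eta}}\ge -\log f(x-\eta)$, and using the chain rule to extract $\muti{J}{W^{x,\eta}}$, one should obtain
\[
\log f(x+\eta)+\log f(x-\eta)-2\log f(x)\;\ge\; \muti{J}{W^{x,\eta}}-\E_{j,y}\sparen{\kldiv{\pi_j^y}{\mu_j^y}+\kldiv{\pi_j^y}{\nu_j^y}}-O(\eta/x),
\]
where $\mu_j^y, \nu_j^y$ denote the $\eta$-step backward and forward transition kernels of $X^x$ at $(j,y)$. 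Optimizing $\pi_j^y$ via the Lagrangian appearing on the right hand side---mirroring the analysis around \autoref{eq:div-min}--\autoref{eq:setsh}---should yield the desired $(2\eta^2/x^2)\log f(x)$ lower bound to leading order in $\eta$.

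The main obstacle is the last step: in the discrete setting of \autoref{sec:logconvex}, the combinatorial sum over $\Theta(t)$ insertion points produced a $\log t$ contribution through $\muti{J}{W}$; in continuous time the analogue involves the differential entropy of $J$ on $[0,x]$ together with a second-order Taylor expansion of $\muti{J}{W^{x,\eta}}$ in $\eta$, and the leading coefficient must come out to precisely $2\log f(x)/x^2$. Carrying out this expansion accurately, obtaining the continuous analogue of the Rényi-entropy boundary terms from \autoref{lem:negterms}, and handling the technical subtlety that $W^{x,\eta}$ and $Y^{x,\eta}$ are non-Markovian modifications of a continuous-time chain, will likely require careful stochastic-analytic bookkeeping and is the hardest part of the proof.
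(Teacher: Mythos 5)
This statement is Conjecture~\ref{conj:continuouslogconv}, which the paper explicitly leaves open; the paper contains no proof of it (it only records the implication to Conjecture~\ref{conj:ghd}). So there is nothing in the paper to compare your attempt against, and any evaluation has to be on the proposal's own terms.

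As a roadmap your sketch is sensible, but it is not a proof, and you acknowledge as much. The most serious unresolved issue is that you have not established the leading-order constant. The entire content of the nearly-log-convexity claim is the precise coefficient $2/x^2$ multiplying $\log f(x)$: with any larger coefficient the statement is false (the continuous-time version of \autoref{fig:ex1} is tight), and with a smaller one it loses its force. Your plan defers this to a second-order Taylor expansion of $\muti{J}{W^{x,\eta}}$ and an optimization of the splice kernel $\pi_j^y$, but nothing in the sketch verifies that those two contributions combine to give exactly $-\tfrac{2\eta^2}{x^2}\kldiv{X}{F^x}+o(\eta^2)$ rather than a worse constant. In the discrete argument of \autoref{sec:logconvex} the analogous bookkeeping already requires the whole of \autoref{eq:div-min} through \autoref{eq:binomub}, and even there the paper only obtains the theorem with a multiplicative slack $\delta(\eps)$, not a sharp constant.

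A second, structural gap: the discrete proof is a genuine dichotomy. One first tests whether $\muti{J}{Z}\ge(1-\eps)\log t$; the constructions $W,Y$ and the choice of the truncation sets $\Psi^x_k$ (and the set $T$ of ``good'' time steps extracted via Markov's inequality from \autoref{eq:halfp}) are only brought in on the complementary event. Your continuous sketch runs the $W,Y$ branch unconditionally. It is not evident what the $\eta\to 0$ limit of the first branch is, nor why the truncation and Markov-inequality filtering that the discrete proof needs to control $\kldiv{\pi_k^x}{\mu_k^x}$ can be dispensed with in the continuum. Until you explain how the two cases of \autoref{thm:main} collapse into a single estimate as $\eta\to 0$, the sketch is missing an idea, not just details. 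In short: the direction is reasonable and consistent with what the paper suggests in \autoref{sec:discussion}, but the proposal does not yet contain the key quantitative step, and the paper itself presents this as an open conjecture precisely because that step is nontrivial.
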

By an argument similar to the proof of \autoref{thm:klogk}, one can 
show the following.
\begin{theorem}
\autoref{conj:continuouslogconv} implies \autoref{conj:ghd}.
\end{theorem}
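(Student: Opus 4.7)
The plan is to carry out the corruption bound of the proof of \autoref{thm:klogk} in continuous time, with the Hamming-cube heat kernel $e^{t(W-I)}$ (where $W$ is the normalized adjacency matrix of the cube) playing the role of the discrete walk $W^k$. Given a $\delta$-error protocol for $\Ghd^n_k$ with acceptance matrix $A\colon\cube\times\cube\to[0,1]$, for each $t>0$ define a joint distribution $\mu_t$ on $\cube\times\cube$ by sampling $x$ uniformly and producing $y$ via the continuous-time random walk on the hypercube run for time $t$ from $x$. Under $\mu_t$ the Hamming distance $\lone{x-y}$ is $\mathrm{Binomial}(n,(1-e^{-2t/n})/2)$ and hence concentrated around its mean with standard deviation $\Theta(\sqrt k)$ in the relevant regime.

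Choose three times $t_-<t_0<t_+$, equally spaced by $\Delta=\Theta(\sqrt k)$, so that $\mu_{t_-}$ is supported up to mass $O(\delta)$ on the YES region $\lone{x-y}\le k$ and $\mu_{t_0},\mu_{t_+}$ are supported up to mass $O(\delta)$ on the NO region $\lone{x-y}\ge k+\sqrt k$; set the test hyperplane $H=\mu_{t_-}-(\mu_{t_0}+\mu_{t_+})/(6\delta)$, for which $\iprod{A}{H}\ge 1/3$ by a routine calculation. For a rank-one Boolean matrix $R=uv^\transpose$ with unit normalizations $\hat u=u/\lnorm{u}{2}$ and $\hat v=v/\lnorm{v}{2}$, the function $m(t)=\iprod{\hat v}{e^{t(W-I)}\hat u}$ satisfies the hypothesis of \autoref{conj:continuouslogconv}, hence $t^2(\log m)''(t)\ge 2\log m(t)$, while $\iprod{R}{\mu_t}= m(t)\lnorm{u}{2}\lnorm{v}{2}/2^n\le m(t)$.

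The crux is to extract from the differential inequality $t^2(\log m)''\ge 2\log m$ a continuous analogue of \autoref{thm:main}: a case split asserting that either $m(t_-)m(t_+)\gtrsim m(t_0)^2$ (the log-convex branch, treated by AM--GM on $\iprod{R}{\mu_{t_0}},\iprod{R}{\mu_{t_+}}$ to force $\iprod{R}{H}\le 0$), or else $m(t_+)/m(t_-)$ beats the continuous-time Blakley--Dixon baseline $m(t_-)^{(t_+-t_-)/t_-}$ by a definite multiplicative factor, so that propagating the resulting ratio drives $\iprod{R}{\mu_{t_-}}$ below $\delta^{\Omega(k)}$. The monotone quantity $(t^2\phi'-2t\phi)'\ge 0$ obtained by a single integration of the conjecture supplies the first branch, while the second branch uses the Blakley--Dixon estimate $\phi(t)/t$ non-decreasing (itself recoverable as an $\epsilon\to 0$ limit of \autoref{thm:blakley-dixon} applied to the discrete chain $I+\epsilon(W-I)$). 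The hard part is calibrating $\Delta$, the near-log-convex correction term $2\log m/t^2$, and the threshold separating the two branches so that the case split delivers the full $\iprod{R}{H}\le\exp(-\Omega(k\log(1/\delta)))$ rather than the weaker $\exp(-\Omega(\sqrt k\log(1/\delta)))$ that comes from Blakley--Dixon in isolation; this is precisely the step where the three-distribution structure borrowed from \autoref{thm:klogk} matters, as the log-convex branch is what activates in the regime where Blakley--Dixon alone would lose the extra $\sqrt k$ factor.
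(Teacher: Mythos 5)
The paper never writes this proof out---it only says ``by an argument similar to the proof of \autoref{thm:klogk}''---so there is no official text to check you against; I will instead test whether your plan could close. As stated it cannot, because your hyperplane is geometrically incompatible with the direction in which (near-)log-convexity pushes. In the proof of \autoref{thm:klogk} the test is $H=\mu_k-(\mu_{k-2}+\mu_{k+2})/(6\delta)$: the positively-weighted distribution is in the \emph{middle}, and AM--GM applied to the two outer terms is fed by the inequality $m_{k-2}m_{k+2}\ge\alpha_2 m_k^2$, which bounds the middle by the outer pair. You instead put the positive weight on the left endpoint, $H=\mu_{t_-}-(\mu_{t_0}+\mu_{t_+})/(6\delta)$, and propose AM--GM on $\iprod{R}{\mu_{t_0}},\iprod{R}{\mu_{t_+}}$; for that to force $\iprod{R}{H}\le0$ you would need $m(t_0)m(t_+)\ge 9\delta^2 m(t_-)^2$, which bounds the \emph{left endpoint} by the middle and right. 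The conjectured near-log-convexity gives the opposite, $m(t_-)m(t_+)\ge c\,m(t_0)^2$, and chaining it with a Blakley--Dixon ratio estimate yields only $m(t_0)m(t_+)/m(t_-)^2\ge c\,(m(t_0)/m(t_-))^3\ge c\,m(t_-)^{3\Delta/t_-}=c\,m(t_-)^{\Theta(1/\sqrt k)}$, which lands you back at the $\Theta(\sqrt k\log(1/\delta))$ threshold that Blakley--Dixon alone already gives. Your ``log-convex branch'' therefore does not close the argument.

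To parallel \autoref{thm:klogk} faithfully, you first need to reduce $\Ghd^n_k$ to a three-valued promise problem whose YES value sits in the middle---exactly as the paper reduces $\Ham^n_k$ to the promise $\lone{x-y}\in\set{k-2,k,k+2}$ via the $(00a,00b)$, $(00a,11b)$ padding trick, except here the padding width must be $\Theta(\sqrt k)$. Only then is $H=\mu_{t_0}-(\mu_{t_-}+\mu_{t_+})/(6\delta)$ the right test, and integrating $\phi''\ge 2\phi/t^2$ twice over a window of half-width $\Delta=\Theta(\sqrt k)$ around $t_0\approx k$ costs $\Theta(\Delta^2/t_0^2)\cdot\phi(t_0)=\Theta(\phi(t_0)/k)$, which is precisely the factor of $k$ you need in the exponent, without any extra ``strong Blakley--Dixon'' branch. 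Note finally that you silently assume the three distributions are $O(\delta)$-separated, but the gap $\sqrt k$ in $\Ghd$ is the same order as the standard deviation of $\lone{x-y}$ under $\mu_t$, so they are not; handling the resulting overlap is an additional issue your plan does not touch and is part of why corruption arguments for $\Ghd$ are genuinely harder than for $\Ham$.
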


\section*{Acknowledgments}
We are indebted to Paul Beame, 
Shayan Oveis Gharan and Gábor Tardos for many valuable
conversations and the anonymous reviewers for bringing 
\autoref{conj:erdos-simonovits} to our attention and for
many improvements to the presentation of the paper. 
Many thanks to Suvrit Sra for bringing \cite{Pate2012} 
to our attention \cite{SraS2016} and its connections to 
Sidorenko's conjecture.
\printbibliography

\end{document}